\DeclareMathAlphabet{\mathbbold}{U}{bbold}{m}{n}
\pgfplotsset{compat=1.13}
\renewcommand{\backref}[1]{}
\renewcommand{\backrefalt}[4]{%
\ifcase #1 %
\or
[p.\ #2]%
\else
[pp.\ #2]%
\fi}
\providecommand{\U}[1]{\protect\rule{.1in}{.1in}}
\newtheorem{theorem}{Theorem}
\newtheorem{conjecture}[theorem]{Conjecture}
\newtheorem{definition}[theorem]{Definition}
\newtheorem{fact}[theorem]{Fact}
\newtheorem{lemma}[theorem]{Lemma}
\newtheorem{proposition}[theorem]{Proposition}
\newcommand{\identity}{I}
\newcommand{\E}{\mathop{\mathbb{E}}}
\newcommand{\bra}[1]{\langle #1 |}
\newcommand{\ket}[1]{| #1 \rangle}
\newcommand{\braket}[1]{\langle #1 \rangle}
\newcommand{\eps}{\varepsilon}
\newcommand{\Reals}{\mathbb{R}}
\newcommand{\Naturals}{\mathbb{N}}
\newcommand{\mathify}[1]{\ifmmode{#1}\else\mbox{$#1$}\fi}
\newcommand{\abs}[1]{\mathify{\left| #1 \right|}}
\newcommand{\Unitaries}{\mathbb{U}}
\newcommand{\States}{\mathbb{S}}
\newcommand{\Oracle}{\mathcal{O}}
\newcommand{\UOracle}{\mathcal{U}}
\newcommand{\COracle}{\mathcal{C}}
\newcommand{\poly}{\mathrm{poly}}
\newcommand{\adv}{\mathrm{adv}}
\newcommand{\negl}{\mathrm{negl}}
\newcommand{\diamondnorm}[1]{\left|\left| #1 \right|\right|_\diamond}
\newcommand{\tr}{\mathrm{Tr}}
\newcommand{\TD}{\mathrm{TD}}
\newcommand{\Alg}{\mathcal{A}}
\newcommand{\AlgB}{\mathcal{B}}
\newcommand{\Ver}{\mathcal{V}}
\newcommand{\VerW}{\mathcal{W}}
\newcommand{\Lang}{\mathcal{L}}
\newcommand{\Key}{\{0,1\}^\kappa}
\renewcommand{\Re}{\mathrm{Re}}
\DeclareMathOperator*{\argmax}{arg\,max}
\newcommand{\wt}{\operatorname{wt}} 
\newcommand{\Dom}{\mathrm{Dom}}
\begin{document}

\title{Quantum Pseudorandomness and Classical Complexity}

\author{William Kretschmer\thanks{University of Texas at Austin. \ Email:
\texttt{kretsch@cs.utexas.edu}. Supported by an NDSEG Fellowship.}}

\date{}
\maketitle

\begin{abstract}
We construct a quantum oracle relative to which $\mathsf{BQP} = \mathsf{QMA}$ but cryptographic pseudorandom quantum states and pseudorandom unitary transformations exist, a counterintuitive result in light of the fact that pseudorandom states can be ``broken'' by quantum Merlin-Arthur adversaries. We explain how this nuance arises as the result of a distinction between algorithms that operate on quantum and classical inputs. On the other hand, we show that \textit{some} computational complexity assumption is needed to construct pseudorandom states, by proving that pseudorandom states do not exist if $\mathsf{BQP} = \mathsf{PP}$. We discuss implications of these results for cryptography, complexity theory, and shadow tomography.
\end{abstract}

\section{Introduction}
Pseudorandomness is a key concept in complexity theory and cryptography, capturing the notion of objects that appear random to computationally-bounded adversaries. Recent works have extended the theory of computational pseudorandomness to quantum objects, with a particular focus on quantum states and unitary transformations that resemble the Haar measure \cite{JLS18-prs,BS19-binary,BFV20-prs-wormhole}.

Ji, Liu, and Song \cite{JLS18-prs} define a \textit{pseudorandom state} (PRS) ensemble as a keyed family of quantum states $\{\ket{\varphi_k}\}_{k \in \Key}$ such that states from the ensemble can be generated in time polynomial in $\kappa$, and such that no polynomial-time quantum adversary can distinguish polynomially many copies of a random $\ket{\varphi_k}$ from polynomially many copies of a Haar-random state. They also define an ensemble of \textit{pseudorandom unitary transformations} (PRUs) analogously as a set of efficiently implementable unitary transformations that are computationally indistinguishable from the Haar measure. These definitions can be viewed as quantum analogues of pseudorandom generators (PRGs) and pseudorandom functions (PRFs), respectively. The authors then present a construction of PRSs assuming the existence of quantum-secure one-way functions, and also give a candidate construction of PRUs that they conjecture is secure.

Several applications of pseudorandom states and unitaries are known. PRSs and PRUs are useful in quantum algorithms: in computational applications that require approximations to the Haar measure, PRSs and PRUs can be much more efficient than $t$-designs, which are information-theoretic approximations to the Haar measure that are analogous to $t$-wise independent functions.\footnote{$t$-designs are also sometimes called ``pseudorandom'' in the literature, e.g.\ \cite{WBV08-pseudorandom,BHH16-designs-short}. We emphasize that $t$-designs and PRSs/PRUs are fundamentally different notions and that they are generally incomparable: a $t$-design need not be a PRS/PRU ensemble, or vice-versa.} Additionally, a variety of cryptographic primitives can be instantiated using PRSs and PRUs, including quantum money schemes, quantum commitments, secure multiparty communication, one-time digital signatures, some forms of symmetric-key encryption, and more \cite{JLS18-prs,AQY22-prs,MY22-prs,BCQ23-efi,MY22-owq,HMY23-qpke}. Finally, Bouland, Fefferman, and Vazirani \cite{BFV20-prs-wormhole} have established a fundamental connection between PRSs and any possible resolution to the so-called ``wormhole growth paradox'' in the AdS/CFT correspondence.

\subsection{Main results}
Given the importance of pseudorandom states and unitaries across quantum complexity theory, cryptography, and physics, in this work we seek to better understand the theoretical basis for the existence of these primitives. We start with a very basic question: what hardness assumptions are necessary for the existence of PRSs,\footnote{Note that PRUs imply PRSs, so we focus only on PRSs for this part.} and which unlikely complexity collapses (such as $\mathsf{P} = \mathsf{PSPACE}$ or $\mathsf{BQP} = \mathsf{QMA}$) would invalidate the security of PRSs? Viewed another way, we ask: what computational power suffices to distinguish PRSs from Haar-random states?

At first glance, it appears that an ``obvious'' upper bound on the power needed to break PRSs is $\mathsf{QMA}$, the quantum analogue of $\mathsf{NP}$ consisting of problems decidable by a polynomial-time quantum Merlin-Arthur protocol (or even $\mathsf{QCMA}$, where the witness is restricted to be classical). If Arthur holds many copies of a pure quantum state $\ket{\psi}$ that can be prepared by some polynomial-size quantum circuit $C$, then Merlin can send Arthur a classical description of $C$, and Arthur can verify via the swap test that the output of $C$ approximates $\ket{\psi}$. By contrast, most Haar-random states cannot even be approximated by small quantum circuits. So, in some sense, PRSs can be ``distinguished'' from Haar-random by quantum Merlin-Arthur adversaries.

There is a subtle problem here, though: $\mathsf{QMA}$ is defined as a set of decision problems where the inputs are \textit{classical} bit strings, whereas an adversary against a PRS ensemble inherently operates on a \textit{quantum} input. As a result, it is unclear whether the hardness of breaking PRSs can be related to the hardness of $\mathsf{QMA}$, or any other standard complexity class. Even if we had a proof that $\mathsf{BQP} = \mathsf{QMA}$, this might not give rise to an efficient algorithm for breaking the security of PRSs.

One way to tackle this is to consider quantum adversaries that can query a classical oracle. If we can show that PRSs can be broken by a polynomial-time quantum algorithm with oracle access to some language $\Lang : \{0,1\}^* \to \{0,1\}$, we conclude that if PRSs exist, then $\Lang \not\in \mathsf{BQP}$. A priori, it is not immediately obvious whether oracle access to \textit{any} language $\Lang$ suffices for a polynomial-time quantum adversary to break PRSs. For our first result, we show that a $\mathsf{PP}$-complete language works. Hence, if $\mathsf{BQP} = \mathsf{PP}$, then PRSs do not exist.

\begin{theorem}[Informal version of \Cref{thm:pp_oracle}]
\label{thm:pp_informal}
There exists a polynomial-time quantum algorithm augmented with a $\mathsf{PP}$ oracle that can distinguish PRSs from Haar-random states.
\end{theorem}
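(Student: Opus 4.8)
The plan is to exhibit one efficiently-describable statistic of the unknown state that is large for pseudorandom states and small for Haar-random states, and then to argue that a polynomial-time quantum algorithm with a $\mathsf{PP}$ oracle can estimate it. The natural statistic is the fidelity of $\ket{\psi}$ with the nearest state preparable by a small circuit: writing $\mathcal{C}_s$ for the quantum circuits on $n$ qubits of size at most $s=\poly(n)$, set $f(\ket{\psi})=\max_{C\in\mathcal{C}_s}|\langle 0^n|C^\dagger|\psi\rangle|^2$. Every PRS state is, by the efficient-generation requirement, exactly of the form $C\ket{0^n}$, so $f=1$. By contrast a counting/net argument shows that the $|\mathcal{C}_s|\le 2^{O(s\log s)}$ circuits $\eps$-cover only an exponentially small fraction of the sphere, so a Haar-random $\ket{\psi}$ has $f(\ket{\psi})\le\beta=o(1)$ except with negligible probability. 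Thresholding $f$ therefore separates the ensembles; the entire difficulty is computational, namely estimating $f$ from copies.

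Information-theoretically, polynomially many copies already suffice. First I would note that the tensor states $\{\ket{\varphi_k}^{\otimes t}\}$ span a subspace $S$ of dimension at most $|\Key|\le 2^{\poly(n)}$, whereas a Haar-random $\ket{\psi}^{\otimes t}$ lives in a symmetric subspace of dimension $\approx 2^{nt}$ and hence has squared projection onto $S$ at most $2^{\poly(n)-nt}=\negl(n)$ once $t$ is a large enough polynomial. The ideal distinguisher is thus the measurement $\{\Pi_S,\,\id-\Pi_S\}$ projecting onto the span of the low-complexity tensor states $\{C^{\otimes t}\ket{0^n}\}_{C\in\mathcal{C}_s}$. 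To make this measurable I would work with the frame operator $W=\sum_{C\in\mathcal{C}_s}(|c\rangle\langle c|)^{\otimes t}$ where $\ket{c}=C\ket{0^n}$, since the quadratic form $\langle\psi^{\otimes t}|W|\psi^{\otimes t}\rangle=\sum_C |\langle c|\psi\rangle|^{2t}$ is at least $1$ in the PRS case (the true generating circuit alone contributes a full unit) and at most $|\mathcal{C}_s|\,\beta^{t}=\negl(n)$ in the Haar case. The crucial point is that its entries $\langle\mathbf{x}|W|\mathbf{y}\rangle=\sum_{C}\prod_{j}\langle x_j|C|0^n\rangle\,\overline{\langle y_j|C|0^n\rangle}$ are exponential sums of products of circuit amplitudes, i.e.\ exactly the $\mathrm{GapP}$/$\#\mathsf{P}$ quantities that a $\mathsf{PP}$ oracle computes to high precision.

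With these entries available I would estimate $\langle\psi^{\otimes t}|W|\psi^{\otimes t}\rangle$ by a Hadamard test on the $t$ copies, using the $\mathsf{PP}$ oracle to supply a block-encoding of $W$ built from its computable matrix elements, and then threshold. This is where the oracle is genuinely needed: estimating the same quantity from copies alone would require either a classical description of $\ket{\psi}$ (infeasible) or postselecting on an event of exponentially small probability, since the naive realisation $\tfrac{1}{|\mathcal{C}_s|}\sum_C|\langle c|\psi\rangle|^{2t}$ dilutes the signal by the number of circuits. A $\mathsf{PP}$-type class rather than $\mathsf{QMA}$ is the right home precisely because of the subtlety flagged in the introduction: a $\mathsf{QMA}$ verifier would need the quantum state as a classical input, whereas recasting the task as an exponential sum over classical circuit descriptions lets the quantum copies be consumed by the $\mathsf{BQP}$ part while the oracle performs the search/sum over circuits.

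The main obstacle I expect is implementing the block-encoding of $W$ (equivalently the reflection $\id-2\Pi_S$) with only polynomial subnormalisation. The danger is twofold: distinct descriptions can prepare the same state, so $W$ can have exponentially large norm unless one sums over a canonical, multiplicity-free presentation of circuits; and even granting $\|W\|=O(1)$ from the near-orthogonality of distinct circuit states at large $t$, a generic entrywise block-encoding pays a normalisation scaling with the dimension, which would push the Hadamard-test signal below what polynomially many copies can resolve. Overcoming this requires exploiting the structure $W=BB^\dagger$ with $B:\ket{C}\mapsto\ket{c}^{\otimes t}$ coherently implementable, and using the oracle to invert the (well-conditioned, for large $t$) Gram matrix so as to apply $\Pi_S$ without diluting by $|\mathcal{C}_s|$; certifying this conditioning, and hence an inverse-polynomial signal, is the technical heart of the argument. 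Once the projection is applied with polynomial overhead, median-of-means estimation together with a union bound over the negligible Haar failure probability completes the proof.
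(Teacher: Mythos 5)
Your information-theoretic distinguisher is sound: the projection onto the span of $\{C^{\otimes t}\ket{0^n}\}$ over small circuits $C$ does separate the two ensembles, by essentially the same counting/union-bound argument the paper uses (its \Cref{lem:haar_overlap_small}). The gap is entirely in the computational step, and it is the step you yourself flag as ``the technical heart'': you never actually show how a $\mathsf{PP}$ oracle lets a polynomial-time quantum algorithm apply $\Pi_S$ (or a block-encoding of $W$) to the quantum input without the $1/|\mathcal{C}_s|$ dilution. A $\mathsf{PP}$ oracle is a \emph{classical} language: it can be queried (even in superposition) on classical strings, and it can supply individual matrix entries of $W$ to high precision, but knowing entries of an exponentially large dense matrix does not yield a sub-normalized block-encoding of it. Your proposed rescue---inverting the Gram matrix of $\{\ket{c}^{\otimes t}\}_{C\in\mathcal{C}_s}$---fails concretely: that Gram matrix is exponentially large and badly ill-conditioned, because exponentially many distinct circuit descriptions prepare identical or nearly identical states, and there is no efficiently computable ``canonical, multiplicity-free presentation'' of circuits up to the state they prepare (deciding whether two circuits output close states is itself a hard problem). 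So the near-orthogonality premise on which your conditioning claim rests is false, and without it the Hadamard-test signal is exponentially small, which polynomially many copies cannot resolve. In short, the proposal reduces the theorem to an unproved (and, as stated, false) claim about implementing an exponential-dimensional projection with polynomial overhead.

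The paper's proof sidesteps this entirely by making sure the $\mathsf{PP}$ oracle only ever touches \emph{classical} data. It first collapses the quantum input to classical information via $T=O(\kappa)$ single-copy random Clifford measurements, invoking Huang--Kueng--Preskill (\Cref{thm:HKP}) to guarantee that this classical shadow still information-theoretically distinguishes the ensembles. It then argues via optimality of the Bayes decision rule that the maximum a posteriori guess for ``pseudorandom vs.\ Haar'' given the measurement record succeeds with at least the same probability, and observes that this posterior is exactly the postselected output distribution of a polynomial-time quantum circuit on a classical input (preparing the mixture of the PRS ensemble and an approximate state $T$-design in place of the Haar measure, measuring in the recorded bases, and postselecting on the recorded outcomes). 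The exponentially small postselection probability---the very thing that blocks your approach when the input is quantum---is absorbed for free by $\mathsf{PostBQP}=\mathsf{PP}$ precisely because the input to that computation is a classical string. If you want to salvage your route, you would need to either convert your statistic into a function of classical measurement data, or find a genuinely new mechanism for applying $\Pi_S$; as written, the argument does not go through.
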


This raises the natural question of whether the $\mathsf{PP}$ oracle in the above theorem can be made weaker. For instance, can we break PRSs with a $\mathsf{QCMA}$ or $\mathsf{QMA}$ oracle, coinciding with our intuition that the task is solvable by a quantum Merlin-Arthur protocol? In our second result, we show that this intuition is perhaps misguided, as we construct a quantum oracle relative to which such a $\mathsf{QMA}$ reduction is impossible.

\begin{theorem}[Informal version of \Cref{thm:bqp^u=qma^u,thm:pru^u}]
\label{thm:qma_informal}
There exists a quantum oracle $\Oracle$ such that:
\begin{enumerate}[(1)]
\item $\mathsf{BQP}^\Oracle = \mathsf{QMA}^\Oracle$, and
\item PRUs (and hence PRSs) exist relative to $\Oracle$.
\end{enumerate}
\end{theorem}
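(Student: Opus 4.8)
The plan is to take $\UOracle = (\mathcal{O}, \mathcal{Q})$ to consist of two pieces that play orthogonal roles. The first piece $\mathcal{O}$ is the candidate PRU: for each relevant length it is a block-diagonal Haar-random unitary $\mathcal{O} = \sum_{k \in \{0,1\}^\kappa} \ket{k}\bra{k} \otimes U_k$, where the $U_k$ are independent Haar-random unitaries on $n$ qubits and $\kappa, n$ scale polynomially with the security parameter. The keyed family $\{V_k\}$ is defined by $V_k := U_k$, implementable in polynomial time given $k$ by writing $\ket{k}$ into the control register, applying $\mathcal{O}$, and uncomputing the control (which is preserved because $\mathcal{O}$ is block-diagonal). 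The second piece $\mathcal{Q}$ is a classical language that is complete for $\mathsf{PSPACE}$ relative to the matrix entries $\widehat{\mathcal{O}}$ of $\mathcal{O}$ (given to polynomial precision), handed to the algorithm as a standard reversible decision oracle. The whole of $\UOracle$ is thus a quantum oracle.

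For part (1) I would prove the chain $\mathsf{BQP}^{\UOracle} \subseteq \mathsf{QMA}^{\UOracle} \subseteq \mathsf{PSPACE}^{\widehat{\mathcal{O}},\mathcal{Q}} = \mathsf{PSPACE}^{\widehat{\mathcal{O}}} \subseteq \mathsf{BQP}^{\UOracle}$. The first inclusion is trivial. For the middle inclusion I would use the fact that $\mathsf{QMA} \subseteq \mathsf{PSPACE}$ relativizes to unitary oracles: a $\mathsf{PSPACE}$ machine can simulate a $\mathsf{QMA}^{\UOracle}$ verifier by tracking its state vector, answering each $\mathcal{O}$-query from $\widehat{\mathcal{O}}$ and each $\mathcal{Q}$-query directly, and computing the optimal acceptance probability over witnesses as a largest-eigenvalue computation following Marriott--Watrous. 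The apparent circularity — that $\mathcal{Q}$ is defined from an oracle the verifier may itself query — dissolves upon taking $\mathcal{Q}$ to be $\mathsf{PSPACE}^{\widehat{\mathcal{O}}}$-complete: since $\mathcal{Q} \in \mathsf{PSPACE}^{\widehat{\mathcal{O}}}$ we have $\mathsf{PSPACE}^{\widehat{\mathcal{O}},\mathcal{Q}} = \mathsf{PSPACE}^{\widehat{\mathcal{O}}}$, so nothing is gained by the self-reference. The last inclusion holds because a $\mathsf{BQP}^{\UOracle}$ machine can decide any $\mathsf{PSPACE}^{\widehat{\mathcal{O}}}$ language by a polynomial-time reduction to the oracle $\mathcal{Q}$.

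For part (2) the key observation is that $\mathcal{O}$ and $\mathcal{Q}$ are \emph{identical} in the two worlds of the PRU distinguishing game — the world where the challenge oracle is $V_{k^*} = U_{k^*}$ for a uniformly random secret $k^*$, and the world where it is a fresh Haar-random $W$ external to $\mathcal{O}$. Consequently all of the adversary's $\mathcal{O}$- and $\mathcal{Q}$-queries produce identical distributions in the two worlds, and its entire distinguishing advantage must come from its polynomially many queries to the challenge oracle. I would therefore reduce security to a pure query-complexity statement: even granting the adversary full, free knowledge of $\{U_k\}$ (which only helps), distinguishing ``challenge equals a uniformly random one of the $2^\kappa$ known unitaries $U_k$'' from ``challenge is fresh Haar'' requires $\Omega(2^{\kappa/2})$ challenge queries. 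I would prove this by a hybrid / adversary-method argument that embeds unstructured search over the $2^\kappa$ keys: verifying a guessed $k$ costs one challenge query, so a successful distinguisher yields a search-decider, and the standard search lower bound forces super-polynomially many queries. A union bound over adversaries together with Haar concentration then lets me fix a single good $\mathcal{O}$ for which the PRU — and hence the induced PRS — is secure against all polynomial-time adversaries.

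The main obstacle is exactly this query lower bound, and with it the formalization of the ``quantum versus classical input'' phenomenon that makes the theorem counterintuitive. The subtle point to get right is why the $\mathsf{PSPACE}$-power of $\mathcal{Q}$ — which by \Cref{thm:pp_informal} would suffice to break an \emph{unrelativized} PRS through a counting attack on the generating circuit — is useless here: $\mathcal{Q}$ is a decision oracle on \emph{classical} inputs, defined solely from $\mathcal{O}$, and the challenge is available only as a black-box \emph{quantum} oracle that $\mathcal{Q}$ can never be fed. One must ensure the reduction is robust to adaptive interleaving of $\mathcal{O}$-, $\mathcal{Q}$-, and challenge-queries and to superposition queries, and must choose $\kappa$ (say $\kappa = \lambda$) large enough that $2^{\kappa/2}$ is super-polynomial while the construction stays efficient; establishing the search lower bound in this oracle-rich setting, rather than the collapse in part (1), is where the real work lies.
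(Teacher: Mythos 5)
There is a genuine gap, and it is located exactly where you yourself flagged ``the real work'': the security proof of part (2) is broken by your choice of the classical oracle. You take $\mathcal{Q}$ to be $\mathsf{PSPACE}$-complete \emph{relative to the matrix entries $\widehat{\mathcal{O}}$ of the Haar-random unitaries}. This does make part (1) nearly trivial via $\mathsf{QMA}^{\UOracle} \subseteq \mathsf{PSPACE}^{\widehat{\mathcal{O}}} \subseteq \mathsf{BQP}^{\mathcal{Q}}$, but it poisons part (2). The specific false step is the claim that granting the adversary ``full, free knowledge of $\{U_k\}$ only helps'' and that the resulting distinguishing task still requires $\Omega(2^{\kappa/2})$ challenge queries. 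It does not: an adversary who knows the $U_k$ explicitly can, with just \emph{two} challenge queries, prepare $(O\ket{0^n})^{\otimes 2}$ and measure the projector $\Pi$ onto $\mathrm{span}\{(U_k\ket{0^n})^{\otimes 2} : k \in \Key\}$. This subspace has dimension at most $|\Key| = 2^{\kappa}$ inside the symmetric subspace of dimension $\binom{2^n+1}{2}$, so the outcome is $\Pi$ with probability $1$ in the pseudorandom case and with expected probability at most $2^{\kappa}/\binom{2^n+1}{2} = O(2^{\kappa-2n})$ in the Haar case. So the information-theoretic problem you reduce to is actually \emph{easy}, and no reduction to unstructured search can establish its hardness. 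The hardness in the real construction comes precisely from the adversary having only \emph{query} access to the $U_k$'s; your $\mathcal{Q}$ gives it much more. Even retreating from the ``free knowledge'' relaxation does not save the argument: a single $\mathcal{Q}$-query is a global function of all of $\widehat{\mathcal{O}}$, so simulating it inside the BBBV-style reduction would require reading the entire search string, and the query-counting collapses. Worse, the construction itself is in jeopardy, since $\mathcal{Q}$ plausibly lets a polynomial-time adversary mount a $\mathsf{PostBQP}^{\mathcal{O}}$-style counting attack on the induced PRS $\{U_k\ket{0^n}\}$ in the spirit of \Cref{thm:pp_oracle}.

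The paper avoids this by taking $\mathcal{P}$ to be a $\mathsf{PSPACE}$-complete language \emph{independent} of $\UOracle$, which keeps part (2) a clean query-complexity argument (your BBBV reduction and Haar-concentration/union-bound/Borel--Cantelli outline for that setting matches \Cref{lem:small_average_advantage,lem:small_advantage_whp,thm:pru^u}). The price is that part (1) becomes the technically hard part, since a plain $\mathsf{PSPACE}$ machine cannot see $\UOracle$ at all: the paper's $\mathsf{BQP}$ machine first performs process tomography to learn $\UOracle_n$ for all $n = O(\log|x|)$, and for larger $n$ invokes concentration of the Haar measure (\Cref{thm:haar_concentration}) applied to the Lipschitz function $\max_{\ket{\psi}}\Pr[\mathcal{V}^{\UOracle}(x,\ket{\psi})=1]$ (\Cref{lem:qma_query_lipschitz}) to justify replacing $\UOracle_n$ by elements of an efficient approximate unitary design selected via $2T$-wise independent functions; only then does the resulting $\UOracle$-free verifier define a $\mathsf{QMA}^{\mathcal{P}} \subseteq \mathsf{PSPACE}$ problem answerable with one query to $\mathcal{P}$. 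In short: you made part (1) easy at the cost of making part (2) false as argued (and possibly false outright); the paper's decomposition puts all the work into part (1) precisely so that part (2) survives.
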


In fact, our oracle $\Oracle$ also satisfies $\mathsf{PromiseBQP}^\Oracle = \mathsf{PromiseQMA}^\Oracle$, which is stronger. For the sake of clarity, in this introduction we will only state our results in terms of classes of languages (e.g.\ $\mathsf{QMA}$) instead of classes of promise problems (e.g.\ $\mathsf{PromiseQMA}$), unless the distinction matters.

Let us remark how bizarre this theorem appears from a cryptographer's point of view. If $\mathsf{BQP} = \mathsf{QMA}$, then \textit{no} computationally-secure classical cryptographic primitives exist, because such primitives can be broken in $\mathsf{NP}$, which is contained in $\mathsf{QMA}$. So, our construction is a black-box separation between PRUs and \textit{all} nontrivial quantum-secure classical cryptography---a relativized world in which any computationally-secure cryptography must use quantum communication. 
\Cref{thm:qma_informal} thus provides a negative answer (in the quantum black box setting) to a question of Ji, Liu, and Song \cite{JLS18-prs} that asks if quantum-secure one-way functions are necessary for pseudorandom states. 

\Cref{thm:qma_informal} illustrates a stark contrast between quantum and classical cryptography, because the existence of hard problems in $\mathsf{NP}$ is necessary to have classical cryptography that is secure against polynomial-time adversaries. One can view our result as evidence that the same is not necessary for the existence of quantum cryptography; perhaps weaker assumptions suffice. Indeed, because a major goal in cryptography is to build cryptosystems from minimal computational assumptions, \Cref{thm:qma_informal} has served as the primary motivation for many recent works that have built cryptography from pseudorandom states and unitaries \cite{AQY22-prs,MY22-prs,BCQ23-efi,MY22-owq,HMY23-qpke}. Note that these works all appeared after this work was originally published \cite{conf-version}.

\subsection{Application: hyperefficient shadow tomography}
An immediate corollary of our results is a new impossibility result for shadow tomography. Aaronson \cite{Aar18-shadow-tomography} defined the shadow tomography problem as the following estimation task: given copies of an $n$-qubit mixed state $\rho$ and a list of two-outcome measurements $O_1,\ldots,O_M$, estimate $\tr(O_i \rho)$ for each $i$ up to additive error $\eps$. Aaronson showed that, remarkably, this is possible using very few copies of $\rho$: just $\poly(n, \log M, \frac{1}{\eps})$ copies suffice, which is polylogarithmic in both the dimension of $\rho$ and the number of quantities to be estimated.

Aaronson then asked in what cases shadow tomography can be made \textit{computationally} efficient with respect to $n$ and $\log M$. Of course, just writing down the input to the problem would take $\Omega(4^n M)$ time if the measurements are given explicitly as Hermitian matrices, and listing the outputs would also take $\Omega(M)$ time. But perhaps one could hope for an algorithm that only operates \textit{implicitly} on both the inputs and outputs. For example, suppose we stipulate the existence of a quantum algorithm that performs the measurement $O_i$ given input $i \in [M]$, and that this algorithm runs in time $\poly(n,\log M)$. Consider a shadow tomography procedure that takes a description of such an algorithm as input, and that outputs a quantum circuit $C$ such that $|C(i) - \tr(O_i \rho)| \le \eps$ for each $i \in [M]$.\footnote{Note the slight abuse of notation here, as the shadow tomography procedure can err with some small probability, and $C$ itself might be a probabilistic quantum circuit. For simplicity, we assume that the shadow tomography procedure always succeeds and that $C$ is deterministic in this exposition.} Aaronson calls this a ``hyperefficient'' shadow tomography protocol if it additionally runs in time $\poly(n, \log M, \frac{1}{\eps})$.

Aaronson gave some evidence that hyperefficient shadow tomography is unlikely to exist, by observing that if hyperefficient shadow tomography is possible, then quantum advice can always be efficiently replaced by classical advice---in other words, $\mathsf{BQP/qpoly} = \mathsf{BQP/poly}$. However, Aaronson and Kuperberg \cite{AK07-qcma-qma} showed a quantum oracle $\UOracle$ relative to which $\mathsf{BQP^\UOracle/qpoly} \neq \mathsf{BQP^\UOracle/poly}$, which implies that hyperefficient shadow tomography is impossible if the observables are merely given as a black box that implements the measurement. The proof of this oracle separation amounts to showing that if the oracle $\UOracle$ either (1) implements a reflection about a Haar-random $n$-qubit state, or (2) acts as the identity, then no $\poly(n)$-query algorithm can distinguish these two cases, even given a classical witness of size $\poly(n)$.

One can consider stronger forms of query access to the observables. For instance, in the common scenario where each observable measures fidelity with a pure state, meaning it has the form $O_i = \ket{\psi_i}\bra{\psi_i}$, then in addition to the ability to measure overlap with $\ket{\psi_i}$, one might also have the power to produce copies of $\ket{\psi_i}$. Note that the ability to prepare $\ket{\psi_i}$ is generally much more powerful than the ability to recognize $\ket{\psi_i}$, the latter of which is equivalent to oracle access to the reflection $\identity - 2\ket{\psi_i}\bra{\psi_i}$. For example, Aaronson and Kuperberg's oracle separation of $\mathsf{QCMA}$ and $\mathsf{QMA}$ \cite{AK07-qcma-qma} amounts to building an oracle relative to which certain quantum states can be recognized efficiently but cannot be approximately prepared by small quantum circuits.
Other black-box separations of state preparation and state reflection are known, e.g.\ \cite{BR20-apxcount}, so one might hope that this type of query access could be substantially more powerful for shadow tomography as well.

Nevertheless, our results imply that black-box hyperefficient shadow tomography is impossible even in this setting where we have state preparation access to the observables. This follows from the simple observation that hyperefficient shadow tomography of this form would suffice to break PRS ensembles with a ($\mathsf{Promise}$)$\mathsf{QCMA}$ oracle.

\begin{theorem}
If a hyperefficient shadow tomography procedure exists that works for any list of observables of the form $\ket{\psi_1}\bra{\psi_1}, \ldots,\allowbreak \ket{\psi_M}\bra{\psi_M}$ given state preparation access to $\ket{\psi_1},\ldots,\allowbreak\ket{\psi_M}$, then all PRS ensembles can be broken by polynomial-time quantum adversaries with oracle access to $\mathsf{PromiseQCMA}$.
\end{theorem}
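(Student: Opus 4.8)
The plan is to feed the unknown state into the hypothesized shadow tomography procedure to compress all of its fidelities with PRS states into a single poly-size circuit, and then to use the $\mathsf{QCMA}$ oracle to search for a key witnessing high fidelity. Fix an arbitrary PRS ensemble $\{\ket{\varphi_k}\}_{k\in\Key}$ of $n$-qubit states with key length $\ell = \poly(n)$, so that $M := |\Key| = 2^\ell$ and $\log M = \poly(n)$. The adversary receives $\poly(n)$ copies of an unknown pure state $\rho = \ket{\psi}\bra{\psi}$ and must decide whether $\ket{\psi}$ is a uniformly random $\ket{\varphi_{k^*}}$ or Haar-random. I would instantiate shadow tomography by taking the observables to be $O_k = \ket{\varphi_k}\bra{\varphi_k}$ for $k \in \Key$; because the ensemble is efficiently generatable, its generation algorithm provides exactly the state-preparation access to each $\ket{\varphi_k}$ (indexed by $k$) that the hypothesized procedure demands. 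Running the procedure on the copies of $\rho$ with error $\eps = 1/8$ then produces, in $\poly(n, \log M, 1/\eps) = \poly(n)$ time, a circuit $C$ satisfying $\bigl|C(k) - |\langle\varphi_k|\psi\rangle|^2\bigr| \le \eps$ for every $k \in \Key$ simultaneously.

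The key observation is that distinguishing now reduces to deciding whether some key $k$ satisfies $C(k) \ge 1/2$, and that this is precisely a $\mathsf{QCMA}$ problem: the witness is the classical key $k$, and the verifier runs the poly-size circuit $C$ on input $k$ and checks whether its output exceeds the threshold. I would therefore define the $\mathsf{QCMA}$ language consisting of (descriptions of) circuits that accept some input, submit $C$ to the oracle, and report ``pseudorandom'' precisely when the oracle accepts.

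For correctness I would treat the two cases separately. If $\ket{\psi} = \ket{\varphi_{k^*}}$, then $|\langle\varphi_{k^*}|\psi\rangle|^2 = 1$, so $C(k^*) \ge 1 - \eps = 7/8 > 1/2$ and the oracle accepts. If $\ket{\psi}$ is Haar-random, then for each fixed $k$ the squared overlap $|\langle\varphi_k|\psi\rangle|^2$ is distributed as $\mathrm{Beta}(1, 2^n - 1)$, so $\Pr[|\langle\varphi_k|\psi\rangle|^2 \ge 1/4] = (3/4)^{2^n - 1}$; a union bound over all $2^\ell$ keys keeps this doubly-exponentially small, so with overwhelming probability $C(k) \le 1/4 + \eps = 3/8 < 1/2$ for every $k$ and the oracle rejects. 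This yields a distinguisher with advantage $1 - o(1)$.

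The main technical point to nail down is the promise gap once $C$ is allowed to be a genuinely probabilistic quantum circuit rather than the idealized deterministic version: I would amplify $C$ so that its estimates concentrate, ensuring that the $7/8$-versus-$3/8$ separation translates into a valid completeness/soundness gap for the $\mathsf{QCMA}$ verifier. The conceptual heart of the argument, however, is the choice of observables: taking them to be the PRS states themselves converts the question ``does $\ket{\psi}$ have large fidelity with some efficiently preparable state?'' into an existential search with a classical witness and quantum verification, which is exactly the defining shape of $\mathsf{QCMA}$.
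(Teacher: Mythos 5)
Your proposal is correct and follows essentially the same route as the paper's proof sketch: take the observables to be the PRS states themselves, run hyperefficient shadow tomography to obtain the circuit $C$, and feed $C$ to a $\mathsf{QCMA}$ oracle that searches for a key $k$ with $C(k)$ above a threshold, using concentration of Haar-random overlaps plus a union bound for soundness. The only differences are in the choice of constants and that you spell out some details (the Beta-distribution tail bound and the amplification needed when $C$ is probabilistic) that the paper leaves implicit.
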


\begin{proof}[Proof sketch]
For a given PRS ensemble $\{\ket{\varphi_k}\}_{k \in \Key}$, we have state preparation access to the observable list $\{\ket{\varphi_k} \bra{\varphi_k}\}_{k \in \Key}$ by way of the generating algorithm of the PRS. Hence, we can run hyperefficient shadow tomography using this observable list on copies of some unknown state $\ket{\psi}$. Suppose that with high probability, this produces a quantum circuit $C$ such that for each $k \in \Key$, $\Pr\left[|C(k) - \tr(\ket{\varphi_k} \braket{\varphi_k|\psi}\bra{\psi})| \le \frac{1}{10}\right] \ge \frac{2}{3}$ . Observe that the problem of deciding whether there exists some $k$ such that $C(k) \ge \frac{9}{10}$ w.h.p.\ is in $\mathsf{PromiseQCMA}$. If $\ket{\psi}$ is pseudorandom, then such a $k$ always exists (whichever $k$ satisfies $\ket{\psi} = \ket{\varphi_k}$), whereas if $\ket{\psi}$ is Haar-random, such a $k$ exists with negligible probability over the choice of $\ket{\psi}$. Hence, these two ensembles can be distinguished by feeding $C$ into this $\mathsf{PromiseQCMA}$ promise problem.
\end{proof}

The above theorem also relativizes, in the sense that if the shadow tomography procedure only accesses the state preparation algorithm via a black box $\Oracle$, then hyperefficient shadow tomography lets us break PRSs in polynomial time with oracle access to $\Oracle$ and $\mathsf{PromiseQCMA}^\Oracle$. Since \Cref{thm:qma_informal} gives an oracle relative to which $\mathsf{PromiseBQP}^\Oracle = \mathsf{PromiseQCMA}^\Oracle = \mathsf{PromiseQMA}^\Oracle$ and PRSs exist, we conclude that hyperefficient shadow tomography is impossible with only black-box state preparation access to the observables.

\subsection{Our techniques}

We briefly summarize the proof techniques used in our main results.

\subsubsection{Approximate \texorpdfstring{$t$}{t}-designs}
Approximate $t$-designs play a role in the proof of \Cref{thm:pp_informal}. So, in \Cref{sec:designs}, we give formal definitions of $t$-designs and prove some of their useful properties in the context of quantum query complexity. In particular, we establish conditions under which which substituting the Haar measure with a $t$-design yields a relative-error approximation to the acceptance probability of a quantum query algorithm. Several authors have implicitly assumed without proof that this property holds, e.g.\ \cite{BHH16-designs,AMR20-random}, and also an earlier version of this work \cite{conf-version}. We consider it valuable to place these results on more rigorous footing, and believe that the results about $t$-designs proved herein could find independent uses in other complexity-theoretic contexts.

\subsubsection{Breaking pseudorandomness with \texorpdfstring{$\mathsf{PP}$}{PP}}
The starting point for the proof of \Cref{thm:pp_informal}, which gives an upper bound of $\mathsf{PP}$ on the power needed to break pseudorandom states, is a theorem of Huang, Kueng, and Preskill \cite{HKP20-classical-shadows} that gives a simple procedure (sometimes called the \textit{classical shadows algorithm}) for shadow tomography.

\begin{theorem}[Classical shadows \cite{HKP20-classical-shadows}]
\label{thm:HKP}
Fix $M$ different observables $O_1, O_2, \ldots, O_M$ and an unknown $n$-qubit mixed state $\rho$. Then there exists a quantum algorithm that performs $T = O(\log(M / \delta) / \eps^2 \cdot \max_i \tr(O_i^2))$ single-copy measurements in random Clifford bases\footnote{Recall that the Clifford group is the group of unitary transformations generated by Hadamard, phase, and CNOT gates. A Clifford basis is any basis that can be obtained from the computational basis via multiplication by an element of the Clifford group.} of $\rho$, and uses the measurement results (called \emph{classical shadows}) to estimate the quantities $\tr(O_1 \rho), \tr(O_2 \rho), \ldots, \tr(O_M \rho)$, such that with probability at least $1 - \delta$, all of the $M$ quantities are correct up to additive error $\eps$.
\end{theorem}

If $\{\ket{\varphi_k}\}_{k \in \Key}$ is a pseudorandom state ensemble, then by choosing $O_k = \ket{\varphi_k}\bra{\varphi_k}$ for each key $k \in \Key$ to be the list of observables, we can use the above algorithm to determine whether $\rho$ is close to one of the states in the PRS ensemble. A Haar-random state will be far from \textit{all} of the pseudorandom states with overwhelming probability. Hence, \Cref{thm:HKP} implies the existence of an algorithm that distinguishes the pseudorandom and Haar-random ensembles, by performing a polynomial number of random Clifford measurements and analyzing the results. The key observation is that the Clifford measurements can be performed efficiently, even though the resulting analysis (which operates on purely classical information) might be computationally expensive.

Next, one could try to argue that the computationally difficult steps in the above algorithm can be made efficient with a $\mathsf{PP}$ oracle. However, we take a different approach. We adopt a Bayesian perspective: suppose that with $50\%$ probability we are given copies of a Haar-random state, and otherwise with $50\%$ probability we are given copies of a randomly chosen state from the pseudorandom ensemble. We wish to distinguish these two cases using only the results of the random Clifford measurements as observed data. One way to do this is via the Bayes decision rule: we compute the posterior probability of being Haar-random or pseudorandom given the measurements, and then guess the more likely result. In fact, the Bayes decision rule is well-known to be the \textit{optimal} decision rule in general, in the sense that any decision rule errs at least as often as the Bayes decision rule (see e.g.\ \cite[Chapter 4.4.1]{Ber13-bayes-decision}). Hence, because the algorithm of Huang, Kueng, and Preskill (\Cref{thm:HKP}) distinguishes the Haar-random and pseudorandom ensembles with good probability, the Bayes decision rule conditioned on the random Clifford measurements must work \textit{at least} as well at the same distinguishing task.

Finally, we observe that using a quantum algorithm with postselection, we can approximate the relevant posterior probabilities needed for the Bayes decision rule. This allows us to appeal to the equivalence $\mathsf{PostBQP} = \mathsf{PP}$ \cite{Aar05-postbqp} to simulate this postselection with a $\mathsf{PP}$ oracle.

Technically, one challenge is that the postselected quantum algorithm requires the ability to prepare copies of a Haar-random state, even though a polynomial-time quantum algorithm cannot even approximately prepare most Haar-random states. The solution is to replace the Haar ensemble by an approximate quantum design, which we argue does not substantially change the success probability of the algorithm. 

\subsubsection{Instantiating pseudorandomness with \texorpdfstring{$\mathsf{BQP} = \mathsf{QMA}$}{BQP = QMA}}
For our second result (\Cref{thm:qma_informal}), the oracle $\Oracle$ that we construct consists of two parts: a quantum oracle $\UOracle = \{\UOracle_n \}_{n \in \Naturals}$, where each $\UOracle_n$ consists of $2^n$ different Haar-random $n$-qubit unitary matrices, and a classical oracle (i.e.\ a language) $\COracle$ that we build independently of $\UOracle$. We prove that \Cref{thm:qma_informal} holds with probability $1$ over the choice of $\UOracle$.

Showing that PRUs exist relative to $(\UOracle, \COracle)$ is reasonably straightforward. Notably, the security proof does not depend on the choice of $\COracle$, so long as $\COracle$ is independent of the randomly sampled $\UOracle$. The proof uses the BBBV theorem (i.e.\ the optimality of Grover's algorithm) \cite{BBBV97-search}, and is analogous to showing that one-way functions or pseudorandom generators exist relative to a random \textit{classical} oracle, as was shown by Impagliazzo and Rudich \cite{IR89-permutations}. We only rigorously prove security against adversaries with classical advice, though we believe that the framework of Chung, Guo, Liu, and Qian \cite{CGLQ20-tradeoffs} should yield a security proof against adversaries with quantum advice.

Slightly more technically involved is proving that $\mathsf{BQP}^{\UOracle, \COracle} = \mathsf{QMA}^{\UOracle, \COracle}$. To do so, we argue that a $\mathsf{QMA}$ verifier is not substantially more powerful than a $\mathsf{BQP}$ machine at learning nontrivial properties of $\UOracle$. More precisely, we argue that if a $\mathsf{QMA}$ verifier $\Ver$ makes $T$ queries to $\UOracle_n$ for some $n \in \Naturals$, then either (1) $n = O(\log T)$ is sufficiently small that $\poly(T)$ queries to $\UOracle_n$ actually suffice to learn $\UOracle_n$ to inverse-polynomial precision, or else (2) $n = \omega(\log T)$ is sufficiently large that with high probability, the maximum acceptance probability of $\Ver$ (over the choice of Merlin's witness) is close to the average maximum acceptance probability of $\Ver$ when $\UOracle_n$ is replaced by a random set of matrices sampled from the Haar measure. We prove this as a consequence of the extremely strong concentration of measure properties exhibited by the Haar measure \cite{Mec19-random-matrix}.

For a certain carefully-constructed language $\COracle$, this allows a $\mathsf{BQP}^{\UOracle, \COracle}$ machine to approximate the maximum acceptance probability of $\Ver^{\UOracle, \COracle}$ as follows. In case (1), the $\mathsf{BQP}^{\UOracle, \COracle}$ machine first queries $\UOracle_n$ enough times to learn a unitary transformation $\widetilde{\UOracle}_n$ that is close to $\UOracle_n$, and then hard codes $\widetilde{\UOracle}_n$ into a new $\mathsf{QMA}^\COracle$ verifier $\VerW^{\COracle}$ that simulates $\Ver$ by replacing queries to $\UOracle_n$ with calls to $\widetilde{\UOracle_n}$. In case (2), the $\mathsf{BQP}^{\UOracle, \COracle}$ machine similarly constructs a new $\mathsf{QMA}^\COracle$ verifier $\VerW^{\COracle}$, instead simulating $\Ver$ by replacing queries to $\UOracle_n$ with queries to \textit{independently chosen} Haar-random unitaries $\overline{\UOracle}_n$. Thus, the problem of approximating the maximum acceptance probability of $\Ver^{\UOracle, \COracle}$ reduces to approximating the maximum acceptance probability of $\VerW^{\COracle}$, averaged over $\overline{\UOracle}_n$. The language $\COracle$ is constructed in such a fashion that querying $\COracle$ on a description of $\VerW$ returns the desired approximation.

\subsection{Open problems}
Can we prove a similar result to \Cref{thm:qma_informal} using a \textit{classical} oracle, for either PRUs or PRSs? Attempting to resolve this question seems to run into many of the same difficulties that arise in constructing a classical oracle separation between $\mathsf{QCMA}$ and $\mathsf{QMA}$, which also remains an open problem \cite{AK07-qcma-qma}. For one, as pointed out in \cite{AK07-qcma-qma}, we do not even know whether every $n$-qubit unitary transformation can be approximately implemented in $\poly(n)$ time relative to some classical oracle---this is sometimes known as the \textit{unitary synthesis problem} \cite{Aar16-barbados,Ros21-unitary,LMW24-synthesis}. Even if one could resolve this, it is not clear whether the resulting PRUs or PRSs would be secure against adversaries with the power of $\mathsf{QMA}$. For instance, we show in \Cref{app:binary_phase_prs} that an existing construction of PRSs, whose security is provable in the random oracle model \cite{BS19-binary}, can be broken with an $\mathsf{NP}$ oracle. Nevertheless, recent work by Kretschmer, Qian, Sinha, and Tal \cite{KQST23-prs} makes progress on this question by constructing an oracle relative to which $\mathsf{P} = \mathsf{NP}$ and a weaker version of pseudorandom states (with only single-copy security) exist.

What else can be said about the hardness of learning quantum states and unitary transformations, either in the worst case or on average? A related question is to explore the hardness of problems involving quantum \textit{meta-complexity}: that is, problems that themselves encode computational complexity or difficulty. Consider, for example, a version of the minimum circuit size problem ($\mathsf{MCSP}$) for quantum states: given copies of a pure quantum state $\ket{\psi}$, determine the size of the smallest quantum circuit that approximately outputs $\ket{\psi}$. If PRSs exist, then this task should be hard, but placing an upper bound on the complexity of this task might be difficult in light of our results. We view this problem as particularly intriguing because it does not appear to have an obvious classical analogue, and also because of its relevance to the wormhole growth paradox and Susskind's Complexity=Volume conjecture in AdS/CFT \cite{BFV20-prs-wormhole,Sus16-horizons,Sus16-addendum}. A number of recent breakthroughs in complexity theory have involved ideas from meta-complexity (see surveys by Allender \cite{All17-meta,All20-mcsp} or Lu and Oliveira \cite{LO22-survey}), and it would be interesting to see which of these techniques could be ported to the quantum setting.

What other complexity-theoretic evidence can be given for the existence of PRSs and PRUs? Can we give candidate constructions of PRSs or PRUs that do not rely on the assumption $\mathsf{BQP} \neq \mathsf{QMA}$? To give a specific example, an interesting question is whether polynomial-size quantum circuits with random local gates form PRUs. Random circuits are known to information-theoretically approximate the Haar measure in the sense that they form approximate unitary designs \cite{BHH16-designs}, and it seems conceivable that they could also be computationally pseudorandom.

\subsection{Conference version}
This paper improves upon the earlier conference version \cite{conf-version} in two major ways. First, the section on $t$-designs (\Cref{sec:designs}) is a new addition. Second, the oracle $\Oracle = (\UOracle, \COracle)$ that we use to collapse $\mathsf{PromiseQMA}^{\UOracle, \COracle}$ to $\mathsf{PromiseBQP}^{\UOracle, \COracle}$ is different. In both versions, the quantum part $\UOracle$ is the same. However, the conference version claimed that the classical oracle $\COracle$ could be any $\mathsf{PSPACE}$-complete language. By contrast, in this paper $\COracle$ is a specific recursively-constructed language. The reason for this change is an error that was discovered in the proof. We discuss this error, along with the prospects of restoring the original oracle construction, in \Cref{sec:alternative_oracles}.

\section{Preliminaries}

\subsection{Basic notation}

Throughout, $[n]$ denotes the set of integers $\{1,2,\ldots,n\}$. If $x \in \{0,1\}^n$ is a binary string, then $|x|$ denotes the length of $x$, and $\wt(x)$ denotes its Hamming weight. For $X$ a finite set, we let $|X|$ denote the size of $X$. If $X$ is a probability distribution, then we use $x \sim X$ to denote a random variable $x$ sampled according to $X$. When $X$ is a finite set, we also use $x \sim X$ to indicate a random variable $x$ drawn uniformly from $X$. A function $f(n)$ is \textit{negligible} if for every constant $c > 0$, $f(n) \le \frac{1}{n^c}$ for all sufficiently large $n$. We use $\negl(n)$ to denote an arbitrary negligible function, and $\poly(n)$ to denote an arbitrary polynomially-bounded function.

\subsection{Probability}

We require two basic facts about probability. The first regards the optimality of the Bayes decision rule, which is a strategy for guessing a random variable $X$ from posterior information $Y$. The Bayes decision rule is to always guess the value of $X$ that maximizes the posterior probability given $Y$. The Bayes decision rule is optimal, in the sense that any other strategy that guesses $X$ using $Y$ errs at least as often as the Bayes decision rule. We only need the following special case of this fact, which also applies more generally (see \cite[Chapter 4.4.1]{Ber13-bayes-decision} for further discussion).

\begin{lemma}[Bayes decision rule]
    \label{lem:bayes_decision}
    Let $X$ be a $\{0,1\}$-valued random variable, let $Y$ be a random variable (not necessarily independent of $X$) with domain $D$, and let $f: D \to \{0,1\}$. Then:
    \[
    \Pr[f(Y) = X] \le \Pr\left[\argmax_i \Pr[X = i | Y] = X\right].
    \]
\end{lemma}

The other fact we need is the Borel--Cantelli lemma for sequences of probabilistic events. It gives a criterion under which at most finitely many of the events occur, with probability $1$.

\begin{lemma}[Borel--Cantelli \cite{Bor09-prob,Can17-prob}]
\label{lem:borel-cantelli}
    Let $\{X_n\}_{n \in \Naturals}$ be a sequence of (not necessarily independent) random variables with values in $\{0, 1\}$. If
    \[
    \sum_{n=1}^\infty \E[X_n] < \infty,
    \]
    then
    \[
    \Pr\left[\sum_{n=1}^\infty X_n = \infty\right] = 0.
    \]
\end{lemma}

\subsection{Quantum information}
We let $\TD(\rho, \sigma)$ denote the trace distance between density matrices $\rho$ and $\sigma$. For a matrix $M$ we use $||M||_F \coloneqq \sqrt{\tr \left(M^\dagger M\right)}$ to denote its Frobenius norm.

When $A$ and $B$ are Hermitian matrices, we use $A \preceq B$ to denote the semidefinite ordering, i.e.\ that $B - A$ is positive semidefinite. We extend this notation to superoperators $A$ and $B$: $A \preceq B$ denotes that $B - A$ is \textit{completely positive}. A superoperator $\Lambda$ is said to be completely positive if, for any identity superoperator $\identity$ and positive semidefinite matrix $\rho$, $(\Lambda \otimes \identity)(\rho)$ is positive semidefinite. If $\Lambda$ has input dimension $N$, a criterion equivalent to complete positivity is
\[
(\Lambda \otimes \identity_N)(\ket{\Phi_N}\bra{\Phi_N}) \succeq 0,
\]
where $\identity_N$ is the $N$-dimensional identity channel, and $\ket{\Phi_N} \coloneqq \frac{1}{\sqrt{N}} \sum_{i=1}^N \ket{i}\ket{i}$ is the standard maximally entangled state of dimension $N \times N$ \cite{BHH16-designs}.

For a unitary matrix $U$, we use $U \cdot U^\dagger$ to denote the superoperator that maps a density matrix $\rho$ to $U\rho U^\dagger$. In a slight abuse of notation, if $\Alg$ denotes a quantum algorithm (which may consist of unitary gates, measurements, oracle queries, and initialization of ancilla qubits), then we also use $\Alg$ to denote the superoperator corresponding to the action of $\Alg$ on input density matrices.

We let $\diamondnorm{\Alg}$ denote the \textit{diamond norm} \cite{AKN98-mixed} of a superoperator $\Alg$ acting on density matrices, which is defined by
\[
\diamondnorm{\Alg} \coloneqq \sup_{\tr(\rho) = 1, \rho \succeq 0} ||(\Alg \otimes I)(\rho)||_1,
\]
where $I$ denotes the identity superoperator acting on a space of the same dimension as $\Alg$. Intuitively, the diamond norm gives an analogue of trace distance for channels: the distance between two channels in the diamond norm captures the maximum bias by which those two channels can be distinguished. In particular, we have the following:

\begin{fact}
    \label{fact:diamond_bounds_trace}
    Let $\Alg$ and $\AlgB$ be quantum channels and $\rho$ a density matrix. Then
    \[
    \TD(\Alg(\rho), \AlgB(\rho)) \le \frac{1}{2}\diamondnorm{\Alg - \AlgB}.
    \]
\end{fact}

We use the following formula for the distance between unitary superoperators in the diamond norm.
\begin{fact}[\cite{AKN98-mixed}]
\label{fact:diamond_norm}
Let $U$ and $V$ be unitary matrices, and suppose $d$ is the distance between $0$ and the polygon in the complex plane whose vertices are the eigenvalues of $UV^\dagger$. Then
$$\diamondnorm{U \cdot U^\dagger - V \cdot V^\dagger} = 2\sqrt{1 - d^2}.$$
\end{fact}

A consequence of \Cref{fact:diamond_norm} is the following well-known bound relating distance in diamond norm to the Frobenius norm for unitary channels. We provide a proof for completeness.

\begin{lemma}\label{lem:frobenius_diamond}
Let $U, V$ be $N \times N$ unitary matrices. Then $||U \cdot U^\dagger - V \cdot V^\dagger||_\diamond \le 2||U - V||_F$.
\end{lemma}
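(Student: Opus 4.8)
The plan is to unfold the definition of the diamond norm and reduce the estimate to a statement about the trace distance of two pure states. By definition,
\[
\diamondnorm{U \cdot U^\dagger - V \cdot V^\dagger} = \sup_{\rho} \bigl\| (U \otimes \id)\,\rho\,(U \otimes \id)^\dagger - (V \otimes \id)\,\rho\,(V \otimes \id)^\dagger \bigr\|_1 ,
\]
where $\id$ denotes the identity on an ancilla register (of dimension at most $N$) and the supremum is over density matrices $\rho$ on the joint system. Because the quantity inside $\|\cdot\|_1$ is linear in $\rho$ and the trace norm is convex, the supremum is attained at an extreme point of the set of density matrices, i.e. at a pure state $\rho = \ket{\psi}\bra{\psi}$. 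Hence it suffices to bound $\bigl\| \ket{a}\bra{a} - \ket{b}\bra{b} \bigr\|_1$ for an arbitrary unit vector $\ket{\psi}$, where $\ket{a} := (U \otimes \id)\ket{\psi}$ and $\ket{b} := (V \otimes \id)\ket{\psi}$.

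Next I would bound the trace distance of these two rank-one projectors by the Euclidean distance of the underlying vectors. Writing
\[
\ket{a}\bra{a} - \ket{b}\bra{b} = \ket{a}\bigl(\bra{a} - \bra{b}\bigr) + \bigl(\ket{a} - \ket{b}\bigr)\bra{b},
\]
applying the triangle inequality for $\|\cdot\|_1$, and using that the trace norm of a rank-one operator $\ket{u}\bra{v}$ equals $\|\ket{u}\| \cdot \|\ket{v}\|$ together with $\|\ket{a}\| = \|\ket{b}\| = 1$ (as $U$ and $V$ are unitary), I obtain $\bigl\| \ket{a}\bra{a} - \ket{b}\bra{b} \bigr\|_1 \le 2\,\|\ket{a} - \ket{b}\|$.

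Finally, $\|\ket{a} - \ket{b}\| = \bigl\| ((U - V)\otimes \id)\ket{\psi} \bigr\| \le \|U - V\|_{\mathrm{op}}$, since tensoring with the identity leaves the operator norm unchanged and $\ket{\psi}$ is a unit vector, and the operator norm is dominated by the Frobenius norm, $\|U - V\|_{\mathrm{op}} \le \|U - V\|_F$, because the largest singular value is at most the root of the sum of squares of all singular values. Chaining the three estimates and taking the supremum over $\ket{\psi}$ gives $\diamondnorm{U \cdot U^\dagger - V \cdot V^\dagger} \le 2\|U - V\|_F$, as claimed. The argument in fact yields the stronger bound $2\|U - V\|_{\mathrm{op}}$, with the stated Frobenius version following for free.

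The only genuinely delicate point is the reduction in the first paragraph, namely justifying that the diamond-norm supremum may be restricted to pure inputs on a doubled system; the remaining estimates are routine. An alternative to the telescoping bound of the second paragraph is the closed form $\bigl\| \ket{a}\bra{a} - \ket{b}\bra{b} \bigr\|_1 = 2\sqrt{1 - |\braket{a|b}|^2}$ valid for unit vectors, but this then forces one to control $|\braket{a|b}| = |\bra{\psi}(U^\dagger V \otimes \id)\ket{\psi}|$, which is less transparent than the rank-one triangle inequality, so I would prefer the telescoping route.
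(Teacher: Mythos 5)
Your proof is correct, but it takes a genuinely different route from the paper. The paper does not unfold the definition of the diamond norm at all: it first computes $||U-V||_F^2 = \sum_i (2 - 2\Re(\lambda_i))$ where the $\lambda_i$ are the eigenvalues of $UV^\dagger$, and then invokes the exact characterization of Aharonov, Kitaev, and Nisan, $||U\cdot U^\dagger - V\cdot V^\dagger||_\diamond = 2\sqrt{1-d^2}$ with $d$ the distance from the origin to the polygon spanned by the $\lambda_i$, finishing with elementary inequalities relating $d$ to $\min_i \Re(\lambda_i)$. Your argument instead goes back to first principles: reduce the supremum to pure inputs on the doubled system, telescope $\ket{a}\bra{a}-\ket{b}\bra{b}$ into two rank-one pieces, and chain $\|\ket{a}-\ket{b}\| \le \|U-V\|_{\mathrm{op}} \le \|U-V\|_F$. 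This is more self-contained (no appeal to the AKN formula) and, as you note, yields the strictly stronger bound $2\|U-V\|_{\mathrm{op}}$; the paper's route buys an exact expression for the diamond norm, which it then deliberately weakens to the Frobenius norm because that is the form needed for the Lipschitz and concentration arguments later on. The one point you flag as delicate---restricting the diamond-norm supremum to pure density operators rather than all operators of trace norm at most one---is indeed the only nontrivial justification needed, and it holds here because the difference of two unitary channels is Hermiticity-preserving, for which class the maximum over pure states is known to achieve the diamond norm; it would be worth stating that explicitly with a reference rather than leaving it implicit in the convexity remark.
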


\begin{proof}
Let $\{\lambda_i : i \in [N]\}$ denote the eigenvalues of $UV^\dagger$. Then we have:
\begin{align}
||U - V||_F^2 &= \tr\left((U - V)(U - V)^\dagger\right)\nonumber\\
&= \tr(2I - UV^\dagger - VU^\dagger)\nonumber\\
&= 2N - \sum_{i=1}^N (\lambda_i + \lambda_i^*)\nonumber\\
&= \sum_{i=1}^N (2 - 2\Re(\lambda_i))\nonumber\\
&\ge \max_i(2 - 2\Re(\lambda_i))\label{eq:frobenius_upper_bound},
\end{align}
where $\Re(\lambda_i)$ denotes the real part of $\lambda_i$. The last line holds because the eigenvalues of a unitary matrix have absolute value $1$. 

Let $d$ be the distance in the complex plane between $0$ and the polygon whose vertices are $\lambda_1,\ldots,\lambda_N$. Then from \Cref{fact:diamond_norm} we may conclude:
\begin{align*}
||U \cdot U^\dagger - V \cdot V^\dagger||_\diamond &\le \max_i 2\sqrt{1 - \max\{\Re(\lambda_i),0\}^2}\\
&\le \max_i 2\sqrt{2 - 2\Re(\lambda_i)}\\
&\le 2||U - V||_F,
\end{align*}
where the first inequality uses the fact that either all of the eigenvalues have positive real components and therefore $d \ge \min_i \Re(\lambda_i)$, or else $d \ge 0$; the second inequality substitutes $1 - \max\{x, 0\}^2 \le 2 - 2x$ which holds for all $x \in \Reals$; and the third inequality substitutes \eqref{eq:frobenius_upper_bound}.
\end{proof}

\subsection{Haar measure and concentration}

We use $\States(N)$ to denote the set of $N$-dimensional pure quantum states, and $\Unitaries(N)$ to denote the group of $N \times N$ unitary matrices. When $N = 2^n$, we identify these with $n$-qubit states and unitary transformations, respectively. We use $\sigma_N$ to denote the Haar measure on $\States(N)$, and we let $\mu_N$ denote the Haar measure over $\Unitaries(N)$. We write $\Unitaries(N)^M$ for the space of $MN \times MN$ block-diagonal unitary matrices, where each block has size $N \times N$, and we also identify $\Unitaries(N)^M$ with $M$-tuples of $N \times N$ unitary matrices. We use $\mu_N^M$ to denote the product measure $\mu_N^M(U_1,U_2,\ldots,U_M) \coloneqq \mu_N(U_1) \cdot \mu_N(U_2) \cdots \mu_N(U_M)$ on $\Unitaries(N)^M$, which we interpret as a distribution over a direct sum $U_1 \oplus U_2 \oplus \ldots \oplus U_M$ of matrices.

We require the following concentration inequality on the Haar measure, which is stated in terms of Lipschitz continuous functions. For a metric space $\mathcal{M}$ with metric $d$, a function $f: \mathcal{M} \to \Reals$ is \textit{$L$-Lipschitz} if for all $x, y \in \mathcal{M}$, $|f(x) - f(y)| \le L \cdot d(x, y)$.

\begin{theorem}[{\cite[Theorem 5.17]{Mec19-random-matrix}}]
\label{thm:haar_concentration}
Given $N_1,\ldots,N_k \in \Naturals$, let $X = \Unitaries(N_1) \oplus \cdots \oplus \Unitaries(N_k)$ be the space of block-diagonal unitary matrices with blocks of size $N_1, \ldots, N_k$. Let $\mu = \mu_{N_1} \times \cdots \times \mu_{N_k}$ be the product of Haar measures on $X$. Suppose that $f: X \to \Reals$ is $L$-Lipschitz in the Frobenius norm. Then for every $t > 0$:
$$\Pr_{U\sim \mu}\left[f(U) \ge \E_{V \sim \mu}[f(V)] + t\right] \le \exp\left(-\frac{(N-2)t^2}{24L^2}\right),$$
where $N = \min\{N_1,\ldots,N_k\}$.
\end{theorem}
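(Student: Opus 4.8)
The plan is to prove this via the curvature-based concentration of measure for compact Lie groups, which is the standard route for Haar concentration on the classical compact groups. The statement has two structural features to handle: the product structure $X = \Unitaries(N_1) \oplus \cdots \oplus \Unitaries(N_k)$, and the fact that Lipschitzness is measured in the Frobenius (extrinsic/chordal) norm rather than the geodesic metric. For the latter, I would first observe that on each unitary group the Frobenius distance $\|U - V\|_F$ is bounded above by the geodesic distance $d_{\mathrm{geo}}(U,V)$ induced by the bi-invariant metric coming from the Frobenius inner product; hence any $f$ that is $L$-Lipschitz in Frobenius norm is also $L$-Lipschitz in the geodesic metric, and it suffices to prove geodesic concentration. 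For the product structure, I would use the fact that logarithmic Sobolev inequalities (and hence sub-Gaussian concentration) \emph{tensorize}: if each factor $\Unitaries(N_i)$ with Haar measure satisfies a log-Sobolev inequality with constant $C_i$, then the product measure $\mu$ satisfies one with constant $\max_i C_i$. Since the log-Sobolev constant for $\Unitaries(N_i)$ scales like $1/N_i$, the worst factor is the smallest block, which is exactly why $N = \min_i\{N_i\}$ appears in the exponent.

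This reduces the problem to a concentration inequality for a single unitary group $\Unitaries(N)$ under Haar measure. Here I would invoke the Riemannian-geometric machinery: $\Unitaries(N)$ equipped with the bi-invariant metric induced by $\langle X, Y \rangle = \Re\,\tr(X^\dagger Y)$ on the Lie algebra is a compact Riemannian manifold whose normalized volume measure is exactly the Haar measure. The key computation is a lower bound on the Ricci curvature. For compact Lie groups with a bi-invariant metric the Ricci curvature is expressible through the Killing form, and for $\mathsf{SU}(N)$ one obtains a positive lower bound proportional to $N$. By the Bakry--\'Emery criterion, a lower bound $\mathrm{Ric} \ge c > 0$ yields a log-Sobolev inequality with constant $O(1/c)$, and then Herbst's argument converts this into the sub-Gaussian tail bound $\Pr[f \ge \E f + t] \le \exp(-\Omega(c\, t^2 / L^2))$. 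Tracking the constants through the curvature computation, the Bakry--\'Emery estimate, and the normalization of the metric then produces the explicit exponent $-(N-2)t^2/(24 L^2)$.

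The main obstacle is that $\Unitaries(N)$, unlike $\mathsf{SU}(N)$, does \emph{not} have strictly positive Ricci curvature: the one-dimensional center (the overall determinant/phase direction) is flat, so $\mathrm{Ric} \ge 0$ but the bound degenerates in that direction, and a naive application of Bakry--\'Emery gives nothing. This is precisely the reason the exponent carries $N-2$ rather than $N$. I would resolve this by exploiting the near-splitting $\Unitaries(N) \cong (\mathsf{SU}(N) \times \Unitaries(1))/\mathbb{Z}_N$: concentration is inherited from the strongly curved factor $\mathsf{SU}(N)$, while the flat $\Unitaries(1)$ factor is a single compact circle of bounded diameter and contributes only a lower-order correction, which, after carefully accounting for the projection and the metric normalization, weakens the effective curvature constant from a multiple of $N$ down to $N-2$. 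Managing this center correction cleanly, and verifying that it costs only an additive constant in the dimension rather than a multiplicative loss, is the delicate part of the argument; everything else is an application of standard curvature-to-concentration technology.
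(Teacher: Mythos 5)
This statement is not proved in the paper at all: it is quoted verbatim as \cite[Theorem 5.17]{Mec19} and used as a black box, so there is no in-paper argument to compare against. Your sketch is, in outline, the proof that appears in the cited source: reduce Frobenius-Lipschitz to geodesic-Lipschitz (correct direction, since the chordal distance is dominated by the geodesic distance for the bi-invariant metric induced by $\langle X,Y\rangle = \Re\,\tr(X^\dagger Y)$), tensorize the log-Sobolev inequality over the blocks with constant $\max_i C_i$, get the single-block LSI from a Ricci lower bound via Bakry--\'Emery, and convert to a sub-Gaussian tail by Herbst. Two caveats. First, your attribution of the ``$N-2$'' to the flat center of $\Unitaries(N)$ is not right: in Meckes the exponent $(N-2)/24$ is the \emph{uniform} constant over all classical compact groups covered by that theorem, and the $N-2$ is driven by $\mathsf{SO}(N)$, whose Ricci curvature is $(N-2)/4$; for the unitary groups alone one gets an LSI constant of order $1/N$ and hence a strictly better exponent, which of course still implies the stated bound. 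Second, the center-handling you flag as delicate does work, but the mechanism is worth naming precisely: one writes $U = e^{i\theta}V$ with $V \in \mathsf{SU}(N)$ and $\theta$ ranging over a fundamental domain $[0, 2\pi/N)$ of length $2\pi/N$, so that the circle factor has LSI constant $O(1/N^2)$ and is harmless; if instead one used the full circle $\{e^{i\theta}\id\}$ with its induced Frobenius metric (circumference $2\pi\sqrt{N}$), its LSI constant would be $O(N)$ and would destroy the bound. So the proposal is a faithful sketch of the standard (and cited) argument, with the quantitative bookkeeping around the center and the origin of the constant left unexecuted and, in one place, misdescribed.
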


\subsection{Complexity theory}

A \textit{language} is a function $L: \{0,1\}^* \to \{0,1\}$. A \textit{promise problem} is a function $\Pi: \{0,1\}^* \to \{0,1,\bot\}$. The \textit{domain} of a promise problem $\Pi$, denoted $\Dom(\Pi)$, is
\[\Dom(\Pi) \coloneqq \left\{ x \in \{0,1\}^* : \Pi(x) \in \{0,1\} \right\}\]

We assume familiarity with standard complexity classes such as $\mathsf{BQP}$ and $\mathsf{PP}$. For completeness, we define some complexity classes used prominently in this work.

\begin{definition}
\label{def:BPP}
\label{def:BQP}
A promise problem $\Pi : \{0,1\}^* \to \{0,1,\bot\}$ is in $\mathsf{PromiseBQP}$ (\underline{B}ounded-error \underline{Q}uantum \underline{P}olynomial time) if there exists a randomized polynomial-time quantum algorithm $\Alg(x)$ such that:
\begin{enumerate}[(i)]
\item If $\Pi(x) = 1$, then $\Pr\left[\Alg(x) = 1\right] \ge \frac{2}{3}$.
\item If $\Pi(x) = 0$, then $\Pr\left[\Alg(x) = 1\right] \le \frac{1}{3}$.
\end{enumerate}
$\mathsf{BQP}$ is defined as the set of languages in $\mathsf{PromiseBQP}$.
\end{definition}

\begin{definition}
\label{def:QMA}
A promise problem $\Pi : \{0,1\}^* \to \{0,1,\bot\}$ is in $\mathsf{PromiseQMA}$ (\underline{Q}uantum \underline{M}erlin--\underline{A}rthur) if there exists a polynomial-time quantum algorithm $\Ver(x,\ket{\psi})$ called a \emph{verifier} and a polynomial $p$ such that:
\begin{enumerate}[(i)]
\item (Completeness) If $\Pi(x) = 1$, then there exists a quantum state $\ket{\psi}$ on $p(|x|)$ qubits (called a \emph{witness} or \emph{proof}) such that $\Pr\left[\Ver(x,\ket{\psi}) = 1\right] \ge \frac{2}{3}$.
\item (Soundness) If $\Pi(x) = 0$, then for every state $\ket{\psi}$ on $p(|x|)$ qubits, $\Pr\left[\Ver(x,\ket{\psi}) = 1\right] \le \frac{1}{3}$.
\end{enumerate}
$\mathsf{QMA}$ is defined as the set of languages in $\mathsf{PromiseQMA}$.
\end{definition}

Note: we will sometimes call any algorithm of the form $\Ver(x,\ket{\psi})$ a \emph{$\mathsf{QMA}$ verifier}, even if it does not satisfy the promise of a $\mathsf{QMA}$ language.

\begin{definition}
\label{def:PostBQP}
A promise problem $\Pi : \{0,1\}^* \to \{0,1,\bot\}$ is in $\mathsf{PromisePostBQP}$ (\underline{Post}selected \underline{B}ounded-error \underline{Q}uantum \underline{P}olynomial time) if there exists a poly\-no\-mi\-al-time quantum algorithm $\Alg(x)$ that outputs a trit in $\{0,1,*\}$ such that:
\begin{enumerate}[(i)]
\item For all $x \in \Dom(\Pi)$, $\Pr[\Alg(x) \in \{0,1\}] > 0$. When $\Alg(x) \in \{0,1\}$, we say that \emph{postselection succeeds}.
\item If $\Pi(x) = 1$, then $\Pr\left[\Alg(x) = 1 \mid \Alg(x) \in \{0,1\}\right] \ge \frac{2}{3}$. In other words, conditioned on postselection succeeding, $\Alg$ outputs $1$ with at least $\frac{2}{3}$ probability.
\item If $\Pi(x) = 0$, then $\Pr\left[\Alg(x) = 1 \mid \Alg(x) \in \{0,1\}\right] \le \frac{1}{3}$. In other words, conditioned on postselection succeeding, $\Alg$ outputs $1$ with at most $\frac{1}{3}$ probability.
\end{enumerate}
$\mathsf{PostBQP}$ is defined as the set of languages in $\mathsf{PromisePostBQP}$.
\end{definition}

Technically, the definition of $\mathsf{PromisePostBQP}$ is sensitive to the choice of universal gate set used to specify quantum algorithms, as was observed by Kuperberg \cite{Kup15-jones}. However, for most ``reasonable'' gate sets, such as unitary gates with algebraic entries \cite{Kup15-jones}, the choice of gate set is irrelevant. We assume such a gate set, e.g.\ $\{\mathrm{CNOT}, H, T\}$.

We require the following equivalent characterization of $\mathsf{PromisePostBQP}$:

\begin{lemma}[Aaronson \cite{Aar05-postbqp}]
\label{lem:postbqp-pp}
    $\mathsf{PromisePostBQP} = \mathsf{PromisePP}$
\end{lemma}

\subsection{Quantum oracles}
\label{subsec:quantum-oracles}

We frequently consider quantum algorithms that query quantum oracles. In this work, unless otherwise specified, we define queries to a unitary matrix $\UOracle$ to mean a single application of either $\UOracle$, $\UOracle^\dagger$, controlled-$\UOracle$ (i.e.\ $I \oplus \UOracle$, where $I$ is the identity of the same dimension), or controlled-$\UOracle^\dagger$ (i.e.\ $I \oplus \UOracle^\dagger$), unless otherwise specified.
We use superscript notation for algorithms that query oracles. For instance, $\Alg^\UOracle(x, \ket{\psi})$ denotes a quantum algorithm $\Alg$ that queries an oracle $\UOracle$ and receives a classical input $x$ and a quantum input $\ket{\psi}$.

We consider versions of $\mathsf{PromiseBQP}$, $\mathsf{PromiseQMA}$, and $\mathsf{PromisePostBQP}$ augmented with quantum oracles, where the algorithm (or in the case of $\mathsf{PromiseQMA}$, the verifier) can apply unitary transformations from an infinite sequence $\UOracle = \{\UOracle_n\}_{n \in \Naturals}$. We denote the respective complexity classes by $\mathsf{PromiseBQP}^\UOracle$, $\mathsf{PromiseQMA}^\UOracle$, and $\mathsf{PromisePostBQP}^\UOracle$. We assume the algorithm incurs a cost of $n$ to query $\UOracle_n$ so that a polynomial-time algorithm on input $x$ can query $\UOracle_n$ for any $n \le \poly(|x|)$. In this model, a query to $\UOracle_n$ consists of a single application of either $\UOracle_n$, controlled-$\UOracle_n$, or their inverses.

The quantum oracle model includes classical oracles as a special case. For a language $\Lang$, a query to $\Lang$ is implemented via the unitary transformation $\UOracle$ that acts as $\UOracle\ket{x}\ket{b} = \ket{x}\ket{b \oplus \Lang(x)}$.

\subsection{Cryptography}

We use the following definitions of pseudorandom quantum states (PRSs) and pseudorandom unitaries (PRUs), which were introduced by Ji, Liu, and Song \cite{JLS18-prs}. 

\begin{definition}[Pseudorandom quantum states \cite{JLS18-prs}]
\label{def:prs}
Let $\kappa \in \Naturals$ be the security parameter, and let $n(\kappa)$ be the number of qubits in the quantum system. A keyed family of $n$-qubit quantum states $\{\ket{\varphi_k}\}_{k \in \Key}$ is \emph{pseudorandom} if the following two conditions hold:
\begin{enumerate}[(1)]
\item (Efficient generation) There is a polynomial-time quantum algorithm $G$ that generates $\ket{\varphi_k}$ on input $k$, meaning $G(k) = \ket{\varphi_k}$.
\item (Computationally indistinguishable) For any polynomial-time quantum adversary $\Alg$ and for every $T = \poly(\kappa)$:
$$\left| \Pr_{k \sim \Key}\left[\Alg\left(1^\kappa, \ket{\varphi_k}^{\otimes T}\right) = 1 \right] - \Pr_{\ket{\psi} \sim \sigma_{2^n}}\left[\Alg\left(1^\kappa, \ket{\psi}^{\otimes T}\right) = 1 \right] \right| \le \negl(\kappa).$$
\end{enumerate}
\end{definition}

We emphasize that the above security definition must hold for \textit{all} polynomial values of $T$ (i.e.\ $T$ is not bounded in advance). That being said, there do exist alternative definitions of pseudorandom states in which the adversary only receives a single copy of the state \cite{MY22-prs}.

\begin{definition}[Pseudorandom unitary transformations \cite{JLS18-prs}]
\label{def:pru}
Let $\kappa \in \Naturals$ be the security parameter, and let $n(\kappa)$ be the number of qubits in the quantum system. A keyed family of $n$-qubit unitary transformations $\{U_k\}_{k \in \Key}$ is \emph{pseudorandom} if the following two conditions hold:
\begin{enumerate}[(1)]
\item (Efficient computation) There is a polynomial-time quantum algorithm $G$ that implements $U_k$ on input $k$, meaning that for any $n$-qubit $\ket{\psi}$, $G(k, \ket{\psi}) = U_k\ket{\psi}$.
\item (Computationally indistinguishable) For any polynomial-time quantum algorithm $\Alg^U$ that queries $n$-qubit $U$:
$$\left| \Pr_{k \sim \Key}\left[\Alg^{U_k}\left(1^\kappa\right) = 1 \right] - \Pr_{U \sim \mu_{2^n}}\left[\Alg^{U}\left(1^\kappa\right) = 1 \right] \right| \le \negl(\kappa).$$
\end{enumerate}
\end{definition}

We sometimes call the negligible quantities in the above definitions the \textit{advantage} of the quantum adversary $\Alg$. Additionally, we may instantiate these primitives \textit{relative to an oracle $\Oracle$}, which just means that both the generating algorithm $G$ and the adversary $\Alg$ are additionally allowed to query $\Oracle$.

In this work, we will only consider pseudorandom state and unitary ensembles where $n(\kappa) = \omega(\log \kappa)$. Although the original definition of Ji, Liu, and Song \cite{JLS18-prs} did not impose this condition, later works have shown that $O(\log \kappa)$-qubit pseudorandom ensembles behave very differently from $\omega(\log \kappa)$-qubit ensembles \cite{BS20-scalable,AQY22-prs,BEM24-shrink}; the former are often called ``short PRSs/PRUs''. Intuitively, this difference is because one can perform tomography on a quantum state or unitary of $O(\log \kappa)$ qubits to any desired precision $\eps$ in time $\poly(\kappa, \eps)$, so short PRSs/PRUs behave more like cryptographic objects with classical output.

We must also be careful about the type of adversary $\mathcal{A}$ considered in \Cref{def:prs,def:pru}. In this work, we consider security against non-uniform quantum algorithms with classical advice, which means that the adversary is allowed to be a different polynomial-time quantum algorithm for each setting of the security parameter $\kappa \in \Naturals$. Without loss of generality, such an adversary can always be assumed to take the form of a \textit{uniform} $\poly(\kappa)$-time quantum algorithm $\Alg\left(1^\kappa, x\right)$, where $x \in \{0,1\}^{\poly(\kappa)}$ is an advice string that depends only on $\kappa$.

\section{Approximate \texorpdfstring{$t$}{t}-designs}
\label{sec:designs}

We start by defining an $\eps$-approximate quantum (state) $t$-design, which is a distribution over quantum states that information-theoretically approximates the Haar measure over states. While there are several definitions of approximate $t$-designs used in the literature, for this work it is crucial that we use \textit{multiplicative} approximate designs for both states and unitaries, meaning that the designs approximate the first $t$ moments of the Haar measure to within a multiplicative $1 \pm \eps$ error (as opposed to additive error).

\begin{definition}[Approximate quantum design, cf.\ \cite{AE07-designs}]
\label{def:state_t_design}
A probability distribution $S$ over $\States(N)$ is an \emph{$\eps$-approximate quantum $t$-design} if:
$$(1 - \eps)\E_{\ket{\psi} \sim \sigma_N} \left[\ket{\psi}\bra{\psi}^{\otimes t}\right] \preceq \E_{\ket{\psi} \sim S} \left[\ket{\psi}\bra{\psi}^{\otimes t}\right] \preceq (1 + \eps)\E_{\ket{\psi} \sim \sigma_N} \left[\ket{\psi}\bra{\psi}^{\otimes t}\right].$$
\end{definition}

Similarly, we require $\eps$-approximate \textit{unitary} $t$-designs, which are approximations to the Haar measure over unitary matrices. 

\begin{definition}[Approximate unitary design \cite{BHH16-designs}]
\label{def:unitary_t_design}
A probability distribution $S$ over $\Unitaries(N)$ is an \emph{$\eps$-approximate unitary $t$-design} if:
$$(1 - \eps)\E_{U \sim \mu_N} \left[(U \cdot U^\dagger)^{\otimes t}\right] \preceq \E_{U \sim S} \left[(U \cdot U^\dagger)^{\otimes t}\right] \preceq (1 + \eps)\E_{U \sim \mu_N} \left[(U \cdot U^\dagger)^{\otimes t}\right].$$
\end{definition}

An important observation is that unitary designs give rise to state designs:

\begin{proposition}
\label{prop:unitary_to_state_design}
Let $S$ be an $\eps$-approximate unitary $t$-design over $\Unitaries(N)$. Then for any $\ket{\varphi} \in \States(N)$, $S\ket{\varphi}$ is an $\eps$-approximate quantum $t$-design.
\end{proposition}

\begin{proof}
We only establish the right inequality in \Cref{def:state_t_design}; the proof of the left inequality is similar. We have:
\begin{align*}
\E_{\ket{\psi} \sim S\ket{\varphi}} \left[\ket{\psi}\bra{\psi}^{\otimes t}\right]
&= 
\E_{U \sim S} \left[U^{\otimes t}
\left(\ket{\varphi}\bra{\varphi}^{\otimes t}\right) (U^\dagger)^{\otimes t}\right]\\
&\preceq (1 + \eps)\E_{U \sim \mu_N} \left[U^{\otimes t}
\left(\ket{\varphi}\bra{\varphi}^{\otimes t}\right) (U^\dagger)^{\otimes t}\right]\\
&= (1 + \eps)\E_{\ket{\psi} \sim \sigma_N} \left[\ket{\psi}\bra{\psi}^{\otimes t}\right],
\end{align*}
where the second line applies \Cref{def:unitary_t_design} and the definition of complete positivity, and the last line uses the invariance of the Haar measure. This implies that $S\ket{\varphi}$ satisfies \Cref{def:state_t_design}.
\end{proof}

Efficient constructions of approximate unitary $t$-designs over qubits are known, as below.

\begin{lemma}
\label{lem:efficient_unitary_designs}
For each $n, t \in \Naturals$ and $\eps > 0$, there exists $m \le \poly(n,t,\log \frac{1}{\eps})$ and a $\poly(n,t,\log \frac{1}{\eps})$-time classical algorithm $\mathcal{S}$ that takes as input a random string $x \sim \{0,1\}^m$ and outputs a description of a quantum circuit on $n$ qubits such that the circuits sampled from $\mathcal{S}$ form an $\eps$-approximate unitary $t$-design over $\Unitaries(2^n)$.
\end{lemma}

\begin{proof}[Proof sketch]
Fix an arbitrary universal quantum gate set $G$ with algebraic entries that is closed under taking inverses (e.g.\ $G = \{\mathrm{CNOT}, H, T, T^\dagger\}$). Brand\~{a}o, Harrow, and Horodecki \cite[Corollary 7]{BHH16-designs} show that $n$-qubit quantum circuits consisting of $\poly(n, t, \log \frac{1}{\eps})$ random gates sampled from $G$, applied to random pairs of qubits, form $\eps$-approximate unitary $t$-designs. So, $\mathcal{S}$ just has to sample from this distribution, which can be done with $\poly(n, t, \log \frac{1}{\eps})$ bits of randomness.
\end{proof}

Note that this also implies an efficient construction of $\eps$-approximate quantum (state) $t$-designs, by taking $\ket{\varphi} = \ket{0^n}$ in \Cref{prop:unitary_to_state_design}.

Essentially the only property we need of approximate $t$-designs is that they can be used in place of the Haar measure in any quantum algorithm that uses $t$ copies of a Haar-random state (or $t$ queries to a Haar-random unitary), and the measurement probabilities of the algorithm will change by only a small multiplicative factor.

\begin{lemma}
\label{lem:state_t_design_multiplicative}
Let $S$ be an $\eps$-approximate quantum $t$-design over $\States(N)$, and let $\Alg$ be an arbitrary quantum measurement. Then:
$$(1 - \eps)\Pr_{\ket{\psi} \sim \sigma_N}\left[\Alg\left(\ket{\psi}^{\otimes t}\right) = 1\right] \le \Pr_{\ket{\psi} \sim S}\left[\Alg\left(\ket{\psi}^{\otimes t}\right) = 1\right] \le (1 + \eps) \Pr_{\ket{\psi} \sim \sigma_N}\left[\Alg\left(\ket{\psi}^{\otimes t}\right) = 1\right].$$
\end{lemma}

\begin{proof}
Let $0 \preceq M \preceq \identity$ be the measurement performed at the end of the algorithm, so that on input a state $\rho$, $\Pr[\Alg(\rho) = 1] = \tr(M \rho)$. Then:
\begin{align*}
\Pr_{\ket{\psi} \sim S}\left[\Alg\left(\ket{\psi}^{\otimes t}\right) = 1\right] &=
\tr\left(M \E_{\ket{\psi} \sim S} \left[\ket{\psi}\bra{\psi}^{\otimes t}\right]\right)\\
&\le \tr\left(M (1 + \eps) \E_{\ket{\psi} \sim \mu_N} \left[\ket{\psi}\bra{\psi}^{\otimes t}\right]\right)\\
&= (1 + \eps)\Pr_{\ket{\psi} \sim \mu_N}\left[\Alg\left(\ket{\psi}^{\otimes t}\right) = 1\right],
\end{align*}
where the inequality in the second line follows from \Cref{def:state_t_design} and the fact that $A \preceq B$ implies $\tr(MB) - \tr(MA) = \tr(M (B - A)) \ge 0$, because the trace of a product of two positive semidefinite matrices is always nonnegative. This establishes the right inequality in the statement of the lemma; the left inequality follows by following the same steps with the other half of \Cref{def:state_t_design}.
\end{proof}

A similar statement can easily be shown for unitary designs when the only queries made by the algorithm are parallel:

\begin{lemma}
\label{lem:parallel_unitary_t_design_multiplicative}
Let $S$ be an $\eps$-approximate unitary $t$-design over $\Unitaries(N)$, and let $\Alg^U$ be a quantum algorithm whose only queries to $U \in \Unitaries(N)$ consist of a single application of $U^{\otimes t}$. Then:
$$(1 - \eps)\Pr_{U \sim \mu_N}\left[\Alg^U = 1\right] \le \Pr_{U \sim S}\left[\Alg^U = 1\right] \le (1 + \eps) \Pr_{U\sim \mu_N}\left[\Alg^U = 1\right].$$
\end{lemma}

\begin{proof}
Let $\rho$ be the input state of the algorithm, and let $0 \preceq M \preceq \identity$ be the measurement performed at the end of the algorithm, so that
\[
\Pr\left[\Alg^U = 1\right] = \tr\left(M (U^{\otimes t} \otimes I) \rho (U^{\otimes t} \otimes I)^\dagger\right).
\]
Then:
\begin{align*}
\Pr_{U \sim S}\left[\Alg^U = 1\right] &=
\E_{U \sim S} \left[\tr\left(M (U^{\otimes t} \otimes I) \rho (U^{\otimes t} \otimes I)^\dagger\right)\right]\\
&= \tr\left(M \E_{U \sim S}\left[(U^{\otimes t} \otimes I) \rho (U^{\otimes t} \otimes I)^\dagger\right]\right)\\
&\le \tr\left(M (1 + \eps)\E_{U \sim \mu_N}\left[(U^{\otimes t} \otimes I) \rho (U^{\otimes t} \otimes I)^\dagger\right]\right)\\
&= (1 + \eps)\E_{U \sim \mu_N} \left[\tr\left(M (U^{\otimes t} \otimes I) \rho (U^{\otimes t} \otimes I)^\dagger\right)\right]\\
&= (1 + \eps) \Pr_{U \sim \mu_N}\left[\Alg^U = 1\right],
\end{align*}
where the second and fourth lines hold by linearity of expectation, and the inequality in the third line follows from \Cref{def:unitary_t_design}. Specifically, the third line uses the fact that if $A \preceq B$ are superoperators and $\rho$ is positive semidefinite, then $\tr(M \cdot (B \otimes I)(\rho)) - \tr(M \cdot (A \otimes I)(\rho)) = \tr(M \cdot ((B - A) \otimes I)(\rho)) \ge 0$, because $B - A$ is completely positive, and the trace of a product of two positive semidefinite matrices is always nonnegative. This establishes the right inequality in the statement of the lemma; the left inequality follows by following the same steps with the other half of \Cref{def:unitary_t_design}.
\end{proof}

Using an idea from \cite{AMR20-random}, one can straightforwardly generalize \Cref{lem:parallel_unitary_t_design_multiplicative} to algorithms that make \textit{adaptive} queries to $U$, but not controlled-$U$. The key idea is that using quantum gate teleportation, one can simulate adaptive queries to a unitary transformation using parallel queries and postselection. For the next lemma that shows this, recall that the \textit{Choi state} of a unitary $U \in \Unitaries(N)$ is the state
\[
\ket{\phi_U} \coloneqq (U \otimes I)\ket{\Phi_N},
\]
where $\ket{\Phi_N} \coloneqq \frac{1}{\sqrt{N}} \sum_{i=1}^N \ket{i}\ket{i}$ is the standard maximally entangled state.

\begin{lemma}
\label{lem:parallel_simulation_postselection}
Let $\Alg^U$ be a quantum algorithm that makes $t$ adaptive queries to $U \in \Unitaries(N)$ (but not controlled-$U$ or $U^\dagger$). Then there exists an algorithm $\AlgB\left(\ket{\phi_U}^{\otimes t}\right)$ such that:
\[
\Pr\left[\AlgB\left(\ket{\phi_U}^{\otimes t}\right) = 1\right] = \frac{\Pr\left[\Alg^U = 1\right]}{N^{2t}}.
\]
\end{lemma}

\begin{proof}
Consider the following circuit:
\vspace{0.5em}
\[
\Qcircuit @C=1em @R=0.8em @!R {
\lstick{} & \qw \inputgroupv{1}{2}{.7em}{.7em}{\ket{\phi_U}} & \qw & \rstick{U\ket{\psi}} \qw\\
\lstick{} & \multimeasureD{1}{\ket{\Phi_N}}\\
\lstick{\ket{\psi}} & \ghost{\ket{\Phi_N}}\\
}
\]
In words, given an unknown state $\ket{\psi}$ on the bottom register, this circuit initializes $\ket{\phi_U}$ on the top two registers and then postselects on measuring $\ket{\Phi_N}$ on the bottom two registers. Conditioned on postselection succeeding, the output of the top register is exactly $U\ket{\psi}$, as the following calculation shows:
\begin{align*}
(I \otimes \bra{\Phi_N})(\ket{\phi_U} \otimes \ket{\psi})
&=
(U \otimes \bra{\Phi_N})(\ket{\Phi_N} \otimes \ket{\psi})\\
&= \frac{1}{N}\left(U \otimes \sum_{i=1}^N \bra{i}\bra{i} \right) \left(\sum_{i=1}^N \sum_{j=1}^N \psi_j\ket{i}\ket{i}\ket{j} \right)\\
&= \frac{1}{N}\left(U \otimes \sum_{i=1}^N \bra{i}\bra{i} \right) \left(\sum_{i=1}^N \psi_i\ket{i}\ket{i}\ket{i} \right)\\
&= \frac{U}{N}\sum_{i=1}^n \psi_i \ket{i}\\
&= \frac{U\ket{\psi}}{N}.
\end{align*}
This also shows that postselection succeeds with probability $\frac{1}{N^2}$, independent of $U$ or $\ket{\psi}$.

Let $\AlgB$ simulate each query that $\Alg$ makes to $U$ using the above circuit and a copy of $\ket{\phi_U}$. Conditioned on all $t$ postselection steps succeeding, which occurs with probability $\frac{1}{N^{2t}}$, the output state of $\AlgB$ is exactly the same as the output state of $\Alg$. Thus, 
\[
\Pr\left[\AlgB\left(\ket{\phi_U}^{\otimes t}\right) = 1\right] = \frac{\Pr\left[\Alg^U = 1\right]}{N^{2t}}.\qedhere
\]
\end{proof}

An interesting question is whether \Cref{lem:parallel_simulation_postselection} can be generalized to algorithms $\Alg^U$ that can make queries to both $U$ and $U^\dagger$.
In a sense, we are just asking whether queries to $U^\dagger$ can be simulated by a combination of queries to $U$ and postselection. Notably, in the case that $U$ is a real orthogonal matrix, this is possible, because $\ket{\phi_U}$ is equivalent to $\ket{\phi_{U^\top}}$ up to swapping the two registers, and $U^\top = U^\dagger$ for real matrices $U$.

The generalization of \Cref{lem:parallel_unitary_t_design_multiplicative} to algorithms that make adaptive queries now follows.

\begin{lemma}
\label{lem:adaptive_unitary_t_design_multiplicative}
Let $S$ be an $\eps$-approximate unitary $t$-design over $\Unitaries(N)$, and let $\Alg^U$ be a quantum algorithm that makes $t$ queries to $U$ (but not controlled-$U$ or $U^\dagger$). Then:
$$(1 - \eps)\Pr_{U \sim \mu_N}\left[\Alg^U = 1\right] \le \Pr_{U \sim S}\left[\Alg^U = 1\right] \le (1 + \eps) \Pr_{U\sim \mu_N}\left[\Alg^U = 1\right].$$
\end{lemma}

\begin{proof}
Let $\AlgB\left(\ket{\phi_U}^{\otimes t}\right)$ be the algorithm from \Cref{lem:parallel_simulation_postselection}. Then:
\begin{align*}
\Pr_{U \sim S}\left[\Alg^U = 1\right]
&= N^{2t} \Pr_{U \sim S}\left[\AlgB\left(\ket{\phi_U}^{\otimes t}\right) = 1\right]\\
&\le (1 + \eps)N^{2t} \Pr_{U \sim \mu_N}\left[\AlgB\left(\ket{\phi_U}^{\otimes t}\right) = 1\right]\\
&= (1 + \eps)\Pr_{U \sim \mu_N}\left[\Alg^U = 1\right],
\end{align*}
where the first and last lines use \Cref{lem:parallel_simulation_postselection}, and the second line applies \Cref{lem:parallel_unitary_t_design_multiplicative}. The other inequality in the lemma follows by similar steps, using the other half of \Cref{lem:parallel_unitary_t_design_multiplicative}.
\end{proof}

Finally, we wish to extend \Cref{lem:adaptive_unitary_t_design_multiplicative} to algorithms that can also make queries to controlled-$U$. It is not at all obvious that this is possible, because it is well known that queries to controlled-$U$ cannot be simulated efficiently using queries to $U$ \cite{GST20-if}. At the same time, it is unclear how one should pick the phase on controlled-$U$, because the definition of an approximate unitary $t$-design (\Cref{def:unitary_t_design}) ``forgets'' the global phase of $U$. We address both problems simultaneously by observing that the argument using \Cref{lem:parallel_simulation_postselection} can still go through if we choose the phase of $U$ \textit{randomly}. In order for this argument to hold, we assume that our approximate unitary $t$-design $S$ is \textit{phase-invariant}, which we take to mean that for any $U$ sampled from the design, $U$ and $\omega U$ are chosen with the same probability, where $\omega = e^{\frac{2 \pi i}{t+1}}$ is a primitive $(t+1)$th root of unity.\footnote{Actually, our proof only requires that the design be invariant under \textit{any} group of complex units whose first $t$ moments are the same as the uniform measure over complex units. We choose the group generated by $\omega$ only because it is the smallest group with this property.} Note that we can make any unitary design phase-invariant via multiplication by a uniformly random $(t+1)$th root of unity.

\begin{lemma}
\label{lem:controlled_adaptive_unitary_t_design_multiplicative}
Let $S$ be a phase-invariant $\eps$-approximate unitary $t$-design over $\Unitaries(N)$, and let $\Alg^U$ be a quantum algorithm that makes $t$ queries to $U$ (including controlled-$U$, but not $U^\dagger$). Then:
$$(1 - \eps)\Pr_{U \sim \mu_N}\left[\Alg^U = 1\right] \le \Pr_{U \sim S}\left[\Alg^U = 1\right] \le (1 + \eps) \Pr_{U\sim \mu_N}\left[\Alg^U = 1\right].$$
\end{lemma}

\begin{proof}
Let $\langle \omega \rangle$ be the group of $(t+1)$th roots of unity. We first claim that given $t$ copies of the Choi state of $U$ (i.e.\ $\ket{\phi_U}^{\otimes t}$), we can generate the state
\[
\sigma_{U,t} \coloneqq \E_{\varphi \sim \langle \omega \rangle} \left[\ket{\phi_{I \oplus \varphi U}} \bra{\phi_{I \oplus \varphi U}}^{\otimes t} \right],
\]
which is $t$ copies of the Choi state of controlled-$\varphi U$, averaged over all phases $\varphi$ that are $(t+1)$th roots of unity. Assuming this claim holds, then
\begin{align*}
\Pr_{U \sim S}\left[\Alg^U = 1\right]
&= \E_{U \sim S} \left[ \Pr_{\varphi \sim \langle \omega \rangle}\left[\Alg^{\varphi U} = 1\right] \right]\\
&= \E_{U \sim S}\left[(2N)^{2t} \Pr_{\varphi \sim \langle \omega \rangle}\left[\AlgB\left(\ket{\phi_{I \oplus \varphi U}}^{\otimes t}\right) = 1\right]\right]\\
&= (2N)^{2t} \Pr_{U \sim S}\left[\AlgB\left(\sigma_{U,t} \right) = 1\right]\\
&\le (1 + \eps)(2N)^{2t} \Pr_{U \sim \mu_N}\left[\AlgB\left(\sigma_{U,t} \right) = 1\right]\\
&= (1 + \eps)\Pr_{U \sim \mu_N}\left[\Alg^U = 1\right],
\end{align*}
where the first line applies the phase-invariance of the design, the second line holds for the algorithm $\AlgB$ defined in \Cref{lem:parallel_simulation_postselection}, the third line holds by the definition of $\sigma_{U, t}$, the fourth line applies \Cref{lem:parallel_unitary_t_design_multiplicative} and the claim that $\sigma_{U, t}$ can be prepared from $\ket{\phi_U}^{\otimes t}$, and the last line appeals to \Cref{lem:parallel_simulation_postselection} and the phase-invariance of the Haar measure. The other inequality in the statement of the lemma follows from similar steps, so for the rest of the proof we turn to proving the claim. We remark that the remainder of the proof is closely related to the proof of \cite[Lemma 15]{Kre21-tsirelson}, which also involves showing how to prepare a state obtained by averaging over phases.

Recall that $\ket{\phi_{I \oplus \varphi U}}$ is defined by 
\[
\ket{\phi_{I \oplus \varphi U}} \coloneqq \frac{1}{\sqrt{2N}} \left(\sum_{i=1}^N \ket{i}\ket{i} + \sum_{i=N}^{2N} \varphi U\ket{i}\ket{i}\right).
\]
By identifying $[2N]$ with $\{0,1\} \times [N]$, we can also write this in the form
\[
\ket{\phi_{I \oplus \varphi U}} \equiv \frac{1}{\sqrt{2N}} \left(\sum_{i=1}^N \ket{0}\ket{i}\ket{0}\ket{i} + \sum_{i=1}^{N} \varphi U\ket{1}\ket{i}\ket{1}\ket{i}\right).
\]
Swapping the ordering of the second and third registers, we identify this state with
\[
\ket{\phi_{I \oplus \varphi U}} \equiv \frac{\ket{00}\ket{\Phi_N} + \varphi\ket{11}\ket{\phi_U}}{\sqrt{2}}.
\]
For convenience, define $\ket{\psi_0} \coloneqq \ket{00}\ket{\Phi_N}$ and $\ket{\psi_1} \coloneqq \ket{11}\ket{\phi_U}$. For $x \in \{0,1\}^t$, we extend this notation via $\ket{\psi_x} \coloneqq \bigotimes_{i=1}^t \ket{\psi_{x_i}}$. For $i = 0,\ldots,t$, define
\[
\ket{\overline{\psi}_i} \coloneqq \binom{t}{i}^{-1/2} \sum_{\substack{ x \in \{0,1\}^t: \\ \wt(x) = i}} \ket{\psi_x}
\]
We claim that $\sigma_{U,t} = \rho_{U,t}$, where
\[
\rho_{U,t} \coloneqq \sum_{i = 1}^t \frac{\binom{t}{i}}{2^t} \ket{\overline{\psi}_i}\bra{\overline{\psi}_i}.
\]
To see why, note that both $\sigma_{U,t}$ and $\rho_{U,t}$ are only supported on the orthonormal basis $\{\ket{\psi_x} : x \in \{0,1\}^t\}$, so it suffices to show $\bra{\psi_x}\sigma_{U,t}\ket{\psi_y} = \bra{\psi_x}\rho_{U,t}\ket{\psi_y}$ for each $x, y \in \{0,1\}^t$. Observe that
\begin{align*}
\bra{\psi_x}\sigma_{U,t}\ket{\psi_y}
&= \E_{\varphi \sim \langle \omega \rangle} \left[ \left(\prod_{i=1}^t \frac{1}{\sqrt{2}}\varphi^{x_i}\right) \left(\prod_{i=1}^t \frac{1}{\sqrt{2}}\varphi^{-y_i}\right) \right]\\
&= \frac{1}{2^t} \cdot \E_{\varphi \sim \langle \omega \rangle} \left[\varphi^{\wt(x) - \wt(y)} \right]\\
&= \begin{cases}
\frac{1}{2^t} & \wt(x) = \wt(y)\\
0 & \wt(x) \neq \wt(y).
\end{cases}
\end{align*}
Above, we are using the fact that the first $t$ moments of $\langle \omega \rangle$ are the same as the first $t$ moments of the full group of complex units.

Clearly, $\bra{\psi_x}\rho_{U,t}\ket{\psi_y} = \bra{\psi_x}\sigma_{U,t}\ket{\psi_y} = 0$ whenever $\wt(x) \neq \wt(y)$, because $\rho_{U,t}$ is a mixture of pure states that are each a superposition over basis states that have the same Hamming weight. On the other hand, when $\wt(x) = \wt(y) = i$, we have
\[
\bra{\psi_x}\rho_{U,t}\ket{\psi_y}
= \frac{\binom{t}{i}}{2^n} \braket{\psi_x|\overline{\psi}_i}\braket{\overline{\psi}_i|\psi_y} = \frac{1}{2^t},
\]
as claimed.

To complete the proof, we only have to show how to produce the state $\rho_{U,t}$ from $t$ copies of $\ket{\phi_U}$. Since $\rho_{U,t}$ is a probabilistic mixture of the $\ket{\overline{\psi}_i}$s, it suffices to show how to produce $\ket{\overline{\psi}_i}$. Begin by initializing the state
\[
\binom{t}{i}^{-1/2} \sum_{\substack{ x \in \{0,1\}^t: \\ \wt(x) = i}} \ket{x}\ket{\psi_{1^i 0^{t-i}}},
\]
which can be viewed as a tensor product of $i$ copies of $\ket{\phi_U}$ and a fixed state independent of $\ket{\phi_U}$. Using permutations on the second register controlled on the first register, the above state can be mapped to
\[
\binom{t}{i}^{-1/2} \sum_{\substack{ x \in \{0,1\}^t: \\ \wt(x) = i}} \ket{x}\ket{\psi_{x}}.
\]
Finally, we can erase the $\ket{x}$ by, for each $i$, flipping the $i$th bit of the first register controlled on the $i$th part of the second register being orthogonal to $\ket{\psi_0}$. This works because $\ket{\psi_0}$ is a known state and $\ket{\psi_1}$ is orthogonal to $\ket{\psi_0}$. Doing so leaves us with:
\[
\binom{t}{i}^{-1/2} \sum_{\substack{ x \in \{0,1\}^t: \\ \wt(x) = i}} \ket{0^t}\ket{\psi_{x}},
\]
which is just $\ket{0^t}\ket{\overline{\psi}_i}$.
\end{proof}

\section{Breaking pseudorandomness with a classical oracle}

In this section, we prove that a polynomial-time quantum algorithm with a $\mathsf{PP}$ oracle can distinguish a pseudorandom state from a Haar-random state. First, we need a lemma about the overlap between a fixed state $\ket{\varphi}$ and a Haar-random state $\ket{\psi}$.

\begin{lemma}
\label{lem:haar_overlap_small}
Let $\ket{\varphi} \in \States(N)$, and let $\eps > 0$. Then:\footnote{As observed in \cite{CGGH+23-one-way}, an earlier version of this work \cite{conf-version} gave the incorrect bound $e^{-\eps N}$ here, which was carried over from \cite[Equation (14)]{BHH16-designs}.}
$$\Pr_{\ket{\psi} \sim \sigma_{N}} \left[ |\braket{\psi|\varphi}|^2 \ge \eps \right] \le e^{-\eps (N-1)} .$$
\end{lemma}

\begin{proof}
It is well known that if a Haar-random state $\ket{\psi}$ is measured in any fixed basis (say, a basis containing $\ket{\varphi}$), the measurement probabilities are uniform over the $N$-dimensional probability simplex, or equivalently sampled according to a standard Dirichlet distribution. $|\braket{\psi|\varphi}|^2$ is one of the marginals of this Dirichlet distribution, and hence it is distributed as $|\braket{\psi|\varphi}|^2 \sim \mathrm{Beta}(1, N - 1)$. The probability density function of this distribution is given by
\[
p(x) = (N - 1)(1 - x)^{N - 2}.
\]
It follows that
\begin{align*}
\Pr_{\ket{\psi} \sim \sigma_{N}} \left[ |\braket{\psi|\varphi}|^2 \ge \eps \right] &= \int_\eps^1 (N - 1)(1 - x)^{N - 2} \mathrm{d}x\\
&= (1 - \eps)^{N - 1}\\
&\le e^{-\eps (N - 1)}.\qedhere
\end{align*}
\end{proof}

The formal statement of our result is below.

\begin{theorem}
\label{thm:pp_oracle}
For any PRS ensemble $\{\ket{\varphi_k}\}_{k \in \Key}$ of $n$-qubit states with security parameter $\kappa$ satisfying $n = \omega(\log \kappa)$, there exists a $\mathsf{PP}$ language $\Lang$, a $\poly(\kappa)$-time quantum algorithm $\Alg^\Lang$, and $T = \poly(\kappa)$ such that the following holds. Let $X \sim \{0,1\}$ be a uniform random bit. Let $\ket{\psi}$ be sampled uniformly from the PRS ensemble if $X = 0$, and otherwise let $\ket{\psi}$ be sampled from the Haar measure $\sigma_{2^n}$ if $X = 1$. Then we have:
$$\Pr_{X,\ket{\psi}}\left[\Alg^{\Lang}\left(1^\kappa, \ket{\psi}^{\otimes T}\right) = X\right] \ge 0.995.$$
\end{theorem}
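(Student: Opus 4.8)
The plan is to have $\Alg$ perform a single-copy measurement in a random Clifford basis on each of the $T$ copies of $\ket{\psi}$, record the classical outcomes, and feed them to a $\mathsf{PP}$ oracle that implements the Bayes-optimal decision rule for the resulting statistical problem. Concretely, $\Alg$ samples independent Cliffords $C_1,\dots,C_T$, applies $C_i$ to the $i$-th copy, measures in the computational basis to get $b_i$, and forms the record $y = (C_1,b_1,\dots,C_T,b_T)$. I would take $T = O(\kappa)$ as dictated by \Cref{thm:HKP} applied to the observable list $\{\ket{\varphi_k}\bra{\varphi_k}\}_{k\in\Key}$, where $M=|\Key|=2^\kappa$, $\max_k \tr(O_k^2)=1$, and the estimation error $\eps_{\mathrm{est}}$ and failure probability $\delta$ are small constants. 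The point, following \cite{HKP20}, is that this measurement is efficient even though the subsequent analysis need not be.

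First I would show that $y$ determines $X$ with high probability. By \Cref{thm:HKP}, the classical-shadow estimator applied to $y$ produces numbers $\hat o_k$ with $|\hat o_k - |\braket{\varphi_k|\psi}|^2| \le \eps_{\mathrm{est}}$ for all $k$ simultaneously, except with probability $\delta$. Consider the deterministic rule $R_{\mathrm{HKP}}(y)$ that outputs $0$ if some $\hat o_k \ge \tfrac12$ and outputs $1$ otherwise. When $X=0$ we have $\ket{\psi}=\ket{\varphi_{k^\ast}}$, so $\hat o_{k^\ast}\ge 1-\eps_{\mathrm{est}}$ and the rule is correct with probability $\ge 1-\delta$; when $X=1$, \Cref{lem:haar_overlap_small} with a union bound over the $2^\kappa$ keys gives $\max_k |\braket{\varphi_k|\psi}|^2 < \eps_{\mathrm{est}}$ except with negligible probability (using that $2^n$ is superpolynomial in $\kappa$, as any secure PRS requires), so every $\hat o_k < 2\eps_{\mathrm{est}} < \tfrac12$ and the rule is again correct. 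Hence $R_{\mathrm{HKP}}$ errs with probability at most $\delta + \negl(\kappa)$.

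The crux is a decision-theoretic observation. Conditioned on the Clifford choices, the outcome vector is distributed with likelihood $L_0(y) = \E_{k\leftarrow\Key}\prod_i |\braket{b_i|C_i|\varphi_k}|^2$ under $X=0$ and $L_1(y) = \E_{\ket{\psi}\leftarrow\sigma_{2^n}}\prod_i |\braket{b_i|C_i|\psi}|^2$ under $X=1$. Under the uniform prior on $X$, the Bayes-optimal rule outputs $0$ exactly when $L_0(y)\ge L_1(y)$, and its error probability lower-bounds that of \emph{every} decision rule based on $y$, in particular $R_{\mathrm{HKP}}$ (see e.g. \cite[Chapter 4.4.1]{Ber13}); so the ideal rule ``$L_0\ge L_1$'' also errs with probability at most $\delta+\negl(\kappa)$. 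Since $L_1$ is an intractable Haar integral, I would replace it by $\widetilde L_1(y) = \E_{\ket{\psi}\leftarrow S}\prod_i |\braket{b_i|C_i|\psi}|^2$, where $S$ is the efficiently samplable $\eps_{\mathrm{des}}$-approximate state $T$-design from \Cref{lem:efficient_unitary_designs}. Viewing ``apply each $C_i$ and accept iff every outcome equals $b_i$'' as one measurement on $T$ copies, \Cref{fct:state_t_design_multiplicative} yields $\widetilde L_1(y)\in[(1-\eps_{\mathrm{des}})L_1(y),(1+\eps_{\mathrm{des}})L_1(y)]$ pointwise. Comparing the rules ``$L_0\ge\widetilde L_1$'' and ``$L_0\ge L_1$'' only on the inputs where they disagree, where $L_0$ is then sandwiched within a $(1\pm\eps_{\mathrm{des}})$ factor of $L_1$, a short calculation shows the substitution increases the error (still measured against true Haar inputs) by at most $\eps_{\mathrm{des}}/2$. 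Taking $\delta$ and $\eps_{\mathrm{des}}$ to be small constants drives the total error below $0.005$.

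It remains to define $\Lang=\{y : L_0(y)\ge\widetilde L_1(y)\}$ and let $\Alg^\Lang$ output $0$ iff $y\in\Lang$. Both $L_0$ and $\widetilde L_1$ are averages, over index sets of polynomial bit-length (the keys $k\in\Key$ and the design's random seeds), of acceptance probabilities of polynomial-size quantum circuits that prepare $\ket{\varphi_k}^{\otimes T}$ via the PRS generator $G$ (respectively a design element), apply the $C_i$, and test the outcomes. Clearing the common dyadic denominators, the difference of these two normalized exponential sums is a $\mathsf{GapP}$ function of $y$, and $\mathsf{PP}$ decides the sign of a $\mathsf{GapP}$ function, so $\Lang\in\mathsf{PP}$; equivalently, one may estimate the posterior $L_0/(L_0+\widetilde L_1)$ by a postselected algorithm that prepares from the $X=0$ or $X=1$ branch and postselects on reproducing $y$, appealing to $\mathsf{PostBQP}=\mathsf{PP}$ \cite{Aar05}. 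I expect the main obstacle to be precisely this interface between the statistical and computational parts: arranging the Haar-to-design replacement so that $\widetilde L_1$ is simultaneously multiplicatively faithful to $L_1$ (so the $\eps_{\mathrm{des}}/2$ error bound survives) and a finite, efficiently generated sum (so the comparison lands in $\mathsf{PP}$), while checking that the imperfect maximum-likelihood rule is robust to the near-tie inputs the oracle may resolve arbitrarily.
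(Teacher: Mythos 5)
Your proposal follows essentially the same route as the paper's proof: single-copy random Clifford measurements, \Cref{thm:HKP} to establish the \emph{existence} of a good classical post-processing rule, optimality of the Bayes decision rule to pass to the (likelihood-ratio) posterior test, a multiplicative approximate state $T$-design in place of the Haar measure, and a $\mathsf{PP}$ oracle to implement the resulting comparison. The only differences are cosmetic: the paper controls the design-substitution error via a Markov/confidence argument on the posterior $\max_i p_i$ so that the $\mathsf{PostBQP}$ promise is met, whereas you bound the disagreement of the two likelihood-ratio tests pointwise (your $\eps_{\mathrm{des}}/2$ calculation is correct) and may equivalently phrase the oracle as the sign of a $\mathsf{GapP}$ function.
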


\begin{proof}
We first describe $\Alg$. For some $T$ to be chosen later, on input $\ket{\psi}^{\otimes T}$, $\Alg$ measures each copy of $\ket{\psi}$ in a different randomly chosen Clifford basis. Call the list of measurement bases $b = (b_1, b_2,\ldots, b_T)$ and the measurement results $c = (c_1, c_2, \ldots, c_T)$. $\Alg$ then feeds $(b,c)$ into a single query to $\Lang$, and outputs the result of the query. This takes polynomial time because there exists an $O(n^3)$-time algorithm to sample a random $n$-qubit Clifford unitary, and this algorithm also produces an implementation of the unitary with $O(n^2 / \log n)$ gates \cite{KS14-random-clifford,AG04-stabilizer}. 

The $\mathsf{PP}$ language $\Lang$ that we choose is most easily described in terms of a $\mathsf{PromisePostBQP}$ algorithm $\mathcal{B}(b, c)$ (i.e.\ a postselected polynomial-time quantum algorithm, as in \Cref{def:PostBQP}), by the equivalence $\mathsf{PromisePostBQP} = \mathsf{PromisePP}$ (\Cref{lem:postbqp-pp}). That is, we specify an algorithm $\mathcal{B}(b, c)$ that outputs a trit in $\{0, 1, *\}$, and this algorithm defines a promise problem $\Pi \in \mathsf{PromisePostBQP}$ as follows:
\begin{enumerate}[(i)]
    \item If $\Pr[\mathcal{B}(b, c) \in \{0, 1\}] > 0$ and $\Pr[\mathcal{B}(b, c) = 1 \mid \mathcal{B}(b, c) \in \{0, 1\}] \ge \frac{2}{3}$, then $\Pi(b, c) = 1$.
    \item If $\Pr[\mathcal{B}(b, c) \in \{0, 1\}] > 0$ and $\Pr[\mathcal{B}(b, c) = 1 \mid \mathcal{B}(b, c) \in \{0, 1\}] \le \frac{1}{3}$, then $\Pi(b, c) = 0$.
    \item Otherwise, $\Pi(b, c) = \bot$.
\end{enumerate}
By Aaronson's theorem (\Cref{lem:postbqp-pp}), $\Pi$ is also in $\mathsf{PromisePP}$. Because $\mathsf{PP}$ is a syntactic class, the promise problem $\Pi$ can be extended to a language $\Lang \in \mathsf{PP}$.

Let $S$ be a $\frac{1}{17}$-approximate $n$-qubit quantum $T$-design (\Cref{def:state_t_design}) such that a state can be drawn from $S$ in $\poly(\kappa)$ time (because $n,T \le \poly(\kappa)$, the existence of such a design follows from \Cref{prop:unitary_to_state_design} and \Cref{lem:efficient_unitary_designs}). $\mathcal{B}$ begins by initializing the state:
$$\hat{\rho} \coloneqq \frac{1}{2}\ket{0}\bra{0} \otimes \E_{k \sim \Key}\left[\ket{\varphi_k}\bra{\varphi_k}^{\otimes T}\right] + \frac{1}{2}\ket{1}\bra{1} \otimes \E_{\ket{\phi} \sim S}\left[\ket{\phi}\bra{\phi}^{\otimes T}\right].$$
$\mathcal{B}$ measures all but the leftmost qubit of $\hat{\rho}$ in the basis given by $b$, and postselects on observing $c$ (i.e.\ $\mathcal{B}$ outputs $*$ if the measurements are not equal to $c$). Finally, conditioned on postselection succeeding, $\mathcal{B}$ measures and outputs the result of the leftmost qubit that was not measured.

It remains to show that $\Alg$ distinguishes the pseudorandom and Haar-random state ensembles. For the purpose of this analysis, it will be convenient to view $\hat{\rho}$ as an approximation to the state:
$$\rho \coloneqq \frac{1}{2}\ket{0}\bra{0} \otimes \E_{k \sim \Key}\left[\ket{\varphi_k}\bra{\varphi_k}^{\otimes T}\right] + \frac{1}{2}\ket{1}\bra{1} \otimes \E_{\ket{\phi} \sim \sigma_{2^n}}\left[\ket{\phi}\bra{\phi}^{\otimes T}\right],$$
where the $\eps$-approximate $T$-design $S$ is replaced by the Haar measure $\sigma_{2^n}$. Indeed, we will essentially argue the algorithm's correctness if the state $\hat{\rho}$ is replaced by $\rho$, and then argue that this implies the correctness of the actual algorithm.

For each $k \in \Key$, define $O_k \coloneqq \ket{\varphi_k}\bra{\varphi_k}$. Note that if $X = 0$ (i.e.\ $\ket{\psi}$ is pseudorandom), there always exists a $k$ such that $\tr(O_k \ket{\psi}\bra{\psi}) = 1$, namely whichever $k$ satisfies $\ket{\psi} = \ket{\varphi_k}$. On the other hand, by \Cref{lem:haar_overlap_small} and a union bound, if $X = 1$ (i.e.\ $\ket{\psi}$ is Haar-random), $\tr(O_k \ket{\psi}\bra{\psi}) < \frac{1}{3}$ for every $k \in \Key$, except with probability at most $2^\kappa \cdot e^{-(2^n - 1)/3}$ over $\ket{\psi}$. This probability is negligible in $\kappa$ because $n = \omega(\log \kappa)$, by assumption.

If we choose $M = |\Key| = 2^\kappa$, $\eps = \frac{1}{3}$, and $\delta = 0.001 - 2^k \cdot e^{-(2^n - 1)/3}$, then by \Cref{thm:HKP} there exists a quantum algorithm that takes as input the results $(b, c)$ of $T = O(\kappa)$ single-copy random Clifford measurements of $\ket{\psi}$, uses the measurement results to estimate $\tr(O_k \ket{\psi}\bra{\psi})$ for each $k$ up to additive error $\frac{1}{3}$, and is correct with probability at least $0.999 + 2^\kappa \cdot e^{-(2^n - 1)/3}$. In particular, this algorithm can distinguish the pseudorandom ensemble from the Haar-random ensemble, by checking if there exists a $k$ such that the estimate for $\tr(O_k \ket{\psi}\bra{\psi})$ is at least $\frac{2}{3}$. Call this algorithm $\mathcal{C}$, so that $\Pr[\mathcal{C}(b, c) = X] \ge 0.999$.

We will not actually use $\mathcal{C}$, but only its existence. By the optimality of the Bayes decision rule (\Cref{lem:bayes_decision}), because $\mathcal{C}$ uses $(b, c)$ to identify a state $\ket{\psi}$ as either Haar-random or pseudorandom with probability $0.999$, an algorithm that computes the maximum a posteriori estimate of $X$ also succeeds with probability at least $0.999$. In symbols, let $p_i = \Pr[X = i \mid b, c]$, which we view as a random variable (depending on $b$ and $c$) for each $i \in \{0,1\}$. Then $\Pr\left[\argmax_i p_i = X\right] \ge 0.999$.

Next, observe that $\Pr\left[\argmax_i p_i = X\right] = \E\left[\Pr\left[\argmax_i p_i = X | b, c\right]\right] = \E\left[\max_i p_i\right]$, by the law of total expectation. So, by Markov's inequality (and the fact that $\max_i p_i \le 1$), we know $\Pr\left[\max_i p_i \ge \frac{3}{4}\right] \ge 0.996$. In other words, the Bayes decision rule is usually at least $75\%$ confident in its predictions, so to speak.

Notice that $p_i$ equals the probability (conditioned on postselection succeeding) that $\mathcal{B}$ outputs $i$ if it starts with $\rho$ in place of $\hat{\rho}$. For $i \in \{0,1\}$, define $\hat{p}_i$ analogously as the postselcted output probabilities of $\mathcal{B}$ itself: $\hat{p}_i \coloneqq \Pr\left[\mathcal{B}(b, c) = i \mid \mathcal{B}(b, c) \in \{0,1\}\right]$. To argue that $\Alg$ is correct with $0.995$ probability, it suffices to show that 
\[
\Pr\left[\max_i \hat{p}_i \ge \frac{2}{3} \land \argmax_i \hat{p}_i = X\right] \ge 0.995,\]
as in this case the $\mathsf{PromisePostBQP}$ promise is satisfied and the output of $\Lang$ agrees with $X$. We have that:
\begin{align*}
\Pr\left[\max_i \hat{p}_i \ge \frac{2}{3} \land \argmax_i \hat{p}_i = X\right] &\ge \Pr\left[\max_i p_i \ge \frac{3}{4} \land \argmax_i p_i = X\right]\\
&\ge 1 - \Pr\left[\max_i p_i < \frac{3}{4}\right] - \Pr\left[\argmax_i p_i \neq X\right]\\
&\ge 0.996 - \Pr\left[\argmax_i p_i \neq X\right]\\
&\ge 0.995.
\end{align*}
Above, the first inequality follows from the assumption that $S$ is a $\frac{1}{17}$-approximate $T$-design, because the acceptance probability of a postselected quantum algorithm can be viewed as the ratio of two probabilities:
$$\hat{p_i} = \frac{\Pr[\mathcal{B}(b, c) = i]}{\Pr[\mathcal{B}(b, c) \in \{0,1\}]}.$$
\Cref{lem:state_t_design_multiplicative} implies that both the numerator and denominator change by at most a multiplicative factor of $1 \pm \frac{1}{17}$ when switching between $\rho$ and $\hat{\rho}$. So, if $p_i \ge \frac{3}{4}$, then $\hat{p_i} \ge \frac{3}{4} \cdot \frac{1 - \frac{1}{17}}{1 + \frac{1}{17}} = \frac{2}{3}$. The second inequality follows by a union bound, and the remaining inequalities were established above.
\end{proof}

We remark that the above theorem also holds relative to all oracles, in the sense that if the state generation algorithm $G$ in the definition of the PRS (\Cref{def:prs}) queries a classical or quantum oracle $\UOracle$, then the corresponding ensemble of states can be distinguished from Haar-random by a polynomial-time quantum algorithm with a $\mathsf{PromisePostBQP}^\UOracle$ oracle.

\section{Pseudorandomness from a quantum oracle}

In this section, we construct a quantum oracle $(\UOracle, \COracle)$ relative to which $\mathsf{PromiseBQP} = \mathsf{PromiseQMA}$ and PRUs exist. Let us first describe the oracle, which consists of two parts: a quantum oracle $\UOracle$ and a classical oracle (i.e.\ a language) $\COracle$.

\subsection{Definition of the oracle}

The oracle $\UOracle$ is a sequence of unitaries $\{\UOracle_n\}_{n \in \Naturals}$, where each $\UOracle_n$ is a direct sum of $2^n$ different Haar-random $n$-qubit unitaries. In other words, for each $n$ we sample $\UOracle_n \sim \mu_{2^n}^{2^n}$. We denote this distribution over $\UOracle$ by $\UOracle \sim \mathcal{D}$.

We construct the language $\COracle$ deterministically and independently of $\UOracle$. We specify the language in stages: first we define $\COracle$'s behavior on the $1$-bit strings, then the $2$-bit strings, then the $3$-bit strings, and so on. For a string $x$, we define $\COracle(x) = 1$ if the following all hold:
\begin{enumerate}[(1)]
    \item $x$ is a description of a quantum oracle circuit $\Ver^{\overline{\UOracle}, \COracle}(\ket{\psi})$ that takes a quantum state $\ket{\psi}$ as input, and makes queries to a quantum oracle $\overline{\UOracle}$ and the classical oracle $\COracle$. Note that $\ket{\psi}$ and $\overline{\UOracle}$ are not part of the description of $\Ver$; they are auxiliary inputs.
    \item $\mathcal{V}$ runs in time at most $|x| - 1$, and hence can query $\COracle$ on inputs of length at most $|x| - 1$.
    \item The average acceptance probability of $\Ver$ (viewed as a $\mathsf{QMA}$ verifier) is greater than $1/2$ when averaged over $\overline{\UOracle} \sim \mathcal{D}$. In symbols, we mean precisely:
    \[\E_{\overline{\UOracle} \sim \mathcal{D}}\left[ \max_{\ket{\psi}} \Pr[\Ver^{\overline{\UOracle}, \COracle}(\ket{\psi}) = 1 ] \right] > \frac{1}{2}.\]
\end{enumerate}

Condition (2) guarantees that $\COracle$ is not circularly defined, because the quantity in condition (3) depends only on the previously constructed parts of the oracle. Notice also the care we have used in our notation: $\overline{\UOracle}$ is merely used to take an average in the definition of $\COracle$; it is not the same as $\UOracle$.

\subsection{\texorpdfstring{$\mathsf{PromiseBQP} = \mathsf{PromiseQMA}$}{PromiseBQP = PromiseQMA} relative to \texorpdfstring{$(\UOracle,\COracle)$}{(U, C)}}
Now we turn to showing that our oracle satisfies the desired properties.
We start with a lemma showing that the acceptance probability of a quantum query algorithm, viewed as a function of the unitary transformation used in the query, is Lipschitz.

\begin{lemma}\label{lem:query_lipschitz}
Let $\Alg^U$ be a quantum algorithm that makes $T$ queries to $U \in \Unitaries(D)$. Define $f: \Unitaries(D) \to \Reals$ by $f(U) \coloneqq \Pr\left[\Alg^U = 1 \right]$. Then $f$ is $T$-Lipschitz in the Frobenius norm.
\end{lemma}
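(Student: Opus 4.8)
The plan is to realize the entire query algorithm $\Alg^U$ as a single unitary on a fixed workspace Hilbert space and to control how sensitively this unitary, and hence $f(U)$, depends on $U$. Concretely, I would write
$$\Alg^U = W_T \, Q_T(U) \, W_{T-1} \cdots W_1 \, Q_1(U) \, W_0,$$
where $W_0,\ldots,W_T$ are fixed unitaries (the algorithm's own gates between queries, acting on the whole workspace) and each $Q_i(U)$ is the $i$-th query. By the query model, $Q_i(U)$ is one of $U$, $U^\dagger$, controlled-$U$, or controlled-$U^\dagger$, acting on the query register and as the identity on the rest of the workspace. The acceptance probability is then $f(U) = ||\Pi\, \Alg^U \ket{\psi_0}||^2$ for a fixed initial state $\ket{\psi_0}$ and a fixed projector $\Pi$ onto the accepting subspace.

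The first step is to pass from the difference of acceptance probabilities to the operator-norm distance between the full circuit unitaries. Writing $a = \Pi\,\Alg^U\ket{\psi_0}$ and $b = \Pi\,\Alg^V\ket{\psi_0}$, the elementary bound $\left|\,||a||^2 - ||b||^2\right| \le ||a - b||\,(||a|| + ||b||)$ together with $||a||, ||b|| \le 1$ gives $|f(U) - f(V)| \le 2\,||a - b|| \le 2\,||(\Alg^U - \Alg^V)\ket{\psi_0}|| \le 2\,||\Alg^U - \Alg^V||_{\mathrm{op}}$, since $\Pi$ is a contraction and $\ket{\psi_0}$ is a unit vector.

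The core of the argument is a standard hybrid (telescoping) estimate on $||\Alg^U - \Alg^V||_{\mathrm{op}}$. Define hybrids $\Alg_j$ in which the first $j$ queries use $V$ and the remaining $T-j$ queries use $U$, so that $\Alg_0 = \Alg^U$ and $\Alg_T = \Alg^V$. Then $\Alg^U - \Alg^V = \sum_{j=1}^T (\Alg_{j-1} - \Alg_j)$, and each consecutive difference factors as a product of unitaries times $Q_j(U) - Q_j(V)$ times another product of unitaries; since multiplying by unitaries preserves the operator norm, $||\Alg_{j-1} - \Alg_j||_{\mathrm{op}} = ||Q_j(U) - Q_j(V)||_{\mathrm{op}}$. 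It then suffices to check, for each of the four query types, that $||Q_j(U) - Q_j(V)||_{\mathrm{op}} \le ||U - V||_{\mathrm{op}}$: tensoring with identity leaves the operator norm unchanged; $||U^\dagger - V^\dagger||_{\mathrm{op}} = ||U - V||_{\mathrm{op}}$; and the controlled versions differ only by $\ket{1}\bra{1} \otimes (U - V)$, whose operator norm is again $||U - V||_{\mathrm{op}}$. The triangle inequality yields $||\Alg^U - \Alg^V||_{\mathrm{op}} \le T\,||U - V||_{\mathrm{op}}$, and combining with the previous paragraph and the inequality $||U - V||_{\mathrm{op}} \le ||U - V||_F$ gives $|f(U) - f(V)| \le 2T\,||U - V||_F$, as claimed.

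I expect no serious obstacle here: the only place requiring care is the per-query bookkeeping that each of the four admissible query forms contributes at most $||U - V||_{\mathrm{op}}$ (rather than the naive factor one might fear from the controlled or adjoint versions), but this is a short verification. The conceptual crux is simply the telescoping decomposition, which turns a $T$-query circuit into a sum of $T$ single-query perturbations.
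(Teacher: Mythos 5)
Your proof is correct, but it takes a genuinely different route from the paper's. The paper proves this lemma in three lines by passing through the diamond norm: it first establishes (as a separate lemma) that $\|U \cdot U^\dagger - V \cdot V^\dagger\|_\diamond \le 2\|U - V\|_F$, using the Aharonov--Kitaev--Nisan exact geometric characterization of the diamond distance between unitary channels, and then invokes subadditivity of the diamond norm under composition to pick up the factor of $T$, with the final bound on $|f(U)-f(V)|$ following from the definition of the diamond norm. You instead avoid the diamond norm entirely: your factor of $T$ comes from an explicit telescoping of the circuit in the operator norm, your factor of $2$ comes from the elementary estimate $\bigl|\,\|a\|^2 - \|b\|^2\bigr| \le \|a-b\|\,(\|a\|+\|b\|) \le 2\|a-b\|$, and you finish with $\|U-V\|_{\mathrm{op}} \le \|U-V\|_F$. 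Both arguments land on the same constant $2T$. Your version is more elementary and self-contained (the only mildly nontrivial input is the reverse triangle inequality and the singular-value comparison of norms), and your explicit case check of the four query types ($U$, $U^\dagger$, and their controlled versions) is a detail the paper glosses over. The paper's diamond-norm route has the advantage of handling algorithms described as general channels (with intermediate measurements) without appealing to deferred measurement, and the Frobenius-to-diamond lemma is reused elsewhere in the paper (e.g., to control the error from process tomography in the proof that $\mathsf{BQP}^{\UOracle,\mathcal{P}} = \mathsf{QMA}^{\UOracle,\mathcal{P}}$), so stating it once pays for itself. One small implicit assumption in your write-up is that $\Alg^U$ is a unitary circuit followed by a single projective measurement; this is without loss of generality by the deferred measurement principle, but it is worth saying.
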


\begin{proof}
Suppose that $||U - V||_F \le d$. Then $||I \oplus U - I \oplus V||_F \le d$, and also $||I \oplus U^\dagger - I \oplus V^\dagger||_F \le d$, recalling that $I \oplus U$ is controlled-$U$ and $I \oplus V$ is controlled-$V$.
By \Cref{lem:frobenius_diamond}, this implies that the distance between controlled-$U$ and controlled-$V$ in the diamond norm is at most $2d$ (and likewise for controlled-$U^\dagger$ and controlled-$V^\dagger$). The sub-additivity of the diamond norm under composition implies that as superoperators, $||\Alg^U - \Alg^V||_\diamond \le 2Td$. By \Cref{fact:diamond_bounds_trace}, we conclude that $|f(U) - f(V)| \le Td$.
\end{proof}

The next lemma extends \Cref{lem:query_lipschitz} to $\mathsf{QMA}$ verifiers: we should think of $\Ver$ as a $\mathsf{QMA}$ verifier that receives a witness $\ket{\psi}$, in which case this lemma states that the maximum acceptance probability of $\Ver$ is Lipschitz with respect to the queried unitary.

\begin{lemma}
\label{lem:qma_query_lipschitz}
Let $\Ver^U(\ket{\psi})$ be a quantum algorithm that makes $T$ queries to $U \in \Unitaries(D)$ and takes as input a quantum state $\ket{\psi}$ on some fixed (but arbitrary) number of qubits. Define $f: \Unitaries(D) \to \Reals$ by $f(U) \coloneqq \max_{\ket{\psi}}\Pr\left[\Ver^U(\ket{\psi}) = 1 \right]$. Then $f$ is $T$-Lipschitz in the Frobenius norm.
\end{lemma}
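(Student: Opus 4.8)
The plan is to reduce immediately to \Cref{lem:query_lipschitz} by fixing the witness, and then invoke the elementary fact that a pointwise supremum of a family of $L$-Lipschitz functions is again $L$-Lipschitz.

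First I would fix an arbitrary witness state $\ket{\psi}$ and define $g_{\psi}(U) = \Pr\left[\mathcal{V}^U(\ket{\psi}) = 1\right]$. For fixed $\ket{\psi}$, the verifier $\mathcal{V}^U(\ket{\psi})$ is nothing but a quantum query algorithm making $T$ queries to $U$, where we simply regard $\ket{\psi}$ as part of the hardcoded initial state. Hence \Cref{lem:query_lipschitz} applies verbatim and shows that each $g_{\psi}$ is $2T$-Lipschitz in the Frobenius norm.

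Next, observe that $f(U) = \max_{\ket{\psi}} g_{\psi}(U)$ is a pointwise maximum over the family $\{g_{\psi}\}$. To conclude, I would verify the standard claim that a supremum of $L$-Lipschitz functions is $L$-Lipschitz: given $U, V$ with (without loss of generality) $f(U) \ge f(V)$, pick $\ket{\psi^*}$ attaining the maximum defining $f(U)$; then
$$f(U) - f(V) \le g_{\psi^*}(U) - g_{\psi^*}(V) \le 2T \, ||U - V||_F,$$
where the first inequality uses $f(V) \ge g_{\psi^*}(V)$ and the second is the Lipschitz bound on $g_{\psi^*}$ from the previous step. Since the witness ranges over the (compact) unit sphere in a fixed-dimensional Hilbert space, the maximum is genuinely attained, so no limiting argument is needed; alternatively one works with a supremum and an $\eta$-approximate maximizer.

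I do not anticipate a real obstacle here: the argument is essentially a one-line reduction followed by the max-of-Lipschitz principle. The only points requiring minor care are confirming that hardcoding $\ket{\psi}$ yields a genuine $T$-query algorithm to which \Cref{lem:query_lipschitz} applies (it does, since the number of queries is unchanged), and that the Lipschitz constant $2T$ is uniform over all witnesses (it is, as \Cref{lem:query_lipschitz} gives the same bound independently of the fixed initial state).
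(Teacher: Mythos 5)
Your proposal is correct and follows essentially the same route as the paper: fix the witness, apply \Cref{lem:query_lipschitz} to get that each $g_\psi$ is $2T$-Lipschitz, and then use the fact that a pointwise maximum of $2T$-Lipschitz functions is $2T$-Lipschitz (the paper writes this out symmetrically with maximizers for both $U$ and $V$, but it is the same argument). Your added remark that the maximum is attained by compactness matches the paper's appeal to the extreme value theorem.
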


\begin{proof}
Note that $f$ is well-defined because of the extreme value theorem. Define $f_\psi: \Unitaries(D) \to \Reals$ by:
$$f_\psi(U) \coloneqq \Pr\left[\Ver^U(\ket{\psi}) = 1 \right],$$
so that $f(U) = \max_{\ket{\psi}} f_\psi(U)$.
\Cref{lem:query_lipschitz} implies that $f_\psi$ is $T$-Lipschitz for every $\ket{\psi}$. Let $U, V \in \Unitaries(D)$, and suppose that $\ket{\psi}$ and $\ket{\varphi}$ are such that $f(U) = f_\psi(U)$ and $f(V) = f_\varphi(V)$. Then:
\begin{align*}
|f(U) - f(V)| &= |f_\psi(U) - f_\varphi(V)|\\
&= \max\{f_\psi(U) - f_\varphi(V), f_\varphi(V) - f_\psi(U)\}\\
&\le \max\{f_\psi(U) - f_\psi(V), f_\varphi(V) - f_\varphi(U)\}\\
&\le T||U - V||_F,
\end{align*}
where the third line uses the fact that $f_\psi(V) \le f_\varphi(V)$ and $f_\varphi(U) \le f_\psi(U)$, and the last line uses the fact that $f_\psi$ and $f_\varphi$ are $T$-Lipschitz.
\end{proof}

We are ready to prove the first main result of this section, that $\mathsf{PromiseBQP}^{\UOracle,\COracle} = \mathsf{PromiseQMA}^{\UOracle,\COracle}$.

\begin{theorem}
\label{thm:bqp^u=qma^u}
With probability $1$ over $\UOracle \sim \mathcal{D}$, $\mathsf{PromiseBQP}^{\UOracle, \COracle} = \mathsf{PromiseQMA}^{\UOracle, \COracle}$.
\end{theorem}

\begin{proof}
Let $\Pi \in \mathsf{PromiseQMA}^{\UOracle, \COracle}$, which means there exists a polynomial-time verifier $\Ver^{\UOracle, \COracle}(x, \ket{\psi})$ with completeness $\frac{2}{3}$ and soundness $\frac{1}{3}$ according to \Cref{def:QMA}. Without loss of generality, we can amplify the completeness and soundness probabilities of $\Ver$ to $\frac{11}{12}$ and $\frac{1}{12}$, respectively. Let $p(n)$ be a polynomial upper bound on the running time of $\Ver$ on inputs $x$ of length $n$.

We now describe a $\mathsf{PromiseBQP}^{\UOracle, \COracle}$ algorithm $\Alg^{\UOracle, \COracle}(x)$ such that, with probability $1$ over $\UOracle$, $\Alg$ computes $\Pi$ on all but finitely many inputs $x \in \Dom(\Pi)$. The steps of $\Alg$ are:

\begin{enumerate}[(1)]
    \item Let $d \coloneqq \lfloor\log_2 \left(3456|x| p(|x|)^2 + 2\right)\rfloor$. For each $n \in [d]$, $\Alg$ performs process tomography on each $\UOracle_n$, producing estimates $\widetilde{\UOracle}_n$ such that $||\widetilde{\UOracle}_n \cdot \widetilde{\UOracle}_n^\dagger - \UOracle_n \cdot \UOracle_n^\dagger||_\diamond \le \frac{1}{6p(|x|)}$ for every $n$,with probability at least $\frac{2}{3}$ over the randomness of $\Alg$.\footnote{Specifically, one can use the algorithm of \cite{HKOT23-diamond} to estimate each $\UOracle_n$ to $2^{-\Omega(n)}$ error in diamond norm distance. The estimated unitary transformation $\widetilde{\UOracle}_n$ can then be compiled to a circuit using $2^{O(n)}$ $1$- and $2$-qubit gates \cite{VMS04-gates}. Since $n \le d = O(\log |x|)$, this can be done in polynomial time.
    
    Note also that we are using shorthand here: we should really perform process tomography on controlled-$\UOracle_n$, so that $||I \oplus \widetilde{\UOracle}_n \cdot I \oplus \widetilde{\UOracle}_n^\dagger - I \oplus \UOracle_n \cdot I \oplus \UOracle_n^\dagger||_\diamond \le \frac{1}{6p(|x|)}$. We use this same shorthand further below.} We denote the collection of estimates by $\widetilde{\UOracle} \coloneqq \{\widetilde{\UOracle}_n\}_{n \in [d]}$
    \item Next, $\Alg$ constructs a description $x$ of a quantum oracle circuit $\VerW^{\overline{\UOracle}, \COracle}(x, \widetilde{\UOracle}; \ket{\psi})$. This $\VerW$ has $x$ and the unitaries in $\widetilde{\UOracle}$ hard-coded into its description, takes an auxiliary input $\ket{\psi}$,\footnote{This distinction is why the last argument $\ket{\psi}$ is separated with a semicolon.} and queries oracles $\overline{\UOracle}$ and $\COracle$.
    On input $\ket{\psi}$, $\VerW^{\overline{\UOracle}, \COracle}(x, \widetilde{\UOracle}; \ket{\psi})$ replicates the behavior of $\Ver^{\UOracle, \COracle}(x, \ket{\psi})$, except that for each $n \in [d]$, queries to $\UOracle_n$ are replaced by $\widetilde{\UOracle}_n$, and for each $n \in [p(|x|)] \setminus [d]$, queries to $\UOracle_n$ are replaced by queries to $\overline{\UOracle}_n$.
    \item Finally, $\Alg$ queries $\COracle(x)$ and outputs the result.
\end{enumerate}

We now show that for any $x \in \Dom(\Pi)$, with high probability over $\UOracle$, $\Alg$ correctly decides $\Pi$ on $x$, which is to say that $\Pr\left[\Alg^{\UOracle,\COracle}(x) = \Pi(x)\right] \ge \frac{2}{3}$.

For a fixed $x$, given sequences of unitaries $\widetilde{\UOracle} = \{\widetilde{\UOracle}_n\}_{n \in [d]}$ and $\overline{\UOracle} = \{\overline{\UOracle}_n\}_{n \in [p(|x|)] \setminus [d]}$,
define
\[
f(\widetilde{\UOracle}, \overline{\UOracle}) \coloneqq \max_{\ket{\psi}} \Pr\left[\VerW^{\overline{\UOracle}, \COracle}(x, \widetilde{\UOracle}; \ket{\psi}) = 1 \right].
\]
Note that, in this notation, $\Alg$ outputs $1$ if and only if
\begin{equation}
\label{eq:avg_w_half}
\E_{\overline{\UOracle} \sim \mathcal{D}}\left[f(\widetilde{\UOracle}, \overline{\UOracle})\right] > \frac{1}{2}.
\end{equation}
By contrast, the $\mathsf{QMA}$ acceptance probability of $\mathcal{V}$ itself may be written consistently with this notation as:
\begin{equation}
\label{eq:acc_v}
f(\UOracle, \UOracle) = \max_{\ket{\psi}}\Pr\left[\Ver^{\UOracle, \COracle}(x, \ket{\psi}) = 1\right].
\end{equation}
In effect, our goal is to show that \Cref{eq:avg_w_half} gives a good estimator for \Cref{eq:acc_v}. We will do so in two steps: we first show that replacing $\UOracle$ in $f$'s second argument with an average over $\overline{\UOracle}$ approximately preserves the $\mathsf{QMA}$ acceptance probability, and then we argue similarly when replacing $\UOracle$ by the estimate $\widetilde{\UOracle}$ in $f$'s first argument.

By \Cref{lem:qma_query_lipschitz}, $f$ is $p(|x|)$-Lipschitz with respect to the second argument $\overline{\UOracle}$, viewed as a direct sum of matrices $\overline{\UOracle} \equiv \bigoplus_{n=d+1}^{p(|x|)} \overline{\UOracle}_n$.\footnote{This is because each query to a single $\overline{\UOracle}_n$ may be simulated via one query to the entire direct sum.} Hence, from \Cref{thm:haar_concentration} with $N = 3456|x|p(|x|)^2 + 2$, $L = p(|x|)$, and $t = \frac{1}{12}$, we have that:
\begin{align}
    \Pr_{\UOracle\sim \mathcal{D}}\left[\left|f(\UOracle, \UOracle) - \E_{\overline{\UOracle} \sim \mathcal{D}}[f(\UOracle, \overline{\UOracle})]\right| \ge \frac{1}{12}\right]
    &\le 2\exp\left(-\frac{(N-2)t^2}{24L^2}\right)\nonumber\\
    &= 2\exp\left(-\frac{3456|x|p(|x|)^2 \cdot \frac{1}{144}}{24p(|x|)^2}\right)\nonumber\\
    &= 2e^{-|x|}.\label{eq:haar_conc_in_oracle}
\end{align}
The factor of $2$ appears because \Cref{thm:haar_concentration} applies to one-sided error, but the absolute value forces us to consider two-sided error.

Because $\VerW$ calls $\widetilde{\UOracle}$ at most $p(|x|)$ times, and because diamond distance between unitary channels is preserved under taking inverses, 
\Cref{fact:diamond_bounds_trace} implies that for any $\ket{\psi}$,
\[
\abs{\Pr\left[\VerW^{\overline{\UOracle}, \COracle}(x, \widetilde{\UOracle}; \ket{\psi}) = 1 \right] - \Pr\left[\VerW^{\overline{\UOracle}, \COracle}(x, \UOracle; \ket{\psi}) = 1 \right]} \le \frac{p(|x|)}{2} ||\widetilde{\UOracle}_n \cdot \widetilde{\UOracle}_n^\dagger - \UOracle_n \cdot \UOracle_n^\dagger||_\diamond.
\]
Hence, we also have
\begin{align*}
\abs{f(\widetilde{\UOracle}, \overline{\UOracle}) - f(\UOracle, \overline{\UOracle})}
&= 
\abs{\max_{\ket{\psi}}\Pr\left[\VerW^{\overline{\UOracle}, \COracle}(x, \widetilde{\UOracle}; \ket{\psi}) = 1 \right] - \max_{\ket{\psi}}\Pr\left[\VerW^{\overline{\UOracle}, \COracle}(x, \UOracle; \ket{\psi}) = 1 \right]}\\
&\le \frac{p(|x|)}{2} ||\widetilde{\UOracle}_n \cdot \widetilde{\UOracle}_n^\dagger - \UOracle_n \cdot \UOracle_n^\dagger||_\diamond,
\end{align*}
and therefore, by Jensen's inequality,
\[
\abs{\E_{\overline{U} \sim \mathcal{D}}[f(\widetilde{\UOracle}, \overline{\UOracle})] - \E_{\overline{U} \sim \mathcal{D}}[f(\UOracle, \overline{\UOracle})]} \le \frac{p(|x|)}{2} ||\widetilde{\UOracle}_n \cdot \widetilde{\UOracle}_n^\dagger - \UOracle_n \cdot \UOracle_n^\dagger||_\diamond.
\]
Because the estimates $\widetilde{\UOracle_n}$ satisfy $||\widetilde{\UOracle}_n \cdot \widetilde{\UOracle}_n^\dagger - \UOracle_n \cdot \UOracle_n^\dagger||_\diamond \le \frac{1}{6p(|x|)}$ with probability at least $\frac{2}{3}$ over the randomness of $\Alg$, we see:
\[
\Pr_{\Alg}\left[\left|\E_{\overline{\UOracle} \sim \mathcal{D}}\left[f(\UOracle, \overline{\UOracle})\right] - \E_{\overline{\UOracle} \sim \mathcal{D}}\left[f(\widetilde{\UOracle}, \overline{\UOracle})\right] \right| \ge \frac{1}{12} \right] \le \frac{1}{3}.
\]
Combining with \Cref{eq:haar_conc_in_oracle}, and recalling the acceptance criterion of $\Alg$ from \Cref{eq:avg_w_half}, we conclude that except with probability at most $2e^{-|x|}$ over $\UOracle$,
\[
\Pr_{\Alg}\left[\left|f(\UOracle, \UOracle) - \E_{\overline{\UOracle} \sim \mathcal{D}}\left[f(\widetilde{\UOracle}, \overline{\UOracle})\right] \right| \ge \frac{1}{6} \right] \le \frac{1}{3}.
\]
So, except with probability $2e^{-|x|}$ over $\UOracle$:
\begin{align*}
\Pi(x) = 1 \quad&\implies\quad f(\UOracle, \UOracle) \ge \frac{11}{12} \quad\implies\quad \Pr\left[\Alg^{\UOracle,\COracle}(x) = 1\right] \ge \frac{2}{3}\\
\Pi(x) = 0 \quad&\implies\quad f(\UOracle, \UOracle) \le \frac{1}{12} \quad\implies\quad \Pr\left[\Alg^{\UOracle,\COracle}(x) = 0\right] \ge \frac{2}{3}.
\end{align*}
This is to say that $\Alg$ correctly decides $\Pi(x)$, expect with probability at most $2e^{-|x|}$ over $\UOracle$. By the Borel--Cantelli lemma (\Cref{lem:borel-cantelli}), because $\sum_{i=1}^{\infty} 2^i \cdot 2e^{-i} = \frac{4}{e-2} < \infty$, $\Alg$ correctly decides $\Pi(x)$ for all but finitely many $x \in \Dom(\Pi)$, with probability $1$ over $\UOracle$. Hence, with probability $1$ over $\UOracle$, $\Alg$ can be modified into an algorithm $\Alg'$ that agrees with $\Pi$ on every $x \in \Dom(\Pi)$, by simply hard-coding those $x$ on which $\Alg$ and $\Pi$ disagree.

Because there are only countably many $\mathsf{PromiseQMA}^{\UOracle,\COracle}$ machines, we can union bound over all $\Pi \in \mathsf{PromiseQMA}^{\UOracle,\COracle}$ to conclude that $\mathsf{PromiseQMA}^{\UOracle,\COracle} \subseteq \mathsf{PromiseBQP}^{\UOracle,\COracle}$ with probability $1$.
\end{proof}

\subsection{Pseudorandom unitaries relative to \texorpdfstring{$(\UOracle, \COracle)$}{(U, C)}}
We proceed to the second part of the oracle construction, showing that PRUs exist relative to $(\UOracle, \COracle)$. In fact, the security proof will not depend on $\COracle$: the same PRU construction is secure for \textit{any} language $\COracle$ that is independent of the randomly sampled $\UOracle$.
The PRU ensemble for a given length is supplied directly by $\UOracle$. That is, for a given length $n$, the PRU ensemble is uniform over the $2^n$ different $n$-qubit unitaries in $\UOracle_n$.

We begin with a lemma establishing that the average advantage of a polynomial-time adversary is small against our PRU construction. Here, we should think of $\{U_k\}_{k \in [N]}$ as the PRU ensemble.

\begin{lemma}
\label{lem:small_average_advantage}
Consider a quantum algorithm $\Alg^{O,U}$ that makes $T$ queries to $U = (U_1,\ldots,U_N) \in \Unitaries(D)^{N}$ and $O \in \Unitaries(D)$. For fixed $U$, define:
$$\adv(\Alg^U) \coloneqq \Pr_{k \sim [N]}\left[\Alg^{U_k,U}= 1\right] -  \Pr_{O \sim \mu_D} \left[\Alg^{O,U} = 1\right].$$
Then there exists a universal constant $c > 0$ such that:
$$\E_{U \sim \mu_D^N}\left[\adv(\Alg^U)\right] \le \frac{cT^2}{N}.$$
\end{lemma}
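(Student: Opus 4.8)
The plan is to bound the signed quantity $\E_{U \leftarrow \mu_D^N}[\adv(\Alg^U)]$ by a coupling under which the ``planted'' experiment (challenge oracle equal to $U_k$ for a hidden uniform $k$) and the ``fresh'' experiment (challenge oracle an independent Haar unitary) differ in a single, hidden ensemble slot, and then to run a BBBV hybrid argument over that slot. Concretely, draw $W, U_1, \dots, U_N$ i.i.d.\ from $\mu_D$ together with a uniform $k \in [N]$. In experiment $A$, set the challenge oracle to $W$ and the ensemble to $(U_1, \dots, U_{k-1}, W, U_{k+1}, \dots, U_N)$; in experiment $B$, set the challenge to $W$ and the ensemble to $(U_1, \dots, U_N)$. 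Marginally, $A$ reproduces the first probability defining $\adv$ and $B$ the second, so $\E_U[\adv(\Alg^U)] = \E_{k,W,U}[\langle \psi^A | \Pi | \psi^A\rangle - \langle \psi^B| \Pi | \psi^B\rangle]$, where $\Pi$ is the accept projector and $\psi^A, \psi^B$ are the final states. The crucial point is that $A$ and $B$ are identical except at ensemble slot $k$ (which holds $W$ versus the independent $U_k$; challenge queries agree in both), so the BBBV theorem bounds $\|\delta_k\| := \|\psi^A - \psi^B\| \le 2\sum_{s} \sqrt{m_{k,s}}$, where $m_{k,s}$ is the magnitude with which $\Alg$ queries slot $k$ at step $s$ of experiment $B$. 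Since $\sum_{k} m_{k,s} \le 1$ at each of the $T$ steps, we get $\E_k \sum_s m_{k,s} \le T/N$ for every fixing of the Haar unitaries.

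The naive estimate $\E_k\|\delta_k\| \le 2\,\E_k\sum_s\sqrt{m_{k,s}} \le 2\sqrt{T \cdot \E_k\sum_s m_{k,s}} \le 2T/\sqrt N$ only yields advantage $O(T/\sqrt N)$, which is too weak; this is the distinguishing analogue of the gap between Grover's $\Theta(\sqrt N)$ search time and a $\Theta(1/\sqrt N)$ per-query detection amplitude. To reach $O(T^2/N)$ I would expand the per-slot advantage as $\langle\psi^A|\Pi|\psi^A\rangle - \langle\psi^B|\Pi|\psi^B\rangle = 2\Re\langle\psi^B|\Pi|\delta_k\rangle + \langle\delta_k|\Pi|\delta_k\rangle$ and treat the two pieces separately. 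The quadratic piece is immediately fine: $\E_k\langle\delta_k|\Pi|\delta_k\rangle \le \E_k\|\delta_k\|^2 \le 4\,\E_k\big(T\sum_s m_{k,s}\big) \le 4T^2/N$ by Cauchy--Schwarz, for every Haar fixing.

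The main obstacle is the linear cross term $2\Re\,\E_k\langle\psi^B|\Pi|\delta_k\rangle$, which a term-by-term bound would again only control at the $T/\sqrt N$ level; the gain must come from cancellation in $\E_k[\delta_k]$. The plan here is a first-order (telescoping) expansion: writing the slot-$k$ oracle difference $\mathcal{O}^{A} - \mathcal{O}^{B}$, which is supported entirely on the index-register value $k$, one gets $\delta_k = \sum_s G_{s\to T}\,\eta_{k,s} + (\text{higher order})$, where $\eta_{k,s} = (\mathcal{O}^A - \mathcal{O}^B)|\psi_s^B\rangle$ satisfies $\|\eta_{k,s}\| \le 2\sqrt{m_{k,s}}$ and lives on index $=k$. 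The key observation is that for fixed $s$ the $\eta_{k,s}$ occupy disjoint index-register subspaces, hence are orthogonal across $k$, so $\|\sum_k \eta_{k,s}\| = \sqrt{\sum_k \|\eta_{k,s}\|^2} \le 2$. Using the common experiment-$B$ backward evolution $G_{s\to T}$ in the leading term then gives $\|\E_k[\text{first-order part}]\| \le \frac{1}{N}\sum_s \|\sum_k \eta_{k,s}\| \le 2T/N$, while the higher-order terms (each carrying a second factor $\sqrt{m}$) are bounded by $\frac{1}{N}\sum_k (\sum_s \sqrt{m_{k,s}})^2 \le T^2/N$, again by Cauchy--Schwarz. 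Combining, the linear term is $O(T^2/N)$, so $\E_U[\adv(\Alg^U)] = O(T^2/N)$. I expect the delicate point to be making this ``first-order plus orthogonality'' step fully rigorous---in particular, replacing the slot-$k$-dependent forward evolution by the common experiment-$B$ evolution and absorbing the resulting corrections into the higher-order $O(T^2/N)$ bucket---since this is precisely where the distinguishing bound improves from $T/\sqrt N$ to the tight $T^2/N$.
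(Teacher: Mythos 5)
Your coupling is precisely the paper's reduction: the paper defines an algorithm $\mathcal{B}(x)$ querying a string $x\in\{0,1\}^N$ that draws $(V_0,V_1,\ldots,V_N)\leftarrow\mu_D^{N+1}$, answers challenge queries with $V_0$, and answers ensemble-slot-$j$ queries with $V_0$ if $x_j=1$ and with $V_j$ if $x_j=0$, so that
$$\E_{U\leftarrow\mu_D^N}\left[\adv(\Alg^U)\right]=\Pr_{k\leftarrow[N]}\left[\mathcal{B}(e_k)=1\right]-\Pr\left[\mathcal{B}(0^N)=1\right],$$
which is exactly your experiment $A$ versus experiment $B$. At that point the paper stops and cites the BBBV theorem for the $cT^2/N$ bound on the average advantage of a $T$-query algorithm at distinguishing a random $e_k$ from $0^N$. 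Everything you write after the coupling is an attempt to reprove that black-boxed statement, and that is where the gap is. The exact telescoping identity is $\delta_k=\sum_s G_{s\to T}(\mathcal{O}^A_k-\mathcal{O}^B)\,|\psi^{A,k}_s\rangle$, with the $k$-independent experiment-$B$ evolution acting on the $k$-\emph{dependent} perturbed states (or the mirror identity with the roles swapped); you cannot make both factors $k$-independent. Your ``higher-order'' remainder is therefore $\sum_s G_{s\to T}(\mathcal{O}^A_k-\mathcal{O}^B)\bigl(|\psi^{A,k}_s\rangle-|\psi^B_s\rangle\bigr)$, and the hybrid lemma only bounds each summand by $2\|P_k(|\psi^{A,k}_s\rangle-|\psi^B_s\rangle)\|\le 4\sum_{s'<s}\sqrt{m_{k,s'}}$ --- a \emph{single} factor of $\sqrt{m}$, not the product of two that your claimed remainder bound $\frac{1}{N}\sum_k\bigl(\sum_s\sqrt{m_{k,s}}\bigr)^2$ presupposes. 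Summing and averaging honestly gives $O(T^2/\sqrt{N})$, which is exactly the scale you are trying to beat, and iterating the split leads to a regress.

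The good news is that your key idea (orthogonality of the slot-$k$ perturbations across $k$) closes the gap if you apply it to the exact identity rather than to an approximate first-order expansion. For fixed $s$, the vectors $(\mathcal{O}^A_k-\mathcal{O}^B)P_k|\psi^{A,k}_s\rangle$ are still mutually orthogonal across $k$, since each is supported on index-register value $k$ even though the states differ with $k$. Hence
$$\Bigl\|\sum_k(\mathcal{O}^A_k-\mathcal{O}^B)P_k|\psi^{A,k}_s\rangle\Bigr\|\le 2\Bigl(\sum_k\|P_k\psi^{A,k}_s\|^2\Bigr)^{1/2},$$
and $\|P_k\psi^{A,k}_s\|\le\|P_k\psi^B_s\|+\|\psi^{A,k}_s-\psi^B_s\|\le\sqrt{m_{k,s}}+2\sum_{s'<s}\sqrt{m_{k,s'}}$ gives $\sum_k\|P_k\psi^{A,k}_s\|^2=O(T^2)$ by Cauchy--Schwarz and $\sum_k m_{k,s'}\le 1$. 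Therefore $\|\E_k[\delta_k]\|\le\frac{1}{N}\sum_{s=1}^T O(T)=O(T^2/N)$, which bounds the cross term; combined with your (correct) bound on the quadratic term this yields the lemma. Alternatively, the $O(T^2/N)$ distinguishing bound follows in a few lines from the polynomial method (symmetrize the degree-$2T$ acceptance polynomial and apply a Markov-type inequality), or one can simply cite the tight form of the search lower bound, as the paper does.
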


\begin{proof}
Our strategy is to reduce to the quantum query lower bound for unstructured search. Intuitively, if $\Alg$ could identify whether $O \in \{U_1,\ldots,U_N\}$ or not, then $\Alg$ could be modified into a quantum algorithm $\mathcal{B}$ that finds a single marked item from a list of size $N$. Then the BBBV theorem \cite{BBBV97-search} forces $T$ to be $\Omega\left(\sqrt{N}\right)$.

More formally, we construct an algorithm $\mathcal{B}^x$ that queries a string $x \in \{0,1\}^{N}$ as follows. $\mathcal{B}$ draws a unitary $V = (V_0, V_1, \ldots, V_N) \in \Unitaries(D)^{N+1}$ from $\mu_D^{N+1}$. Then, $\mathcal{B}$ runs $\Alg$, replacing queries to $O$ by queries to $V_0$, and replacing queries to $U_k \in U$ by $V_0$ if $x_k = 1$ and by $V_k$ if $x_k = 0$.

Let $e_k \in \{0,1\}^N$ be the string with $1$ in the $k$th position and $0$s everywhere else. We have that:
\begin{align*}
\E_{U \sim \mu_D^N}\left[\adv(\Alg^U)\right] &= \E_{U \sim \mu_D^N}\left[\Pr_{k \sim [N]}\left[\Alg^{U_k,U} = 1\right]\right] - \E_{U \sim \mu_D^N}\left[\Pr_{O \sim \mu_D} \left[\Alg^{O,U} = 1\right]\right]\\
&= \Pr_{k \sim [N]}\left[\mathcal{B}^{e_k} = 1\right] - \Pr\left[\mathcal{B}^{0^N} = 1\right]\\
&\le \frac{cT^2}{N}.
\end{align*}
Above, the first line applies linearity of expectation, the second line holds by definition of $\mathcal{B}$, and the third line holds for some universal $c$ by the BBBV theorem \cite{BBBV97-search}.
\end{proof}

The next lemma uses \Cref{lem:small_average_advantage} to show that the advantage of $\Alg$ is small with extremely high probability, which follows from the strong concentration properties of the Haar measure (\Cref{thm:haar_concentration}). This strengthening of \Cref{lem:small_average_advantage} will be needed to argue that the advantage remains small even after union bounding over all choices of the classical advice.

\begin{lemma}
\label{lem:small_advantage_whp}
Consider a quantum algorithm $\Alg^{O,U}$ that makes $T$ queries to $U = (U_1,\ldots,U_N) \in \Unitaries(D)^{N}$ and $O \in \Unitaries(D)$. Let $\adv(\Alg^U)$ be defined as in \Cref{lem:small_average_advantage}. Then there exists a universal constant $c > 0$ such that for any $p \ge cT^2/N$,
$$\Pr_{U \sim \mu_D^N}\left[\left|\adv(\Alg^U)\right| \ge p\right] \le 2\exp\left(-\frac{(D-2)\left(p - cT^2/N\right)^2}{96T^2}\right).$$
\end{lemma}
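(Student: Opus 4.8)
The plan is to recognize this as a concentration-of-measure statement: I will view $\adv(\Alg^U)$ as a real-valued function of $U \in \Unitaries(D)^N$, prove that it is Lipschitz in the Frobenius norm, and then feed it into the Haar concentration inequality \Cref{thm:haar_concentration}. Since that inequality controls deviations from the mean, I will re-center using \Cref{lem:small_average_advantage}, which already bounds $\E_{U}[\adv(\Alg^U)] \le cT^2/N$; the quantity $p - cT^2/N$ in the statement is exactly the value $t = p - cT^2/N$ plugged into \Cref{thm:haar_concentration}. Observe that $\Unitaries(D)^N$ is precisely a block-diagonal unitary group with all $N$ blocks of size $D$, so the relevant minimum block dimension in \Cref{thm:haar_concentration} is $D$, which explains the factor $(D-2)$.

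The technical heart is the Lipschitz bound, and I expect this to be the main obstacle because $\adv(\Alg^U)$ depends on $U$ through two different channels. I would write $\adv(\Alg^U) = f(U) - g(U)$ with $f(U) = \frac{1}{N}\sum_{k=1}^N \Pr[\Alg^{U_k,U}=1]$ and $g(U) = \Pr_{O \leftarrow \mu_D}[\Alg^{O,U}=1]$, treating the tuple $U=(U_1,\ldots,U_N)$ as the block-diagonal unitary $\bigoplus_k U_k$. For $g$, each fixed $O$ makes $g$ depend on $U$ only through the at most $T$ direct queries, so \Cref{lem:query_lipschitz} gives $2T$-Lipschitzness, which is preserved under the average over $O$. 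For $f$, the subtlety is that each summand $h_k(U) = \Pr[\Alg^{U_k,U}=1]$ depends on $U$ both through the direct queries to the tuple and through the oracle register $O = U_k$. I would handle this with a two-step hybrid: first change the tuple from $U$ to $V$ while holding $O=U_k$ fixed (costing $2T_U||U-V||_F$, where $T_U$ is the number of tuple queries), then change $O$ from $U_k$ to $V_k$ while holding the tuple at $V$ (costing $2T_O||U_k-V_k||_F \le 2T_O||U-V||_F$), using $||U_k - V_k||_F \le ||U - V||_F$. Since $T_U + T_O = T$, each $h_k$ is $2T$-Lipschitz, hence so is $f$, and the triangle inequality yields that $\adv$ is $4T$-Lipschitz. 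This Lipschitz constant is what produces the denominator $384T^2 = 24\cdot(4T)^2$.

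Finally, I would assemble the tail bound. To control $|\adv|$ rather than just $\adv$, I first upgrade \Cref{lem:small_average_advantage} to a two-sided mean bound: applying that lemma to the algorithm that negates $\Alg$'s output (whose advantage is $-\adv(\Alg^U)$, and which makes the same number of queries) gives $\E_U[\adv(\Alg^U)] \ge -cT^2/N$, so $|\E_U[\adv(\Alg^U)]| \le cT^2/N$. For the upper tail, the event $\adv \ge p$ implies $\adv \ge \E[\adv] + (p - cT^2/N)$, so \Cref{thm:haar_concentration} with $L=4T$ bounds $\Pr[\adv \ge p]$ by $\exp(-(D-2)(p-cT^2/N)^2/(384T^2))$; applying the same inequality to the $4T$-Lipschitz function $-\adv$ bounds the lower tail $\Pr[\adv \le -p]$ identically, and a union bound over the two tails produces the factor of $2$. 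The bound is informative precisely in the regime $p > cT^2/N$, which is all that is needed in the downstream PRU security argument.
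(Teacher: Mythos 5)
Your proposal is correct and follows essentially the same route as the paper's proof: establish that $\adv(\Alg^U)$ is $4T$-Lipschitz in the Frobenius norm as the difference of two $2T$-Lipschitz acceptance probabilities (via \Cref{lem:query_lipschitz}), re-center around the mean bound of \Cref{lem:small_average_advantage}, apply \Cref{thm:haar_concentration} with $L = 4T$ and minimum block size $D$, and union-bound the two tails. You simply spell out two steps the paper leaves implicit---the hybrid argument handling the dual dependence of $\Pr[\Alg^{U_k,U}=1]$ on $U$, and the two-sided mean bound obtained by negating the algorithm's output---both of which are handled correctly.
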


\begin{proof}
By \Cref{lem:query_lipschitz}, $\adv(\Alg^U)$ is $2T$-Lipschitz as a function of $U$, because $\adv(\Alg^U)$ can be expressed as the difference between the acceptance probabilities of two algorithms that each make $T$ queries to $U$. Combining \Cref{lem:small_average_advantage} and \Cref{thm:haar_concentration}, we obtain:
$$\Pr_{U \sim \mu_D^N}\left[\adv(\Alg^U) \ge p\right] \le \exp\left(-\frac{(D-2)\left(p - cT^2/N\right)^2}{96T^2}\right).$$
Similar reasoning yields the same upper bound on $\Pr_{U \sim \mu_D^N}\left[\adv(\Alg^U) \le -p\right]$, so we get the final bound (with an additional factor of 2) by a union bound.
\end{proof}

Completing the security proof of the pseudorandom unitary construction amounts to combining \Cref{lem:small_advantage_whp} with the aforementioned union bound over all possible polynomial-time adversaries.

\begin{theorem}
\label{thm:pru^u}
Let $\COracle$ be any fixed language. Then with probability $1$ over $\UOracle \sim \mathcal{D}$, there exists a family of PRUs relative to $(\UOracle, \COracle)$ with $n(\kappa) = \kappa$.
\end{theorem}

\begin{proof}
Fix an input length $n \in \Naturals$. We take the key set $\Key = \{0,1\}^n \equiv [2^n]$ and take the PRU family to be $\{U_k\}_{k \in \{0,1\}^n}$, where $\UOracle_n = (U_1,U_2,\ldots,U_{2^n}) \in \Unitaries(2^n)^{2^n}$. In words, the family consists of the $2^n$ different Haar-random $n$-qubit unitaries supplied by $\UOracle_n$.
Note that this family of unitaries has an efficient implementation relative to the oracle. This is because we can simulate an application of $U_k$ to some $n$-qubit $\ket{\psi}$ using one query to $\UOracle_n$, via $\UOracle_n\ket{k}\ket{\psi} = (I \otimes U_k)\ket{k}\ket{\psi}$. So, it remains only to show the computational indistinguishability criterion of \Cref{def:pru}.

Without loss of generality, assume the adversary is a uniform polynomial-time quantum algorithm $\Alg^{O,\UOracle,\COracle}(1^n,x)$, where $x \in \{0,1\}^{\poly(n)}$ is the advice and $O \in \Unitaries(2^n)$ is the oracle that the adversary seeks to distinguish as pseudorandom or Haar-random.

By \Cref{lem:small_advantage_whp} with $N = D = 2^n$ and $T = \poly(n)$, for any fixed $x \in \{0,1\}^{\poly(n)}$, $\Alg^{O,\UOracle,\COracle}(1^n,x)$ achieves non-negligible advantage with extremely low probability over $\UOracle$.
(The additional oracle $\COracle$ has no effect on the query complexity result because it fixed and independent of $\UOracle$.)
This is to say that for any $p = \frac{1}{\poly(n)}$:
$$\Pr_{\UOracle_n \sim \mu_{2^n}^{2^n}}\left[\left| \Pr_{k \sim [2^n]}\left[\Alg^{U_k,\UOracle,\COracle}(1^n,x) = 1 \right] - \Pr_{O \sim \mu_{2^n}} \left[\Alg^{O,\UOracle,\COracle}(1^n,x) = 1 \right] \right| \ge p\right] \le \exp\left(-\frac{2^n}{\poly(n)}\right).$$

By a union bound over all $x \in \{0,1\}^{\poly(n)}$, $\Alg^{O,\UOracle,\COracle}(1^n,x)$ achieves advantage larger than $p$ for \textit{any} $x \in \{0,1\}^{\poly(n)}$ with probability at most $2^{\poly(n)} \cdot \exp\left(-\frac{2^n}{\poly(n)}\right) \le \negl(n)$. Hence, by the Borel--Cantelli lemma (\Cref{lem:borel-cantelli}), $\Alg$ achieves negligible advantage for all but finitely many input lengths $n \in \Naturals$ with probability $1$ over $\UOracle$, as $\sum_{n=1}^\infty \negl(n) < \infty$. This is to say that $\{U_k\}_{k \in \{0,1\}^n}$ defines a PRU ensemble.
\end{proof}

We expect that using the techniques of Chung, Guo, Liu, and Qian \cite{CGLQ20-tradeoffs}, one can extend \Cref{thm:pru^u} to a security proof against adversaries with quantum advice. Some version of \cite[Theorem 5.14]{CGLQ20-tradeoffs} likely suffices. The idea is that breaking the PRU should remain hard even if $\Alg$ could query an explicit description of $O$ and explicit descriptions of $U_k$ for $k \in [2^n]$, which is a strictly more powerful model. But then this corresponds to the security game defined in \cite[Definition 5.12]{CGLQ20-tradeoffs}, except that the range of the random oracle is $\Unitaries(D)$ rather than the finite set $[M]$. Perhaps a sufficiently fine discretization of $\Unitaries(D)$ would suffice to apply the \cite{CGLQ20-tradeoffs} framework. We believe this is doable but tedious, and leave it to future work.

\subsection{Alternative oracles}
\label{sec:alternative_oracles}
An earlier version of this paper \cite{conf-version} claimed to show the same results, \Cref{thm:bqp^u=qma^u,thm:pru^u}, but relative to a different oracle. Instead of $\Oracle = (\UOracle, \COracle)$ for $\UOracle \sim \mathcal{D}$, the earlier oracle used a different classical language in place of $\COracle$; the oracle chosen was $\Oracle = (\UOracle, \mathcal{P})$ where $\mathcal{P}$ is an arbitrary $\mathsf{PSPACE}$-complete language. As noted above, PRUs still exist relative to this oracle because \Cref{thm:pru^u} works regardless of the choice of classical language. However, the claim that $\mathsf{PromiseBQP}^{\UOracle, \mathcal{P}} = \mathsf{PromiseQMA}^{\UOracle, \mathcal{P}}$ contained a bug in the proof. This incorrect step amounted to conflating the two quantities
\begin{equation}
\label{eq:wrong_quantity}
\E_{\UOracle \sim \mathcal{D}}\left[\max_{\ket{\psi}}\Pr\left[\Ver^{\UOracle, \mathcal{P}}(\ket{\psi}) = 1\right] \right]
\end{equation}
and
\[
\max_{\ket{\psi}}\E_{\UOracle \sim \mathcal{D}}\left[\Pr\left[\Ver^{\UOracle, \mathcal{P}}(\ket{\psi}) = 1\right] \right],
\]
which are not the same. Nevertheless, we conjecture that the previous proof can be restored:

\begin{conjecture}
    \label{conj:fix_bug}
    With probability $1$ over $\UOracle \sim \mathcal{D}$, $\mathsf{PromiseBQP}^{\UOracle, \mathcal{P}} = \mathsf{PromiseQMA}^{\UOracle, \mathcal{P}}$, where $\mathsf{P}$ is an arbitrary $\mathsf{PSPACE}$-complete language.
\end{conjecture}

A careful inspection of \cite{conf-version} reveals that \Cref{conj:fix_bug} could be proved by showing that the quantity in \Cref{eq:wrong_quantity} is approximable in $\mathsf{PSPACE}$. We see a possible approach to establishing this, which relies on the following well-known analogue of the polynomial method \cite{BBC+01-polynomials} for $\mathsf{QMA}$ verifiers:

\begin{proposition}[{Proved in \cite[Lemma 4]{Aar09-qma1}}]
\label{prop:qma_polynomial_method}
    Let $\mathcal{V}$ be a $\mathsf{QMA}$-verifier that receives an $m$-qubit witness and makes $T$ queries to a unitary $\Oracle$. Then there exists a matrix-valued polynomial $M(\Oracle)$ of degree $2T$ in $\Oracle$ and $\Oracle^\dagger$ such that for any $m$-qubit $\ket{\psi}$,
    \[
    \braket{\psi|M(\Oracle)|\psi} = \Pr\left[\Ver^{\Oracle}(\ket{\psi}) = 1\right].
    \]
\end{proposition}

\begin{proof}
    Without loss of generality, suppose that on input $\ket{\psi}$, $\Ver$ appends $n$ ancilla qubits initialized to $\ket{0}$, applies a unitary $U(\Oracle)$ that may involve queries to $\Oracle$, and then measures the first qubit. Then the matrix $M(\Oracle)$ is:
    \[
    (I \otimes \bra{0^n})U(\Oracle)^\dagger(\ket{1}\bra{1} \otimes I)U(\Oracle)(I \otimes \ket{0^n}),
    \]
    which clearly satisfies
    \[
    \braket{\psi|M(\Oracle)|\psi} = \Pr\left[\Ver^{\Oracle}(\ket{\psi}) = 1\right].
    \]
    Additionally, the entries of $M(\Oracle)$ are polynomials of degree $2T$ in $\Oracle$ and $\Oracle^\dagger$ because $U(\Oracle)$ is a polynomial of degree $T$ \cite{BBC+01-polynomials}, and $U(\Oracle)$ appears twice in the expression.
\end{proof}

A key observation is the following: if $p(\Oracle) \coloneqq \max_{\ket{\psi}} \Pr\left[\Ver^{\Oracle}(\ket{\psi}) = 1\right]$, then for any $k \in \Naturals$
\[
p(\Oracle)^k \le \tr(M(\Oracle)^k) \le 2^m p(\Oracle)^k.
\]
Equivalently,
\[
\frac{\tr(M(\Oracle)^k)^{1/k}}{2^{m/k}} \le p(\Oracle) \le \tr(M(\Oracle)^k)^{1/k}.
\]
So, by choosing $k$ to be sufficiently large (say, $100m$), $\tr(M(\Oracle)^k)^{1/k}$ provides an arbitrarily precise estimate of $p(\Oracle)$.
Thus, to approximate \Cref{eq:wrong_quantity}, it suffices to approximate
\[
\E_{\UOracle \sim \mathcal{D}}
\left[\tr(M(\UOracle, \mathcal{P})^k)^{1/k}\right],
\]
which we believe is achievable in $\mathsf{PSPACE}$. We first observe that, as a consequence of the concentration of the Haar measure (\Cref{thm:haar_concentration}), the above quantity should satisfy
\[
\E_{\UOracle \sim \mathcal{D}}
\left[\tr(M(\UOracle, \mathcal{P})^k)^{1/k}\right]
\approx
\E_{\UOracle \sim \mathcal{D}}
\left[\tr(M(\UOracle, \mathcal{P})^k)\right]^{1/k},
\]
as long as $\Ver$ only makes queries to $\UOracle$ in sufficiently large dimension.

Notice that $\tr(M(\UOracle, \mathcal{P})^k)$ is a polynomial of degree $2Tk$ in the entries of $\UOracle$ and $\mathcal{P}$ (and their inverses). 
Moreover, the proof of \Cref{prop:qma_polynomial_method} reveals that the coefficients of this polynomial are computable in $\mathsf{PSPACE}$, by standard path integral techniques \cite[Section 4.5.5]{NC10-book}. The main question, then, is whether one can average this polynomial over the Haar measure in $\mathsf{PSPACE}$. With some additional work, we believe this could be established via either
\begin{enumerate}[(1)]
    \item Showing that the Weingarten calculus \cite{CMN22-weingarten}, used for evaluating Haar integrals, is computable in $\mathsf{PSPACE}$, or
    \item Proving that a sufficiently strong notion of unitary $t$-design (\Cref{def:unitary_t_design}) yields an approximation of $\E_{\UOracle \sim \mathcal{D}}
\left[\tr(M(\UOracle, \mathcal{P})^k)\right]$. The challenge here is that this expression involves applications of both $\UOracle$ and $\UOracle^\dagger$, even when $\Ver$ only makes queries to $\UOracle$ in the forward direction. So, \Cref{lem:controlled_adaptive_unitary_t_design_multiplicative} does not seem applicable.
\end{enumerate}

\section*{Acknowledgments}
I thank many people for their assistance in completing this work, including: Scott Aaronson for suggestions on the writing, Amit Behera for pointing out several flaws in an earlier version of this work, Adam Bouland for numerous insightful discussions, Nick Hunter-Jones for conversations about $t$-designs, Qipeng Liu for clarifying some questions about \cite{CGLQ20-tradeoffs}, Ewin Tang for drawing my attention to \cite{HKOT23-diamond}, Shogo Yamada for identifying a bug in \Cref{thm:bqp^u=qma^u}, and Chinmay Nirkhe for discussions on rectifying said bug.

\phantomsection\addcontentsline{toc}{section}{References}
\bibliographystyle{alphaurl}
\bibliography{MainBibliography}

\newcommand{\etalchar}[1]{$^{#1}$}
\begin{thebibliography}{CGG{\etalchar{+}}23}

\bibitem[Aar05]{Aar05-postbqp}
Scott Aaronson.
\newblock Quantum computing, postselection, and probabilistic polynomial-time.
\newblock {\em Proceedings of the Royal Society A}, 461:3473--3482, 2005.
\newblock \href {https://doi.org/10.1098/rspa.2005.1546} {\path{doi:10.1098/rspa.2005.1546}}.

\bibitem[Aar09]{Aar09-qma1}
Scott Aaronson.
\newblock On perfect completeness for {QMA}.
\newblock {\em Quantum Inf. Comput.}, 9(1):81--89, jan 2009.
\newblock \href {https://doi.org/10.26421/QIC9.1-2-5} {\path{doi:10.26421/QIC9.1-2-5}}.

\bibitem[Aar16]{Aar16-barbados}
Scott Aaronson.
\newblock The complexity of quantum states and transformations: From quantum money to black holes, 2016.
\newblock \href {https://arxiv.org/abs/1607.05256} {\path{arXiv:1607.05256}}.

\bibitem[Aar18]{Aar18-shadow-tomography}
Scott Aaronson.
\newblock Shadow tomography of quantum states.
\newblock In {\em Proceedings of the 50th Annual ACM SIGACT Symposium on Theory of Computing}, STOC 2018, pages 325--338, New York, NY, USA, 2018. Association for Computing Machinery.
\newblock \href {https://doi.org/10.1145/3188745.3188802} {\path{doi:10.1145/3188745.3188802}}.

\bibitem[AE07]{AE07-designs}
Andris Ambainis and Joseph Emerson.
\newblock Quantum t-designs: T-wise independence in the quantum world.
\newblock In {\em Proceedings of the Twenty-Second Annual IEEE Conference on Computational Complexity}, CCC '07, pages 129--140, USA, 2007. IEEE Computer Society.
\newblock \href {https://doi.org/10.1109/CCC.2007.26} {\path{doi:10.1109/CCC.2007.26}}.

\bibitem[AG04]{AG04-stabilizer}
Scott Aaronson and Daniel Gottesman.
\newblock Improved simulation of stabilizer circuits.
\newblock {\em Phys. Rev. A}, 70:052328, Nov 2004.
\newblock \href {https://doi.org/10.1103/PhysRevA.70.052328} {\path{doi:10.1103/PhysRevA.70.052328}}.

\bibitem[AK07]{AK07-qcma-qma}
Scott Aaronson and Greg Kuperberg.
\newblock Quantum versus classical proofs and advice.
\newblock {\em Theory of Computing}, 3(7):129--157, 2007.
\newblock \href {https://doi.org/10.4086/toc.2007.v003a007} {\path{doi:10.4086/toc.2007.v003a007}}.

\bibitem[AKN98]{AKN98-mixed}
Dorit Aharonov, Alexei Kitaev, and Noam Nisan.
\newblock Quantum circuits with mixed states.
\newblock In {\em Proceedings of the Thirtieth Annual ACM Symposium on Theory of Computing}, STOC '98, pages 20--30, New York, NY, USA, 1998. Association for Computing Machinery.
\newblock \href {https://doi.org/10.1145/276698.276708} {\path{doi:10.1145/276698.276708}}.

\bibitem[All17]{All17-meta}
Eric Allender.
\newblock {\em The Complexity of Complexity}, pages 79--94.
\newblock Springer International Publishing, Cham, 2017.
\newblock \href {https://doi.org/10.1007/978-3-319-50062-1\_6} {\path{doi:10.1007/978-3-319-50062-1\_6}}.

\bibitem[All20]{All20-mcsp}
Eric Allender.
\newblock The new complexity landscape around circuit minimization.
\newblock In Alberto Leporati, Carlos Mart{\'\i}n-Vide, Dana Shapira, and Claudio Zandron, editors, {\em Language and Automata Theory and Applications}, pages 3--16, Cham, 2020. Springer International Publishing.
\newblock \href {https://doi.org/10.1007/978-3-030-40608-0\_1} {\path{doi:10.1007/978-3-030-40608-0\_1}}.

\bibitem[AMR20]{AMR20-random}
Gorjan Alagic, Christian Majenz, and Alexander Russell.
\newblock Efficient simulation of random states and random unitaries.
\newblock In Anne Canteaut and Yuval Ishai, editors, {\em Advances in Cryptology -- EUROCRYPT 2020}, pages 759--787, Cham, 2020. Springer International Publishing.
\newblock \href {https://doi.org/10.1007/978-3-030-45727-3\_26} {\path{doi:10.1007/978-3-030-45727-3\_26}}.

\bibitem[AQY22]{AQY22-prs}
Prabhanjan Ananth, Luowen Qian, and Henry Yuen.
\newblock Cryptography from pseudorandom quantum states.
\newblock In Yevgeniy Dodis and Thomas Shrimpton, editors, {\em Advances in Cryptology -- CRYPTO 2022}, volume 13507 of {\em Lecture Notes in Computer Science}, pages 208--236. Springer International Publishing, 2022.
\newblock \href {https://doi.org/10.1007/978-3-031-15802-5\_8} {\path{doi:10.1007/978-3-031-15802-5\_8}}.

\bibitem[BBBV97]{BBBV97-search}
Charles~H. Bennett, Ethan Bernstein, Gilles Brassard, and Umesh Vazirani.
\newblock Strengths and weaknesses of quantum computing.
\newblock {\em SIAM Journal on Computing}, 26(5):1510--1523, 1997.
\newblock \href {https://doi.org/10.1137/S0097539796300933} {\path{doi:10.1137/S0097539796300933}}.

\bibitem[BBC{\etalchar{+}}01]{BBC+01-polynomials}
Robert Beals, Harry Buhrman, Richard Cleve, Michele Mosca, and Ronald de~Wolf.
\newblock Quantum lower bounds by polynomials.
\newblock {\em J. ACM}, 48(4):778--797, Jul 2001.
\newblock \href {https://doi.org/10.1145/502090.502097} {\path{doi:10.1145/502090.502097}}.

\bibitem[BCQ23]{BCQ23-efi}
Zvika Brakerski, Ran Canetti, and Luowen Qian.
\newblock {On the Computational Hardness Needed for Quantum Cryptography}.
\newblock In Yael Tauman~Kalai, editor, {\em 14th Innovations in Theoretical Computer Science Conference (ITCS 2023)}, volume 251 of {\em Leibniz International Proceedings in Informatics (LIPIcs)}, pages 24:1--24:21, Dagstuhl, Germany, 2023. Schloss Dagstuhl -- Leibniz-Zentrum f{\"u}r Informatik.
\newblock \href {https://doi.org/10.4230/LIPIcs.ITCS.2023.24} {\path{doi:10.4230/LIPIcs.ITCS.2023.24}}.

\bibitem[BEM24]{BEM24-shrink}
Samuel Bouaziz-Ermann and Garazi Muguruza.
\newblock Quantum pseudorandomness cannot be shrunk in a black-box way.
\newblock Cryptology ePrint Archive, Paper 2024/291, 2024.
\newblock URL: \url{https://eprint.iacr.org/2024/291}.

\bibitem[Ber13]{Ber13-bayes-decision}
James~O. Berger.
\newblock {\em Statistical Decision Theory and Bayesian Analysis}.
\newblock Springer Series in Statistics. Springer New York, 2013.
\newblock \href {https://doi.org/10.1007/978-1-4757-4286-2} {\path{doi:10.1007/978-1-4757-4286-2}}.

\bibitem[BFV20]{BFV20-prs-wormhole}
Adam Bouland, Bill Fefferman, and Umesh Vazirani.
\newblock {Computational Pseudorandomness, the Wormhole Growth Paradox, and Constraints on the AdS/CFT Duality (Abstract)}.
\newblock In Thomas Vidick, editor, {\em 11th Innovations in Theoretical Computer Science Conference (ITCS 2020)}, volume 151 of {\em Leibniz International Proceedings in Informatics (LIPIcs)}, pages 63:1--63:2, Dagstuhl, Germany, 2020. Schloss Dagstuhl--Leibniz-Zentrum fuer Informatik.
\newblock \href {https://doi.org/10.4230/LIPIcs.ITCS.2020.63} {\path{doi:10.4230/LIPIcs.ITCS.2020.63}}.

\bibitem[BHH16a]{BHH16-designs-short}
Fernando G. S.~L. Brand{\~a}o, Aram~W. Harrow, and Micha\l{} Horodecki.
\newblock Efficient quantum pseudorandomness.
\newblock {\em Phys. Rev. Lett.}, 116:170502, Apr 2016.
\newblock \href {https://doi.org/10.1103/PhysRevLett.116.170502} {\path{doi:10.1103/PhysRevLett.116.170502}}.

\bibitem[BHH16b]{BHH16-designs}
Fernando G. S.~L. Brand{\~a}o, Aram~W. Harrow, and Micha{\l} Horodecki.
\newblock Local random quantum circuits are approximate polynomial-designs.
\newblock {\em Communications in Mathematical Physics}, 346(2):397--434, 2016.
\newblock \href {https://doi.org/10.1007/s00220-016-2706-8} {\path{doi:10.1007/s00220-016-2706-8}}.

\bibitem[Bor09]{Bor09-prob}
{\'E}mile Borel.
\newblock Les probabilit{\'e}s d{\'e}nombrables et leurs applications arithm{\'e}tiques.
\newblock {\em Rendiconti del Circolo Matematico di Palermo (1884-1940)}, 27(1):247--271, 1909.
\newblock \href {https://doi.org/10.1007/BF03019651} {\path{doi:10.1007/BF03019651}}.

\bibitem[BR20]{BR20-apxcount}
Aleksandrs Belovs and Ansis Rosmanis.
\newblock Tight quantum lower bound for approximate counting with quantum states, 2020.
\newblock \href {https://arxiv.org/abs/2002.06879} {\path{arXiv:2002.06879}}.

\bibitem[BS19]{BS19-binary}
Zvika Brakerski and Omri Shmueli.
\newblock ({P}seudo) random quantum states with binary phase.
\newblock In Dennis Hofheinz and Alon Rosen, editors, {\em Theory of Cryptography}, pages 229--250, Cham, 2019. Springer International Publishing.
\newblock \href {https://doi.org/10.1007/978-3-030-36030-6\_10} {\path{doi:10.1007/978-3-030-36030-6\_10}}.

\bibitem[BS20]{BS20-scalable}
Zvika Brakerski and Omri Shmueli.
\newblock Scalable pseudorandom quantum states.
\newblock In Daniele Micciancio and Thomas Ristenpart, editors, {\em Advances in Cryptology -- CRYPTO 2020}, pages 417--440, Cham, 2020. Springer International Publishing.
\newblock \href {https://doi.org/10.1007/978-3-030-56880-1\_15} {\path{doi:10.1007/978-3-030-56880-1\_15}}.

\bibitem[Can17]{Can17-prob}
F.P. Cantelli.
\newblock Sulla probabilit{\`a} come limite della frequenza.
\newblock {\em Atti Reale Academia Nazionale dei Lincei}, 26(1):39--45, 1917.

\bibitem[CGG{\etalchar{+}}23]{CGGH+23-one-way}
Bruno Cavalar, Eli Goldin, Matthew Gray, Peter Hall, Yanyi Liu, and Angelos Pelecanos.
\newblock On the computational hardness of quantum one-wayness, 2023.
\newblock \href {https://arxiv.org/abs/2312.08363} {\path{arXiv:2312.08363}}.

\bibitem[CGLQ20]{CGLQ20-tradeoffs}
Kai-Min Chung, Siyao Guo, Qipeng Liu, and Luowen Qian.
\newblock Tight quantum time-space tradeoffs for function inversion.
\newblock In {\em 2020 IEEE 61st Annual Symposium on Foundations of Computer Science (FOCS)}, pages 673--684, 2020.
\newblock \href {https://doi.org/10.1109/FOCS46700.2020.00068} {\path{doi:10.1109/FOCS46700.2020.00068}}.

\bibitem[CMN22]{CMN22-weingarten}
Beno{\^\i}t Collins, Sho Matsumoto, and Jonathan Novak.
\newblock The {W}eingarten calculus.
\newblock {\em Notices of the American Mathematical Society}, 69(5):734--745, 2022.
\newblock \href {https://doi.org/10.1090/noti2474} {\path{doi:10.1090/noti2474}}.

\bibitem[GST24]{GST20-if}
Zuzana Gavorov{\'a}, Matan Seidel, and Yonathan Touati.
\newblock Topological obstructions to quantum computation with unitary oracles.
\newblock {\em Phys. Rev. A}, 109:032625, Mar 2024.
\newblock \href {https://doi.org/10.1103/PhysRevA.109.032625} {\path{doi:10.1103/PhysRevA.109.032625}}.

\bibitem[HKOT23]{HKOT23-diamond}
Jeongwan Haah, Robin Kothari, Ryan O'Donnell, and Ewin Tang.
\newblock Query-optimal estimation of unitary channels in diamond distance.
\newblock In {\em 2023 IEEE 64th Annual Symposium on Foundations of Computer Science (FOCS)}, pages 363--390, 2023.
\newblock \href {https://doi.org/10.1109/FOCS57990.2023.00028} {\path{doi:10.1109/FOCS57990.2023.00028}}.

\bibitem[HKP20]{HKP20-classical-shadows}
Hsin-Yuan Huang, Richard Kueng, and John Preskill.
\newblock Predicting many properties of a quantum system from very few measurements.
\newblock {\em Nature Physics}, 2020.
\newblock \href {https://doi.org/10.1038/s41567-020-0932-7} {\path{doi:10.1038/s41567-020-0932-7}}.

\bibitem[HMY23]{HMY23-qpke}
Minki Hhan, Tomoyuki Morimae, and Takashi Yamakawa.
\newblock From the hardness of detecting superpositions to cryptography: Quantum public key encryption and commitments.
\newblock In Carmit Hazay and Martijn Stam, editors, {\em Advances in Cryptology -- EUROCRYPT 2023}, pages 639--667, Cham, 2023. Springer Nature Switzerland.
\newblock \href {https://doi.org/10.1007/978-3-031-30545-0\_22} {\path{doi:10.1007/978-3-031-30545-0\_22}}.

\bibitem[IR89]{IR89-permutations}
Russell Impagliazzo and Steven Rudich.
\newblock Limits on the provable consequences of one-way permutations.
\newblock In {\em Proceedings of the Twenty-First Annual ACM Symposium on Theory of Computing}, STOC '89, pages 44--61, New York, NY, USA, 1989. Association for Computing Machinery.
\newblock \href {https://doi.org/10.1145/73007.73012} {\path{doi:10.1145/73007.73012}}.

\bibitem[JLS18]{JLS18-prs}
Zhengfeng Ji, Yi-Kai Liu, and Fang Song.
\newblock Pseudorandom quantum states.
\newblock In Hovav Shacham and Alexandra Boldyreva, editors, {\em Advances in Cryptology -- CRYPTO 2018}, pages 126--152, Cham, 2018. Springer International Publishing.
\newblock \href {https://doi.org/10.1007/978-3-319-96878-0\_5} {\path{doi:10.1007/978-3-319-96878-0\_5}}.

\bibitem[KQST23]{KQST23-prs}
William Kretschmer, Luowen Qian, Makrand Sinha, and Avishay Tal.
\newblock Quantum cryptography in {A}lgorithmica.
\newblock In {\em Proceedings of the 55th Annual ACM Symposium on Theory of Computing}, STOC 2023, pages 1589--1602, New York, NY, USA, 2023. Association for Computing Machinery.
\newblock \href {https://doi.org/10.1145/3564246.3585225} {\path{doi:10.1145/3564246.3585225}}.

\bibitem[Kre21a]{conf-version}
William Kretschmer.
\newblock {Quantum Pseudorandomness and Classical Complexity}.
\newblock In Min-Hsiu Hsieh, editor, {\em 16th Conference on the Theory of Quantum Computation, Communication and Cryptography (TQC 2021)}, volume 197 of {\em Leibniz International Proceedings in Informatics (LIPIcs)}, pages 2:1--2:20, Dagstuhl, Germany, 2021. Schloss Dagstuhl -- Leibniz-Zentrum f{\"u}r Informatik.
\newblock \href {https://doi.org/10.4230/LIPIcs.TQC.2021.2} {\path{doi:10.4230/LIPIcs.TQC.2021.2}}.

\bibitem[Kre21b]{Kre21-tsirelson}
William Kretschmer.
\newblock The {Q}uantum {S}upremacy {T}sirelson {I}nequality.
\newblock {\em {Quantum}}, 5:560, October 2021.
\newblock \href {https://doi.org/10.22331/q-2021-10-07-560} {\path{doi:10.22331/q-2021-10-07-560}}.

\bibitem[KS14]{KS14-random-clifford}
Robert Koenig and John~A. Smolin.
\newblock How to efficiently select an arbitrary {C}lifford group element.
\newblock {\em Journal of Mathematical Physics}, 55(12):122202, 2014.
\newblock \href {https://doi.org/10.1063/1.4903507} {\path{doi:10.1063/1.4903507}}.

\bibitem[Kup15]{Kup15-jones}
Greg Kuperberg.
\newblock How hard is it to approximate the {J}ones polynomial?
\newblock {\em Theory of Computing}, 11(6):183--219, 2015.
\newblock \href {https://doi.org/10.4086/toc.2015.v011a006} {\path{doi:10.4086/toc.2015.v011a006}}.

\bibitem[LMW24]{LMW24-synthesis}
Alex Lombardi, Fermi Ma, and John Wright.
\newblock A one-query lower bound for unitary synthesis and breaking quantum cryptography.
\newblock In {\em Proceedings of the 56th Annual ACM Symposium on Theory of Computing}, STOC 2024, pages 979--990, New York, NY, USA, 2024. Association for Computing Machinery.
\newblock \href {https://doi.org/10.1145/3618260.3649650} {\path{doi:10.1145/3618260.3649650}}.

\bibitem[LO22]{LO22-survey}
Zhenjian Lu and Igor~C. Oliveira.
\newblock Theory and applications of probabilistic {K}olmogorov complexity.
\newblock {\em Bull. {EATCS}}, 137, 2022.
\newblock URL: \url{http://bulletin.eatcs.org/index.php/beatcs/article/view/700}.

\bibitem[Mec19]{Mec19-random-matrix}
Elizabeth~S. Meckes.
\newblock {\em The Random Matrix Theory of the Classical Compact Groups}.
\newblock Cambridge Tracts in Mathematics. Cambridge University Press, 2019.
\newblock \href {https://doi.org/10.1017/9781108303453} {\path{doi:10.1017/9781108303453}}.

\bibitem[MY22a]{MY22-owq}
Tomoyuki Morimae and Takashi Yamakawa.
\newblock One-wayness in quantum cryptography, 2022.
\newblock \href {https://arxiv.org/abs/2210.03394} {\path{arXiv:2210.03394}}.

\bibitem[MY22b]{MY22-prs}
Tomoyuki Morimae and Takashi Yamakawa.
\newblock Quantum commitments and signatures without one-way functions.
\newblock In Yevgeniy Dodis and Thomas Shrimpton, editors, {\em Advances in Cryptology -- CRYPTO 2022}, volume 13507 of {\em Lecture Notes in Computer Science}, pages 269--295. Springer International Publishing, 2022.
\newblock \href {https://doi.org/10.1007/978-3-031-15802-5\_10} {\path{doi:10.1007/978-3-031-15802-5\_10}}.

\bibitem[NC10]{NC10-book}
Michael~A. Nielsen and Isaac~L. Chuang.
\newblock {\em Quantum Computation and Quantum Information: 10th Anniversary Edition}.
\newblock Cambridge University Press, 2010.
\newblock \href {https://doi.org/10.1017/CBO9780511976667} {\path{doi:10.1017/CBO9780511976667}}.

\bibitem[Ros21]{Ros21-unitary}
Gregory Rosenthal.
\newblock Query and depth upper bounds for quantum unitaries via {G}rover search, 2021.
\newblock \href {https://arxiv.org/abs/2111.07992} {\path{arXiv:2111.07992}}.

\bibitem[Sus16a]{Sus16-addendum}
Leonard Susskind.
\newblock Addendum to computational complexity and black hole horizons.
\newblock {\em Fortschritte der Physik}, 64(1):44--48, 2016.
\newblock \href {https://doi.org/10.1002/prop.201500093} {\path{doi:10.1002/prop.201500093}}.

\bibitem[Sus16b]{Sus16-horizons}
Leonard Susskind.
\newblock Computational complexity and black hole horizons.
\newblock {\em Fortschritte der Physik}, 64(1):24--43, 2016.
\newblock \href {https://doi.org/10.1002/prop.201500092} {\path{doi:10.1002/prop.201500092}}.

\bibitem[VMS04]{VMS04-gates}
Juha~J. Vartiainen, Mikko M\"ott\"onen, and Martti~M. Salomaa.
\newblock Efficient decomposition of quantum gates.
\newblock {\em Phys. Rev. Lett.}, 92:177902, Apr 2004.
\newblock \href {https://doi.org/10.1103/PhysRevLett.92.177902} {\path{doi:10.1103/PhysRevLett.92.177902}}.

\bibitem[WBV08]{WBV08-pseudorandom}
Yaakov~S. Weinstein, Winton~G. Brown, and Lorenza Viola.
\newblock Parameters of pseudorandom quantum circuits.
\newblock {\em Phys. Rev. A}, 78:052332, Nov 2008.
\newblock \href {https://doi.org/10.1103/PhysRevA.78.052332} {\path{doi:10.1103/PhysRevA.78.052332}}.

\end{thebibliography}

\appendix
\section{PRSs with Binary Phases}
\label{app:binary_phase_prs}
In this section, we sketch a proof that a PRS construction proposed by Ji, Liu, and Song \cite{JLS18-prs} and shown secure by Brakerski and Shmueli \cite{BS19-binary} can be broken efficiently with an $\mathsf{NP}$ oracle. The PRS family is based on pseudorandom functions (PRFs). Let $\{f_k\}_{k \in \Key}$ be a PRF family of functions $f_k: \{0,1\}^n \to \{0,1\}$ keyed by $\Key$. The corresponding PRS family is the set of states $\{\ket{\varphi_k}\}_{k \in \Key}$ given by:
$$\ket{\varphi_k} \coloneqq \frac{1}{2^{n/2}} \sum_{x \in \{0,1\}^n} (-1)^{f_k(x)} \ket{x}.$$
For simplicity, suppose that each $f_k$ is \textit{balanced}, meaning that $|f^{-1}_k(0)| = |f^{-1}_k(1)| = 2^{n-1}$. Consider the quantum circuit below:
$$\Qcircuit @C=1em @R=.7em {
\lstick{\ket{0}} & \qw & \gate{H} & \ctrl{2} & \qw & \targ & \qw & \qw\\
\lstick{\ket{\psi}} & {/} \qw & \qw & \qswap & \gate{H^{\otimes n}} & \ctrlo{-1} & \gate{H^{\otimes n}} & \qw\\
\lstick{\ket{0^{\otimes n}}} & {/} \qw & \gate{H^{\otimes n}} & \qswap \qw & \qw & \qw & \qw & \qw
}$$
Observe that if $\ket{\psi} = \ket{\varphi_k}$, then this circuit produces the state $\ket{0}\frac{\ket{\varphi_k}\ket{+}^{\otimes n} + \ket{+}^{\otimes n}\ket{\varphi_k}}{\sqrt{2}}$ from a single copy of $\ket{\varphi_k}$. Notice that if we measure the resulting state in the computational basis, then we observe $\ket{0}\ket{x}\ket{y}$ with nonzero probability for $x, y \in \{0,1\}^n$ if and only if $f_k(x) = f_k(y)$. This is because the amplitude on this basis state is given by:
$$\bra{x}\bra{y}\frac{\ket{\varphi_k}\ket{+}^{\otimes n} + \ket{+}^{\otimes n}\ket{\varphi_k}}{\sqrt{2}} = \frac{(-1)^{f_k(x)} + (-1)^{f_k(y)}}{2^n\sqrt{2}}.$$
Furthermore, this shows that we in fact sample a uniformly random pair $(x, y)$ such that $f_k(x) = f_k(y)$.

Suppose that given a state $\ket{\psi}$ which is either pseudorandom or Haar-random, we repeat this procedure $\poly(n)$ times to obtain a list of pairs $\{(x_i, y_i)\}$. It is an $\mathsf{NP}$ problem to decide whether there exists a $k$ such that $f_k(x_i) = f_k(y_i)$ for all $i$. If $\ket{\psi} = \ket{\varphi_k}$ for some $k$ then this $\mathsf{NP}$ language always returns true, while if $\ket{\psi}$ is Haar-random, this $\mathsf{NP}$ language returns true with negligible probability, so long as we take sufficiently many samples $(x_i, y_i)$.

In the case where $f_k$ is not perfectly balanced, we simply observe that the above procedure still works with good probability so long as $f_k$ is \textit{close} to a balanced function. But PRFs must be close to balanced functions, in the sense that for most $k \in \Key$, it must be possible to change a $\negl(n)$ fraction of the outputs of $f_k$ to turn it into a balanced function. Otherwise, the PRF family could be distinguished efficiently from random functions, which are $\negl(n)$-close to balanced with high probability.

\end{document}